\theoremstyle{definition}
\newtheorem{definition}{Definition}[]
\newtheorem*{assumption*}{Assumption}
\newtheorem*{condition*}{Condition}
\theoremstyle{plain}
\newtheorem{thm}[definition]{Theorem}
\newtheorem{lemma}[definition]{Lemma}
\theoremstyle{remark}
\newtheorem{remark}{Remark}
\newcommand{\R}{\mathbb{R}}
\newcommand{\bX}{\boldsymbol{X}}
\newcommand{\bx}{\boldsymbol{x}}
\newcommand{\bU}{\boldsymbol{U}}
\newcommand{\balpha}{\boldsymbol{\alpha}}
\newcommand{\bzeta}{\boldsymbol{\zeta}}
\newcommand{\bgamma}{\boldsymbol{\gamma}}
\newcommand{\bbeta}{\boldsymbol{\beta}}
\newcommand{\PP}{\mathbb{P}}
\newcommand{\E}{\mathbb{E}}
\newcommand{\wconv}{\Rightarrow}
\newcommand{\Var}{\operatorname{Var}}
\newcommand{\Cov}{\operatorname{Cov}}
\title{Simultaneous Nonparametric Confidence Bands for Load-Sharing Systems}
\author[a]{Stefan Bedbur}
\author[b]{Johann K\"ohne}
\author[c]{Fabian Mies\thanks{corresponding author}}
\affil[a]{Institute of Statistics, RWTH Aachen University}
\affil[b]{Institute of Mathematical Stochastics, University of G\"ottingen}
\affil[c]{Delft Institute of Applied Mathematics, Delft University of Technology}
\begin{document}
	
	\maketitle

    \begin{abstract} 
\noindent Load-sharing systems arise in many different reliability applications, for instance, when modeling tensile strength of fibrous composites in textile industry or lifetimes of  redundant technical systems in engineering. Sequential order statistics serve as a flexible model for the ordered component failure times of such systems and allow  the residual lifetime distribution of the components to change after each component failure. In a proportional hazard rate setting, the model consists of some baseline distribution function and several model parameters describing successive adjustments of the hazard rates of the operating components. 
This work provides nonparametric confidence bands for the baseline distribution function, where the model parameters may be known or unknown. In case of known model parameters, we show how to construct exact confidence bands based on Kolmogorov-Smirnov type statistics, which are distribution-free with respect to the baseline distribution. If the model parameters are unknown, finite sample inference turns out to be infeasible, and asymptotic confidence bands for the baseline distribution function are derived.
As a technical tool, we extend the existing asymptotic theory of semiparametric estimators based on the profile-likelihood approach. \\[1ex]

\noindent \textit{Keywords:} Generalized order statistics, proportional hazard rate, $k$-out-of-$n$ system, nonparametric inference, confidence band
\end{abstract}

\section{Introduction}
In many complex technical systems, load is shared among multiple components, and the failure of one part increases the stress on the remaining components. Examples include manufacturing systems \citep{arabzadeh_jamali_opportunistic_2022}, solar panels of satellites \citep{yang_reliability_2015}, power transmission lines \citep{jia_reliability_2023}, redundant micro-engine systems \citep{zhang_reliability_2020}, and composite materials \citep{SmiDan2014}.
To understand the reliability properties of these load-sharing systems, various statistical models have been proposed.
In this paper, we study sequential order statistics (SOSs), which have been introduced by \citet{Kam1995b,Kam1995a} as an extension of common order statistics and allow for a more flexible modelling of ordered component lifetimes in many reliability and engineering structures as, for instance, load-sharing systems, $k$-out-of-$n$ systems, or accelerated lifetime tests; see, e.g., \citet{CraKam2001b}, \citet{BalKamKat2012}, and \cite{BedKamKat2015}. Going beyond the possibilities of order statistics, SOSs enable for adjustments of the residual lifetime distribution of the  operating components upon each component failure. We illustrate the model by means of an example of a load-sharing system in textile industry; see \citet{SmiDan2014} and the references therein. Suppose that a steady tensile load is put on a bundle of $n$ identical textile fibers. Moreover, let $F_1,\dots,F_n$ denote cumulative distribution functions (cdfs) satisfying $F_1^{-1}(1-)\leq\dots\leq F^{-1}_n(1-)$ for technical reasons, and let $F_1$ be the initial lifetime distribution (time of tearing) of any of the $n$ fibers. After observing the first tear of some fiber at time $x_1$, say, the lifetime distribution of the intact $n-1$ fibers changes from $F_1$ to $(F_2(\cdot)-F_2(x_1))/(1-F_2(x_1))$ to model different stress conditions imposed on the remaining fibers. At time $x_2$, say, of the second fiber tear, the lifetime distribution of the still withstanding fibers changes to  $(F_3(\cdot)-F_3(x_2))/(1-F_3(x_2))$, and so on. Finally, the whole bundle is torn at time of the $n^{\text{th}}$ fiber tear.

While the general SOSs model is very flexible, restricted formulations allow for improved statistical inference.
In particular, a semiparametric SOSs model with proportional hazard rates is obtained by setting $F_j=1-(1-F)^{\alpha_j}$, $1\leq j\leq n$, for some absolutely continuous cdf $F$ with density function $f$ and positive numbers $\alpha_1,\dots,\alpha_n$. The hazard rate of $F_j$ is then given by $\lambda_{F_j}=\alpha_j\lambda_F$, where $\lambda_F$ denotes the hazard rate of $F$. In this context, $F$ is also referred to as baseline cdf, and $\alpha_1,\dots,\alpha_n$ are termed model parameters and describe the load-sharing characteristics of the system. For example, $\alpha_j< \alpha_{j+1}$ means that the failure of the $j^{\text{th}}$  component puts increased stress on the remaining components, thus leading to a higher failure rate.

In the distribution-theoretical sense, the proportional hazards SOSs model coincides with the model of generalized order statistics (GOSs) based on $F$, introduced by \citet{Kam1995b,Kam1995a}, in virtue of a bijective transformation of their  parameters. GOSs represent a unifying approach for various different models of ordered random variables allowing, among others, for a simultaneous probabilistic treatment. They are defined via the joint density  function of random variables $X^{(1)}\leq\dots\leq X^{(n)}$, involving a baseline cdf $F$ and positive parameters $\gamma_1,\dots,\gamma_n$. Respective choices of $\gamma_1,\dots,\gamma_n$ then result in densities corresponding to well-known models of ordered random variables, such as order statistics, record values, or progressively type-II censored order statistics. In particular, setting $\gamma_j=(n-j+1)\alpha_j$, $1\leq j\leq n$, gives the joint density function in the proportional hazards SOSs model as described above. Both models are therefore equal in distribution and only differ in terms of their parametrizations. For more details on GOSs, see \citet{Kam2016} and the references therein.

Distribution theory and statistical inference for the proportional hazards SOSs model have mostly been studied in the literature in fully parametric settings, where the baseline cdf $F$ is assumed to be known or to belong to some parametric class of cdfs; for an overview, see \cite{Cra2016} and, e.g., \citet{BedJohKam2019}, \citet{BedMie2022}, \citet{PesPolCriCra2023}, and \citet{PesCraCriPol2024} for some more recent works. The literature on nonparametric inference for SOSs is rather limited. Respective accounts are provided by \cite{kvam2005estimating}, \cite{beutner2008nonparametric, Beu2010b,beutner2010nonparametric}, and \cite{MieBed2019} with a focus on estimation and hypotheses testing in case that $F$ is unknown and $\alpha_1,\dots,\alpha_n$ are known or unknown.

In this paper, we derive semiparametric confidence bands for a nonparametrically specified baseline cdf $F$, where the model parameters $\alpha_1,\dots\alpha_n$ (or, equivalently, $\gamma_1,\dots,\gamma_n$ when using the GOSs parametrization) are assumed to be either known or unknown. In contrast to pointwise confidence intervals for $F(x)$, such a confidence band is a random set in the two-dimensional plane that contains the entire graph of $F$ with a specified probability. The proposed bands are based on the Nelson-Aalen type estimator $\widehat{F}$ for $F$ and the corresponding profile-likelihood estimator $\widehat{\bgamma}$ for the vector $\bgamma$ of load-sharing parameters, as introduced by \cite{kvam2005estimating}.
Building on the material in \cite{MieBed2019}, we first construct exact confidence bands for $F$ for finite (small) samples by making use of certain Kolmogorov-Smirnov type statistics with distributions independent of $F$.
In case of unknown model parameters, however, the construction turns out to be intractable, and we therefore pursue an asymptotic approach to address the problem.
While the limit theory for $\widehat{F}$ and $\widehat{\bgamma}$ has been put forward by \cite{kvam2005estimating}, we found the need to extend it in two relevant ways: (i) We develop the asymptotic theory for $\widehat{F}$ and $\widehat{\bgamma}$ based on all observations, whereas former results require some artificial type-I censoring of the data (using only information up to a fixed time point) for purely technical reasons. 
(ii) We fully specify the asymptotic covariance structure of $\sqrt{M}(\widehat{F}-F)$, which is known to converge weakly to a Gaussian process as the sample size $M$ tends to infinity.
The refined asymptotic theory is then utilized to construct asymptotic confidence bands for $F$ on the whole real line. 

As the proportional hazards SOSs model is used in various reliability applications, the proposed bands provide different insights, depending on the context of the experiment. In a progressively type-II censored lifetime experiment, where a fixed number of intact components is removed from the experiment after each component failure, the model parameters are completely determined by the censoring scheme and thus known, such that the bands of Sections \ref{ss:EBknown} and \ref{ss:ABknown} are applicable to assess the underlying lifetime distribution of the components. On the other hand, when describing the component lifetimes in some step-stress experiment in accelerated life testing, the model parameters correspond to different stress levels and are usually unknown. In this case, the bands of Section \ref{ss:ABunknown} apply and yield information about the component  lifetime distribution under normal operating conditions. In the following, we shall stay within  the interpretation in terms of load-sharing systems.

The remainder of this paper is structured as follows.
Upon a concise introduction to GOSs in Section \ref{s:gos}, we show how to obtain exact confidence bands for $F$ via test inversion in Section \ref{s:eb}. Section \ref{ss:ABknown} then describes the construction of asymptotic confidence bands for $F$ if  $\bgamma$ is known, and Section \ref{ss:ABunknown} presents respective bands in the practically most relevant situation of unknown load-sharing parameters. Here, we also provide the extended asymptotic theory for $\widehat{F}$ and $\widehat{\bgamma}$ (Theorem \ref{thm:jointclt}).
The methodology is briefly illustrated in Section \ref{s:data} by means of two data examples. 
All proofs and several auxiliary results are postponed to the Appendix for readability.

%-------------------------------------

\section{Generalized Order Statistics}\label{s:gos}

Let $F$ denote some absolutely continuous cdf with corresponding density function $f$, and let $\gamma_1,\dots,\gamma_r$ denote positive numbers. Then the random variables $X^{(1)}\leq\dots\leq X^{(r)}$ are called GOSs based on $F$ and $\bgamma=(\gamma_1,\dots,\gamma_r)$ if their joint density function is given by
\begin{equation}\label{eq:densGOS}
f^{\bX}(\bx)\,=\,\left(\prod_{j=1}^r\gamma_j\right)\,\left(\prod_{j=1}^r (1-F(x_j))^{\gamma_j-\gamma_{j+1}-1} f(x_j)\right)
\end{equation}
for $\bx=(x_1,\dots,x_r)\in\mathcal{X}=\{(y_1,\dots,y_r)\in\mathbb{R}^r: F^{-1}(0+)<y_1<\dots<y_r<F^{-1}(1-)\}$; see \citet{Kam1995b,Kam1995a}. Here, $F^{-1}$ denotes the quantile function of $F$, and we set $\gamma_{r+1}=0$ for a closed representation. 

Several important models of ordered random variables are included in the model of GOSs in the distribution-theoretical sense by respective choices of $\bgamma$. For $\gamma_j=n-j+1$, $1\leq j\leq r$, formula (\ref{eq:densGOS}) is the joint density function of the first $r$ out of $n$ (common) order statistics based on $F$, while for $\gamma_j=1$, $1\leq j\leq r$, the joint density function of the first $r$ (common) record values based on $F$ is obtained. As pointed out in the introduction, setting 
\begin{equation*}
\gamma_j\,=\,(n-j+1)\,\alpha_j\,,\qquad 1\leq j\leq r\,,
\end{equation*}
where $\alpha_1,\dots,\alpha_n$ denote positive numbers, formula (\ref{eq:densGOS}) represents the joint density function of the first $r$ out of $n$ SOSs based on the cdfs $F_j=1-(1-F)^{\alpha_j}$ with hazard rates $\lambda_{F_j}=\alpha_j\lambda_F$ for $1\leq j\leq n$. Here, $\lambda_F=f/(1-F)$ denotes the hazard rate of $F$ with corresponding cumulative hazard rate $\Lambda_F=-\log(1-F)$. In what follows, we refer to $F$ as baseline distribution and to $\gamma_1,\dots,\gamma_r$ as load-sharing parameters, indicating their relation to the multipliers  $\alpha_1,\dots,\alpha_r$ of the baseline hazard rate and the use of GOSs as a model for load-sharing systems, here.

Throughout the paper, we assume that the baseline cdf $F$ has support $[0,\infty)$, which implies that $F$ is strictly increasing on $[0,\infty)$ and satisfies $F(0)=0$. The collection of all absolutely continuous cdfs with that property is denoted by $\mathcal{F}$. The GOSs model is then in one-to-one correspondence with the counting process
\begin{equation*}
N(t)\,=\,\sum_{j=1}^r \mathds{1}\{X^{(j)} \leq t\}\,,\quad t\geq0\,, 
\end{equation*}
with cumulative intensity
\begin{align*}
    L(t)\,=\,\int_0^t \gamma(s) \lambda_F(s)\,ds,\quad t\geq0\,,
  \end{align*}
  where
\begin{equation}\label{eq:gamma(s)}
    \gamma(s) \,=\, \sum_{j=1}^r \gamma_j \mathds{1}\{X^{(j-1)}<s \leq X^{(j)}\} \,=\, \gamma_{N(s)+1}\,,\quad s\geq0\,,
\end{equation}
and $X^{(0)}=0$; see \citet{kvam2005estimating}, \citet{beutner2008nonparametric}, and \citet{MieBed2019}. Here, $\gamma(s)$ denotes the load parameter $\gamma_j$ which is active at time $s$. Moreover, we use the notation $\mathds{1}A$ for the indicator function of set $A$.

\section{Exact Confidence Bands}\label{s:eb}

We assume to have $M$ independent vectors $\bX_i=(X_i^{(1)},\dots,X_i^{(r)})$, $1\leq i\leq M$, of GOSs based on $F\in\mathcal{F}$ and $\bgamma$ with density function (\ref{eq:densGOS}), each.  For $1\leq i\leq M$, let
\begin{equation*}
N_i(t)=\sum_{j=1}^r \mathds{1}{\{X_i^{(j)} \leq t\}}\,,\qquad t\geq0\,,
\end{equation*}
denote the counting process associated with $\bX_i$ having instantaneous load parameter
\begin{equation}\label{eq:gammaI}
\gamma^{(i)}(s)\,=\, \sum_{j=1}^r \gamma_j \mathds{1}\{X_i^{(j-1)}<s \leq X_i^{(j)}\}\,,\qquad s\geq0\,,
\end{equation}
where $X_i^{(0)}=0$. 

It is well-known that the transformed random vectors
\begin{equation}\label{eq:Ui}
    \bU_i=(U_i^{(1)},\dots,U_i^{(r)})=(F(X_i^{(1)}),\dots,F(X_i^{(r)})), \qquad1\leq i\leq M,
\end{equation}
are distributed as $M$ independent vectors of GOSs based on $G$ and $\bgamma$, where $G$ denotes the cdf of the standard uniform distribution with hazard rate $\lambda_G(u)=1/(1-u)$ and cumulative hazard rate $\Lambda_G(u)=-\log(1-u)$ for $u\in[0,1)$. In particular, we have the relation
\begin{equation}\label{eq:cumhaz}
\Lambda_F(t) = \Lambda_G(F(t))\,,\qquad t\geq0\,.
\end{equation}

\subsection{Known Model Parameters}\label{ss:EBknown}

For known $\bgamma$, exact confidence bands for (the graph of) $F$ can be constructed based on the Nelson-Aalen estimator 
\begin{equation*}
\widehat{F}_{\bgamma}(t)\,=\,\,1-\prod_{X_i^{(j)}\leq t} \left(1-Y_i^{(j)}\right)\,,\qquad t\geq0\,,
 \end{equation*}
for $F$ with
 \begin{equation*}
	Y_i^{(j)}\,=\,\left( \sum_{k=1}^M \gamma_{N_k(X_i^{(j)}-)+1} \right)^{-1}\,,\qquad 1\leq j\leq r\,,\qquad 1\leq i\leq M\,,
 \end{equation*}
 which is proposed in \citet{kvam2005estimating}. Since $F$ is strictly increasing on $[0,\infty)$, we obtain the useful identity
 \begin{equation}\label{eq:NAErel}
 \widehat{F}_{\bgamma}(t)\,=\,\widehat{G}_{\bgamma}(F(t))\,,\qquad t\geq0\,,
 \end{equation}
 where $\widehat{G}_{\bgamma}$ denotes the Nelson-Aalen estimator (of $G$) based on $\bU_1,\dots,\bU_M$; cf. formula (10) in \citet{MieBed2019}.

 \begin{thm}\label{thm:EBknown}
Let $q\in(0,1)$ and $H:\R^2\rightarrow\R$ be a continuous mapping. A confidence band with level $q$ for the graph $\text{gr}(F)=\{(t,F(t)):t>0\}$ of $F\in\mathcal{F}$ on $(0,\infty)$ is given by
\begin{equation*}
B_{\bgamma}\,=\,\{(x,y)\in(0,\infty)\times[0,1):\,H(\widehat{F}_{\bgamma}(x),y)\leq c_{\bgamma}(q)\}\,,
 \end{equation*}
 where $c_{\bgamma}(q)$ is the $q$-quantile of the statistic
 \begin{equation*}
K_{\bgamma}\,=\,\sup\limits_{u\in(0,1)} H(\widehat{G}_{\bgamma}(u),u)\,,
 \end{equation*}
 which is distribution-free with respect to $F$. 
\end{thm}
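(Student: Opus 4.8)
The plan is to reduce the coverage probability of $B_\bgamma$ to the law of $K_\bgamma$ by means of the change-of-variables identity \eqref{eq:NAErel}. By definition of $B_\bgamma$, the inclusion $\text{gr}(F)\subseteq B_\bgamma$ holds if and only if $H(\widehat{F}_\bgamma(t),F(t))\le c_\bgamma(q)$ for every $t>0$, i.e.\ if and only if $\sup_{t>0}H(\widehat{F}_\bgamma(t),F(t))\le c_\bgamma(q)$. Inserting $\widehat{F}_\bgamma(t)=\widehat{G}_\bgamma(F(t))$ from \eqref{eq:NAErel}, this supremum equals $\sup_{t>0}H(\widehat{G}_\bgamma(F(t)),F(t))$. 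Since $F\in\mathcal{F}$ is continuous and strictly increasing on $[0,\infty)$ with $F(0)=0$ and $F(t)\to1$ as $t\to\infty$, the map $t\mapsto F(t)$ is a bijection of $(0,\infty)$ onto $(0,1)$; substituting $u=F(t)$ therefore gives $\sup_{t>0}H(\widehat{G}_\bgamma(F(t)),F(t))=\sup_{u\in(0,1)}H(\widehat{G}_\bgamma(u),u)=K_\bgamma$. Hence $\PP(\text{gr}(F)\subseteq B_\bgamma)=\PP(K_\bgamma\le c_\bgamma(q))$.

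Next I would establish that $K_\bgamma$ is distribution-free. By \eqref{eq:Ui}, the vectors $\bU_i=(F(X_i^{(1)}),\dots,F(X_i^{(r)}))$, $1\le i\le M$, are independent GOSs based on the standard uniform cdf $G$ and $\bgamma$, and $\widehat{G}_\bgamma$ is the associated Nelson--Aalen estimator, a right-continuous step function with at most $rM$ jumps. Since $G$ and $\bgamma$ do not depend on $F$, neither does the law of $\widehat{G}_\bgamma$, and hence neither does the law of $K_\bgamma$; moreover $K_\bgamma$ is a bona fide random variable because its defining supremum reduces to a maximum over the finitely many maximal subintervals of $(0,1)$ on which $\widehat{G}_\bgamma$ is constant, on each of which $u\mapsto H(\widehat{G}_\bgamma(u),u)$ is continuous (this is where continuity of $H$ enters). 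Consequently the $q$-quantile $c_\bgamma(q)$ is a constant independent of $F$, and by the definition of the quantile together with right-continuity of the distribution function of $K_\bgamma$ one has $\PP(K_\bgamma\le c_\bgamma(q))\ge q$. Combined with the identity of the previous paragraph, this yields $\PP(\text{gr}(F)\subseteq B_\bgamma)\ge q$, the asserted coverage level (with equality whenever the law of $K_\bgamma$ has no atom at $c_\bgamma(q)$).

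The argument is short; the one point that deserves care is the exact identification of the two suprema, where the full strength of $F\in\mathcal{F}$ --- absolute continuity together with support $[0,\infty)$ --- is needed to ensure that $t\mapsto F(t)$ maps $(0,\infty)$ \emph{onto} $(0,1)$, so that the supremum defining the coverage event ranges over precisely the set over which $K_\bgamma$ is defined, with neither gaps nor spurious values. The remaining ingredients --- measurability of the coverage event and of $K_\bgamma$, and the quantile inequality --- are routine, given that $\widehat{F}_\bgamma$ and $\widehat{G}_\bgamma$ are piecewise constant with finitely many jumps and that $H$ is continuous.
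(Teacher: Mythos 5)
Your proposal is correct and follows essentially the same route as the paper: reduce the coverage event to $\sup_{t>0}H(\widehat{F}_{\bgamma}(t),F(t))\leq c_{\bgamma}(q)$, apply the identity \eqref{eq:NAErel}, and substitute $u=F(t)$ using that $F\in\mathcal{F}$ maps $(0,\infty)$ bijectively onto $(0,1)$, so that the coverage probability equals $\PP(K_{\bgamma}\leq c_{\bgamma}(q))$ with $K_{\bgamma}$ distribution-free. Your added remarks on measurability of $K_{\bgamma}$ and on the quantile inequality (coverage $\geq q$, with equality absent an atom at $c_{\bgamma}(q)$) are slightly more careful than the paper's bare ``$=q$'', but do not change the argument.
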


For $H(y,z)=|y-z|$, $B_{\bgamma}$ is a Kolmogorov-Smirnov type band with vertical width not larger than $2c_{\bgamma}(q)$ at every $x>0$. 
Alternatively, we may consider a function of the form $H(y,z)=h(y)|y-z|$ for some weighting function $h$, suitable choices of which will be discussed in Section \ref{s:ab} below.

\subsection{Unknown Model Parameters}\label{ss:EBunknown}

If $\bgamma$ is unknown, constructing an exact confidence band for $F\in\mathcal{F}$ is complicated. First, let us assume that $\gamma_1=n$ to guarantee that $F$ and $\gamma_2,\dots,\gamma_r$ are identifiable. As a starting point, we consider a statistic of type
\begin{equation}\label{eq:HpivotF}
T_{F,\bgamma}\,=\,\sup_{t>0}H(\widehat{F}_{\bgamma}(t),F(t))
\end{equation}
for some continuous function $H:\R^2\rightarrow\R$, which forms a pivotal quantity for $F$ in case of $\bgamma$ being known; see the proof of Theorem \ref{thm:EBknown} in the Appendix. In \citet{kvam2005estimating}, a profile-likelihood estimator $\widehat{\bgamma}$ for $\bgamma$ is proposed, which is shown to be distribution-free with respect to $F$ in \cite{MieBed2019}, Theorem 1. Inserting for $\bgamma$ in formula (\ref{eq:HpivotF}), the distribution of  $T_{F,\widehat{\bgamma}}$ would be free of $F$ as well but still depend on $\bgamma$. As the asymptotic covariance of $\widehat{\bgamma}$ derived in \citet{kvam2005estimating} suggests, there is no simple way to obtain a pivotal statistic from it. 

In the context of testing a simple null hypothesis $H_0: F=F_0$ for some given $F_0\in\mathcal{F}$, \cite{MieBed2019} remove the dependence (in distribution) of a respective test statistic on the unknown nuisance parameter $\bgamma$ by constructing a test with Neyman structure. Here, the basic idea is to use that, in case $F$ is known, the maximum likelihood estimator $\tilde{\bgamma}_F$, say, of $\bgamma$ is a sufficient statistic for $\bgamma$ and has a distribution not depending on $F$; see, e.g., \cite{CraKam1996}. Hence, the conditional distribution of $T_{F,\tilde{\bgamma}_F}$ given $\tilde{\bgamma}_F=\boldsymbol{\eta}$ and thus also its $q$-quantile  $c_q(\boldsymbol{\eta})$ do neither depend on $F$ nor on $\bgamma$.
The test statistic for $H_0:F=F_0$ is then defined as $T_{F_0,\tilde{\bgamma}_{F_0}}$, 
and, in case that $H_0$ is rejected for large values of $T_{F_0,\tilde{\bgamma}_{F_0}}$, the acceptance region of the test with level $1-q$ is given by
\begin{equation*}
    A(F_0)\,=\,\{T_{F_0,\tilde{\bgamma}_{F_0}}\leq c_q(\tilde{\bgamma}_{F_0})\}\,.
    \end{equation*}

Inverting the acceptance region now yields that
\begin{equation}\label{eq:Binv}
   B\,=\,\{F_0\in\mathcal{F}:\,T_{F_0,\tilde{\bgamma}_{F_0}}\leq c_q(\tilde{\bgamma}_{F_0})\}
    \end{equation}
forms a confidence set for $F\in\mathcal{F}$ with exact level $q$. Although Theorem 4  in \cite{MieBed2019} can be applied to obtain the quantiles of the conditional distribution via Monte Carlo methods, the implicit and complicated representation of the set in formula (\ref{eq:Binv}) seems to be too cumbersome for users, which gives rise to an asymptotic approach in a later section.

\section{Asymptotic Confidence Bands}\label{s:ab}

We now turn to the construction of asymptotic confidence bands for $F\in\mathcal{F}$. For $\bgamma$ being known, asymptotic bands for $F$ are developed in Section \ref{ss:ABknown} and seen to require less computational effort than the exact counter-parts in Section \ref{ss:EBknown}, thus constituting practical competitors if the number $M$ of samples is large. Section \ref{ss:ABunknown} then offers asymptotic bands for $F$ under the assumption that $\bgamma$ is unknown, in the case of which simple exact  procedures are not available; see Section \ref{ss:EBunknown}.

\subsection{Known Model Parameters}\label{ss:ABknown}

If the load-sharing parameters $\gamma_1,\dots,\gamma_r$ are known, the cumulative hazard rate $\Lambda_F$ may be estimated nonparametrically by the step function
\begin{equation}\label{eq:hatLa}
    \widehat{\Lambda}_F(t) \,=\,\int_0^t \frac{d\overline{N}(s)}{\overline{\gamma}(s)} \,,\qquad t\geq0\,,
    \end{equation}
    with
    \begin{equation*}
    \overline{\gamma}(s) \,=\,\frac{1}{M} \sum_{i=1}^M \gamma^{(i)}(s)\qquad\text{and}\qquad\overline{N}(s) \,=\, \frac{1}{M} \sum_{i=1}^M N_i(s)\,,\qquad s\geq0\,,
\end{equation*}
having jumps at $X_i^{(j)}$, $1\leq j\leq r$, $1\leq i\leq M$; see \citet[equation (5)]{kvam2005estimating} and \citet{beutner2010nonparametric, beutner2008nonparametric}.
As $M\rightarrow\infty$, \cite{kvam2005estimating} have established the functional weak convergence
\begin{align}
	\label{eqn:hazardlimit}\sqrt{M} (\widehat{\Lambda}_F(t)-\Lambda_F(t)) \;\wconv\; W_{\bgamma}(t) 
\end{align}
in the space $D[0,T]$ of cadlag functions on $[0,T]$, where $T$ is such that $\E(\gamma(T))>0$ and $W_{\bgamma}(t)$, $t\geq0$, is a zero-mean Gaussian process with independent increments and variance function 
\begin{equation*}
\Var(W_{\bgamma}(t))\,=\,\int_0^t \frac{\lambda_F(s)}{\E(\gamma(s))}\, ds\,,\qquad t\geq0\,.
\end{equation*}
The jump-sizes of $\widehat{\Lambda}_F(t)$ only depend on the ranks of the failure times $X_i^{(j)}$ , $1\leq j\leq r$, $1\leq i\leq M$, among all failure times, and we may write 
\begin{align}
	\label{eqn:shift}\widehat{\Lambda}_F(t)=\widehat{\Lambda}_G(F(t))\,,\qquad t\geq0\,,
\end{align}
where $\widehat{\Lambda}_G$ denotes the estimator of $\Lambda_G$ based on $\bU_1,\dots,\bU_M$, as defined in formula (\ref{eq:Ui}). As a consequence, a quantity of the form
\begin{eqnarray}
&&\sup_{t\geq0} \left|  H(\widehat{\Lambda}_F(t),F(t))-H(\Lambda_F(t),F(t)) \right| \notag\\
 &=&\, \sup_{t\geq0} \left|  H(\widehat{\Lambda}_G(F(t)),F(t))-H(\Lambda_G(F(t)),F(t)) \right| \notag\\
	&=&\, \sup_{u\in[0,1)} \left|  H(\widehat{\Lambda}_G(u),u)-H(\Lambda_G(u),u) \right|\label{eqn:distr-free-band}
    \end{eqnarray}
is distribution-free with respect to the baseline cdf, where $H$ is any appropriate function for which the supremum is finite. However, its distribution does depend on $\bgamma$ influencing the distribution of the ranks of $X_i^{(j)}$, $1\leq j\leq r$, $1\leq i\leq M$.

To construct an asymptotic confidence band for $F$, we therefore study the limit process in formula (\ref{eqn:hazardlimit}) for a standard uniform baseline distribution in more detail. The quantity $\E(\gamma(s))$ 
in the associated variance function
\begin{equation}\label{eq:tau}
	v_{\bgamma}(u)\,=\,\int_0^u \frac{\lambda_G(s)}{\E(\gamma(s))} ds\,,\qquad u\in[0,1]\,,
\end{equation}
of $W_{\bgamma}(u)$, $u\in[0,1]$, can be determined by means of Monte-Carlo simulations. In case of pairwise distinct model parameters, a useful representation can be derived analytically.

\begin{lemma}\label{la:EN(t)}
	\label{Lemma: PN_t=r}
	Let the baseline distribution be a standard uniform distribution,
 \begin{align}
	\label{GammaDistinct}
	\gamma_i\neq\gamma_j\,,\quad 1\leq i,j\leq r\,,\quad i\neq j\,,
\end{align}
and $\gamma(u)$, $u\in[0,1]$, be the corresponding instantaneous load parameter as defined  in formula (\ref{eq:gamma(s)}). Then, we have that
\begin{equation}\label{eq:E(ga)}
\E(\gamma(u))\,=\,\sum_{i=1}^{r} c_i(\gamma)\,(1-u)^{\gamma_i}\,,\qquad u\in[0,1]\,,
	\end{equation}
	where
	\begin{align}\label{eq:ci}
		c_i(\bgamma)\,=\,\sum_{k=i}^r \left(\prod_{j=1}^k \gamma_j\right) \left(\prod_{j=1, j\neq i}^{k}(\gamma_j-\gamma_i)\right)^{-1}\,,\qquad i\in\{1,\dots,r\}\,.
	\end{align}
\end{lemma}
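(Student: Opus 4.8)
The plan is to express $\E(\gamma(u))$ through the marginal distribution functions of the generalized order statistics and then to reduce the identity \eqref{eq:E(ga)} to an elementary algebraic manipulation. Write $F_j(u)=\PP(U^{(j)}\le u)$, $u\in[0,1)$, for the cdf of the $j$-th GOS based on the uniform baseline $G$ and $\bgamma$, with the conventions $U^{(0)}=0$ (so $F_0\equiv 1$) and $U^{(r+1)}=1$ (so $F_{r+1}\equiv 0$). Since $N(u)=j$ precisely on $\{U^{(j)}\le u<U^{(j+1)}\}$ and $\gamma(u)=\gamma_{N(u)+1}$ by \eqref{eq:gamma(s)}, with $\gamma_{r+1}=0$, one gets
\[
\E(\gamma(u))\,=\,\sum_{j=1}^{r}\gamma_j\bigl(F_{j-1}(u)-F_j(u)\bigr),
\]
and an Abel summation, using $\gamma_{r+1}=0$ and $F_0\equiv 1$, rewrites this as $\E(\gamma(u))=\gamma_1-\sum_{j=1}^{r}(\gamma_j-\gamma_{j+1})F_j(u)$.

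The second step is to determine $F_j$ explicitly. For a standard uniform baseline one has the classical product (quantile) representation $1-U^{(j)}\deq\prod_{i=1}^{j}W_i^{1/\gamma_i}$ with $W_1,\dots,W_r$ i.i.d.\ uniform on $(0,1)$; this is contained in \citet{Kam1995b,Kam1995a} and can also be recovered by the change of variables $t_i=(1-U^{(i)})/(1-U^{(i-1)})$ in the joint density \eqref{eq:densGOS}, which turns $t_1,\dots,t_r$ into independent $\mathrm{Beta}(\gamma_i,1)$ variables. Equivalently, $-\log(1-U^{(j)})\deq\sum_{i=1}^{j}E_i/\gamma_i$ with $E_i$ i.i.d.\ $\Exp(1)$, i.e.\ a hypoexponential variable with the pairwise distinct (by \eqref{GammaDistinct}) rates $\gamma_1,\dots,\gamma_j$, whose survival function follows from the partial-fraction expansion of $t\mapsto\prod_{i=1}^{j}\gamma_i/(\gamma_i-t)$. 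Transforming back yields
\[
F_j(u)\,=\,1-\sum_{i=1}^{j}d_{i,j}\,(1-u)^{\gamma_i},\qquad d_{i,j}=\prod_{\substack{k=1\\ k\neq i}}^{j}\frac{\gamma_k}{\gamma_k-\gamma_i},
\]
where $\sum_{i=1}^{j}d_{i,j}=1$ (the survival function evaluated at $0$), consistent with $F_j(0)=0$.

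Substituting this into the Abel-summed expression, the constant contributions cancel because $\sum_{j=1}^{r}(\gamma_j-\gamma_{j+1})=\gamma_1$, and interchanging the order of summation gives
\[
\E(\gamma(u))\,=\,\sum_{i=1}^{r}\Bigl(\sum_{j=i}^{r}(\gamma_j-\gamma_{j+1})\,d_{i,j}\Bigr)(1-u)^{\gamma_i},
\]
so it remains to show $\sum_{j=i}^{r}(\gamma_j-\gamma_{j+1})d_{i,j}=c_i(\bgamma)$ with $c_i(\bgamma)$ as in \eqref{eq:ci}. I would prove this from the recursion $(\gamma_j-\gamma_i)d_{i,j}=\gamma_j d_{i,j-1}$ for $j>i$ together with $d_{i,i}=d_{i,i-1}$: these telescope to $\sum_{j=i}^{r}(\gamma_j-\gamma_i)d_{i,j}=\sum_{j=i}^{r}\gamma_{j+1}d_{i,j}$ (again using $\gamma_{r+1}=0$), hence $\sum_{j=i}^{r}(\gamma_j-\gamma_{j+1})d_{i,j}=\gamma_i\sum_{j=i}^{r}d_{i,j}$; and since $\gamma_i d_{i,j}=\bigl(\prod_{l=1}^{j}\gamma_l\bigr)\bigl(\prod_{l=1,\,l\neq i}^{j}(\gamma_l-\gamma_i)\bigr)^{-1}$, the right-hand side is exactly $\sum_{k=i}^{r}\bigl(\prod_{l=1}^{k}\gamma_l\bigr)\bigl(\prod_{l=1,\,l\neq i}^{k}(\gamma_l-\gamma_i)\bigr)^{-1}=c_i(\bgamma)$.

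The conceptual ingredients — the pure-birth/counting-process reduction, the product representation of GOSs, and the hypoexponential survival function — are standard, so the real work, and the main obstacle, is the bookkeeping in the last step: handling the conventions $\gamma_{r+1}=0$, $U^{(0)}=0$, $U^{(r+1)}=1$ and the empty-product conventions ($d_{i,i}=d_{i,i-1}$, $d_{1,1}=1$) carefully enough that the double sum collapses onto exactly $c_i(\bgamma)$ rather than a shifted or rescaled variant. A convenient consistency check along the way is that the coefficients must satisfy $\sum_{i=1}^{r}c_i(\bgamma)=\E(\gamma(0))=\gamma_1$, which indeed follows from $\sum_i d_{i,j}=1$ and the same telescoping.
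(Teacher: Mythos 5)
Your argument is correct and follows essentially the same route as the paper's proof: both rest on the marginal cdfs of the uniform GOSs being linear combinations of $(1-u)^{\gamma_i}$ with partial-fraction coefficients, and on the same coefficient recursion (your $(\gamma_j-\gamma_i)d_{i,j}=\gamma_j d_{i,j-1}$ is the paper's $a_i^{(k)}=a_i^{(k+1)}(\gamma_{k+1}-\gamma_i)$ up to normalization). The only difference is cosmetic bookkeeping — you Abel-sum first and telescope at the end, whereas the paper first computes $\PP(N(u)=k)$ explicitly and then swaps the order of summation.
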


In case that formula (\ref{GammaDistinct}) is true, Lemma \ref{la:EN(t)} removes the need for Monte-Carlo methods, and some numerical integration scheme may then be applied to obtain the value of the integral
\begin{equation}\label{eq:vgamma}
	v_{\bgamma}(u)\,=\,\int_0^u \frac{\lambda_G(s)}{\E(\gamma(s))} ds \,=\,\int_0^u \left(\sum_{i=1}^r c_i(\bgamma)\,(1-s)^{\gamma_i+1} \right)^{-1} ds
\end{equation}
 as a function of $u\in[0,1]$, which is crucial for determining the confidence band. In particular, the lemma applies to common order statistics and progressively type-II censored order statistics.
 \begin{remark}\label{rem:egamma}
     If we drop assumption \eqref{GammaDistinct} and allow for $\gamma_i=\gamma_j$ for some $1\leq i,j\leq r$ with $i\neq j$, Lemma \ref{Lemma: PN_t=r} is in this form no longer valid. However, it can be extended to this edge case, since the mapping $\bgamma \mapsto \E(\gamma(u))$ is continuous. In particular, $\E(\gamma(u))$ is still of the form in equation \eqref{eq:E(ga)} for appropriate coefficients $c_i(\bgamma)$, $1\leq i\leq r$, different from those in formula \eqref{eq:ci}.
 \end{remark}
 Now, recall that the functional weak convergence in formula (\ref{eqn:hazardlimit}) holds true on any compact interval $[0,T]$ with $\E(\gamma(T))>0$. In case of a standard uniform baseline cdf, the condition simplifies to $T<1$ by using formula (\ref{eq:gammaI}). This finding can be utilized to state the following asymptotic result.

\begin{thm}
	\label{Theorem: ConfBand}
 Let $q\in (0,1)$ and the function $g:(0,\infty)\rightarrow\R$ be such that $g\geq\varepsilon$ for some $\varepsilon>0$ and $\sqrt{x\log\log x}/g(x)\rightarrow0$ for $x\rightarrow\infty$. Moreover, let
  \begin{align*}
    \tilde{B}_{\bgamma}^{(M)}\,=\,\left\{(x,y)\in(0,\infty)\times(0,1):\,
    \frac{\sqrt{M}}{g(v_{\bgamma}(\widehat{F}(x)))}\frac{|\widehat{F}(x)-y|}{1-\widehat{F}(x)} \leq d(q) \right\} 
\end{align*}
with $\widehat{F}=1-\exp\{-\widehat\Lambda_F\}$  and $v_{\bgamma}$ as in formula (\ref{eq:tau}), where $d(q)$ denotes the $q$-quantile of the real-valued random variable
\begin{equation}
\sup_{z\in (0,\infty)} \frac{|W_*(z)|}{g(z)}
   \end{equation}
   for a standard Brownian motion $W_*$. Then $\tilde{B}_{\bgamma}^{(M)}$ is a confidence band for the graph of $F\in\mathcal{F}$ with asymptotic level $q$ in the sense that
 \begin{equation}\label{eqn:confidence-band}
     \forall\, R>0: \quad \lim_{M\to\infty} \PP \left(\{(t,F(t)):\, t\in (0, R]\}\,\subset\,\tilde{B}_{\bgamma}^{(M)} \right) \geq q. 
 \end{equation}
\end{thm}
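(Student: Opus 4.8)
The plan is to turn the containment event into a distribution-free supremum statistic on a subinterval of $[0,1)$, identify its weak limit via \eqref{eqn:hazardlimit}, and then argue by the portmanteau theorem after controlling separately the behaviour near the left endpoint $u=0$. First, I would carry out the reduction. Fix $R>0$ and set $u_R:=F(R)<1$. Since $F(t)\in(0,1)$ for $t\in(0,R]$, the event $\{(t,F(t)):t\in(0,R]\}\subset\tilde{B}_{\bgamma}^{(M)}$ is the event that $\sqrt M\,|\widehat F(t)-F(t)|\le d(q)\,(1-\widehat F(t))\,g(v_{\bgamma}(\widehat F(t)))$ for every $t\in(0,R]$. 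By \eqref{eqn:shift} and $\Lambda_F=\Lambda_G\circ F$ one has $\widehat F=\widehat G\circ F$ with $\widehat G:=1-\exp\{-\widehat\Lambda_G\}$, $\widehat\Lambda_G$ being the estimator based on $\bU_1,\dots,\bU_M$; and since $1-\widehat G(u)=e^{-\widehat\Lambda_G(u)}$, $1-u=e^{-\Lambda_G(u)}$, writing $D_M:=\widehat\Lambda_G-\Lambda_G$ gives the identity $(\widehat G(u)-u)/(1-\widehat G(u))=e^{D_M(u)}-1$. As $t\mapsto F(t)$ maps $(0,R]$ continuously and bijectively onto $(0,u_R]$, the containment event equals $\{\sup_{u\in(0,u_R]}\Psi_M(u)\le d(q)\}$, where
\[
\Psi_M(u):=\frac{\sqrt M\,\bigl|e^{D_M(u)}-1\bigr|}{g\bigl(v_{\bgamma}(\widehat G(u))\bigr)},
\]
a statistic depending on $F$ only through the value $u_R<1$; hence $\PP\bigl(\{(t,F(t)):t\in(0,R]\}\subset\tilde{B}_{\bgamma}^{(M)}\bigr)=\PP\bigl(\sup_{u\in(0,u_R]}\Psi_M(u)\le d(q)\bigr)$.

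Next, I would identify the weak limit away from $0$. Applying \eqref{eqn:hazardlimit} to the standard uniform baseline $G$ (admissible on $[0,u_R]$, as $\E(\gamma(u))>0$ for $u<1$) yields $\sqrt M\,D_M\wconv W_{\bgamma}$ in $D[0,u_R]$, with $W_{\bgamma}$ a centred Gaussian process with independent increments and variance function $v_{\bgamma}$ from \eqref{eq:tau}; tightness gives $\sup_{[0,u_R]}|D_M|=o_P(1)$. The bound $|e^x-1-x|\le\tfrac12x^2e^{|x|}$ then shows $\sqrt M(e^{D_M(\cdot)}-1)=\sqrt M\,D_M(\cdot)+o_P(1)$ uniformly on $[0,u_R]$, so $\sqrt M(e^{D_M(\cdot)}-1)\wconv W_{\bgamma}$ there; moreover $\sup_u|\widehat G(u)-u|\le e^{\sup|D_M|}-1=o_P(1)$, whence $v_{\bgamma}\circ\widehat G\to v_{\bgamma}$ and then $g(v_{\bgamma}\circ\widehat G)\to g(v_{\bgamma})$ uniformly on every $[\eta,u_R]$ with $0<\eta<u_R$, by uniform continuity of $v_{\bgamma}$ and of $g$ on the relevant compacta. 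Since also $\sup_{[\eta,u_R]}\sqrt M|e^{D_M}-1|=O_P(1)$, on $[\eta,u_R]$ one gets $\Psi_M=\sqrt M|e^{D_M}-1|\cdot g(v_{\bgamma})^{-1}+o_P(1)$ uniformly, so by the continuous mapping theorem, the supremum functional being continuous on $D[\eta,u_R]$ and the limit $W_{\bgamma}$ being continuous,
\[
\sup_{u\in[\eta,u_R]}\Psi_M(u)\ \wconv\ \sup_{u\in[\eta,u_R]}\frac{|W_{\bgamma}(u)|}{g(v_{\bgamma}(u))}\ \overset{d}{=}\ S_{\eta,R}:=\sup_{z\in[v_{\bgamma}(\eta),\,v_{\bgamma}(u_R)]}\frac{|W_*(z)|}{g(z)},
\]
the last equality because a centred Gaussian process with independent increments and variance $v_{\bgamma}$ has the law of $W_*\circ v_{\bgamma}$, and $v_{\bgamma}$ restricts to a homeomorphism of $[\eta,u_R]$. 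As $\eta\downarrow0$, $S_{\eta,R}\uparrow S_{0,R}:=\sup_{z\in(0,v_{\bgamma}(u_R)]}|W_*(z)|/g(z)\le S:=\sup_{z\in(0,\infty)}|W_*(z)|/g(z)$.

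Then I would handle the left endpoint and conclude. On $(0,\eta]$ we have $g(v_{\bgamma}(\widehat G(u)))\ge\varepsilon$, so $\sup_{(0,\eta]}\Psi_M\le\varepsilon^{-1}\sup_{[0,\eta]}\sqrt M|e^{D_M}-1|\wconv\varepsilon^{-1}\sup_{[0,\eta]}|W_{\bgamma}|$, and the latter tends to $0$ a.s.\ as $\eta\downarrow0$ since $W_{\bgamma}$ is continuous with $W_{\bgamma}(0)=0$; hence $\lim_{\eta\downarrow0}\limsup_{M}\PP(\sup_{(0,\eta]}\Psi_M>\delta)=0$ for every $\delta>0$. (If $g$ is undefined at $0$, the region $\{\widehat F=0\}=\{x<\min_iX_i^{(1)}\}$ is treated separately using $\sqrt M\min_iU_i^{(1)}=o_P(1)$, which follows from $\PP(\min_iU_i^{(1)}>s)=(1-s)^{M\gamma_1}$.) Splitting $\sup_{(0,u_R]}\Psi_M=\max\{\sup_{(0,\eta]}\Psi_M,\sup_{(\eta,u_R]}\Psi_M\}$, using $\PP(A\cap B)\ge\PP(A)-\PP(B^c)$, the portmanteau theorem on the open set $\{\,\cdot<d(q)\}$, and the two previous steps, for each $\eta\in(0,u_R)$
\[
\liminf_{M}\PP\Bigl(\sup_{(0,u_R]}\Psi_M\le d(q)\Bigr)\ \ge\ \PP\bigl(S_{\eta,R}<d(q)\bigr)\ -\ \limsup_{M}\PP\Bigl(\sup_{(0,\eta]}\Psi_M>d(q)\Bigr);
\]
letting $\eta\downarrow0$ yields $\liminf_{M}\PP(\sup_{(0,u_R]}\Psi_M\le d(q))\ge\PP(S_{0,R}<d(q))$. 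Finally $S<\infty$ a.s.: near $0$ because $g\ge\varepsilon$ and $W_*$ has continuous paths, and for $z\to\infty$ by the law of the iterated logarithm together with $\sqrt{z\log\log z}/g(z)\to0$; consequently $\PP(S=0)=0<q$, so $d(q)>0$. Since the distribution function of the supremum of a centred, a.s.\ bounded Gaussian process is continuous on $(0,\infty)$, $S_{0,R}$ has no atom at $d(q)$, and therefore $\PP(S_{0,R}<d(q))=\PP(S_{0,R}\le d(q))\ge\PP(S\le d(q))\ge q$, using $S_{0,R}\le S$ and that $d(q)$ is the $q$-quantile of $S$. This establishes \eqref{eqn:confidence-band}.

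The main obstacle I anticipate is the left endpoint $u\downarrow0$ — the vertical axis of the band — where the weighted deviation is not covered by the continuous-mapping argument that works on compact subintervals of $(0,1)$; one must combine a modulus-of-continuity estimate for $W_{\bgamma}$ at $0$ (and, when $g$ is undefined there, the order bound $\sqrt M\min_iU_i^{(1)}=o_P(1)$) with the portmanteau inequality. A secondary technical point is checking that $d(q)$ is a continuity point of the law of $S$, which is precisely what upgrades the portmanteau lower bound $\PP(S_{0,R}<d(q))$ to the asserted $\ge q$.
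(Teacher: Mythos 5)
Your proof is correct and follows essentially the same route as the paper: reduce to the standard uniform baseline via $\widehat F=\widehat G\circ F$, apply the functional limit \eqref{eqn:hazardlimit} with the time change $W_{\bgamma}=W_*\circ v_{\bgamma}$, and conclude from the fact that the $q$-quantile of $\sup_{z\in(0,v_{\bgamma}(F(R))]}|W_*(z)|/g(z)$ is at most $d(q)$. The only difference is that you spell out the behaviour near $u=0$ and the no-atom/portmanteau step explicitly (via compact exhaustion in $\eta$), details the paper subsumes in a single continuous-mapping argument on $D[0,p]$.
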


As simple candidates for $g$ in Theorem \ref{Theorem: ConfBand}, we may consider functions of the form  $g(x)=x^a+b$, $x>0$, for some $a>1/2$ and $b>0$, or $g(x) = \sqrt{x \log(x+1)}+c$, $x>0$, for some $c>0$. In cases (a) and (c) of Figure \ref{fig:example-bands}, the asymptotic confidence band is shown for these choices of $g$ and particular values of $a$, $b$, and $c$.

\begin{remark}
Note that property (\ref{eqn:confidence-band}) is generally weaker than
\begin{equation*}
\lim_{M\to\infty} \PP \left(\{(t,F(t)):\, t>0\}\,\subset\,\tilde{B}_{\bgamma}^{(M)} \right) \geq q\,,
 \end{equation*}
 which corresponds to the common definition of an (asymptotic) confidence band for $F\in\mathcal{F}$. This restriction arises from the fact that the weak convergence in formula (\ref{eqn:hazardlimit}) only holds true on compact intervals,  while the limiting process $W_{\bgamma}$ is basically a Brownian motion on $[0,\infty)$.
 The weaker asymptotic formulation \eqref{eqn:confidence-band} is easier to verify and also sufficient for many applications, where failure times beyond a certain time horizon are not of practical relevance.
 \end{remark}

\begin{remark}
 Once having computed the quantile $d(q)$, it can also be used to state an asymptotic confidence band for the quantile function $F^{-1}$ of $F\in\mathcal{F}$. Inspecting the proof of Theorem \ref{Theorem: ConfBand} and letting $R=F^{-1}(p)$, we may rewrite
\begin{equation*}
    \sup_{t\in (0, R]} \frac{\sqrt{M}}{g(v_{\bgamma}(F(t)))}\frac{|\widehat{F}(t)-F(t)|}{1-F(t)} \,=\,\sup_{u\in(0,p]} \frac{\sqrt{M}}{g(v_{\bgamma}(u))}\, \frac{|\widehat{F}(F^{-1}(u))-u|}{1-u}\,,
    \end{equation*}
and, by introducing the random set
\begin{equation*}
\tilde{Q}_{\bgamma}^{(M,p)}\,=\,\left\{(u,z)\in(0,p]\times(0,\infty):\,\frac{\sqrt{M}}{g(v_{\bgamma}(u))}\,\frac{|\widehat{F}(z)-u|}{1-u}\,\leq d(q)\right\}\,,
\end{equation*}
it follows that 
 \begin{equation*}
     \forall\, p\in(0,1): \quad \lim_{M\to\infty} \PP \left(\{(u,F^{-1}(u)):\, u\in (0,p]\}\,\subset\,\tilde{Q}_{\bgamma}^{(M,p)} \right) \geq q\,.
 \end{equation*}
Since $\widehat{F}$ is a step function, it might occur for small sample sizes that the set $\{z:(u,z)\in\tilde{Q}_{\bgamma}^{(M,p)}\}$ is empty for some $u\in(0,p]$. This problem can be solved by either interpolating $\widehat{F}$ or by continuing the envelope as a step-function at empty cross sections.
\end{remark}

\begin{figure}
    \centering
    \begin{subfigure}{0.47\textwidth}
        \includegraphics[width=\textwidth]{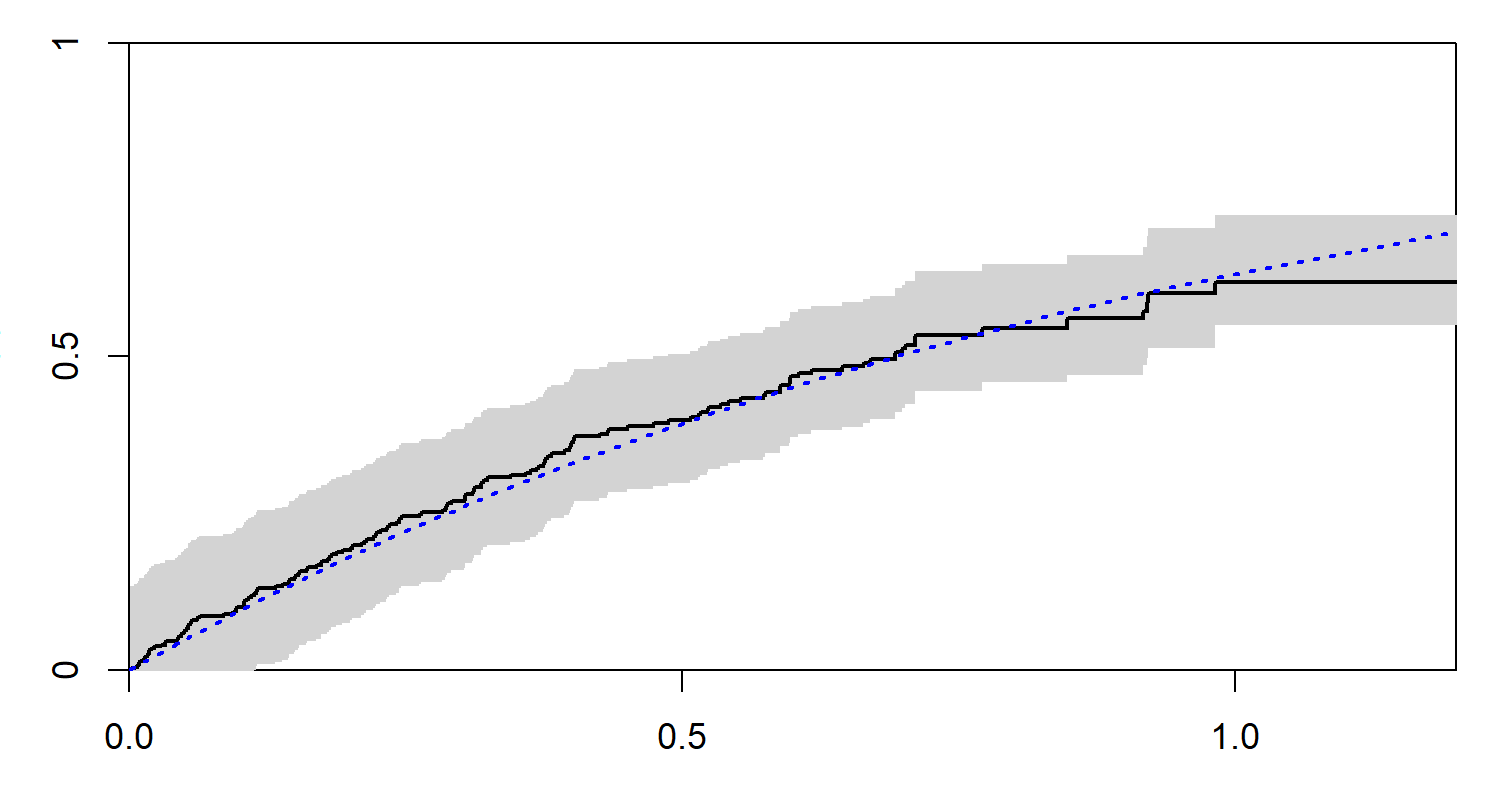}
        \caption{Known $\bgamma$; weighting $g(x)=x^{0.75}+0.5$.}
    \end{subfigure}
    \begin{subfigure}{0.47\textwidth}
        \includegraphics[width=\textwidth]{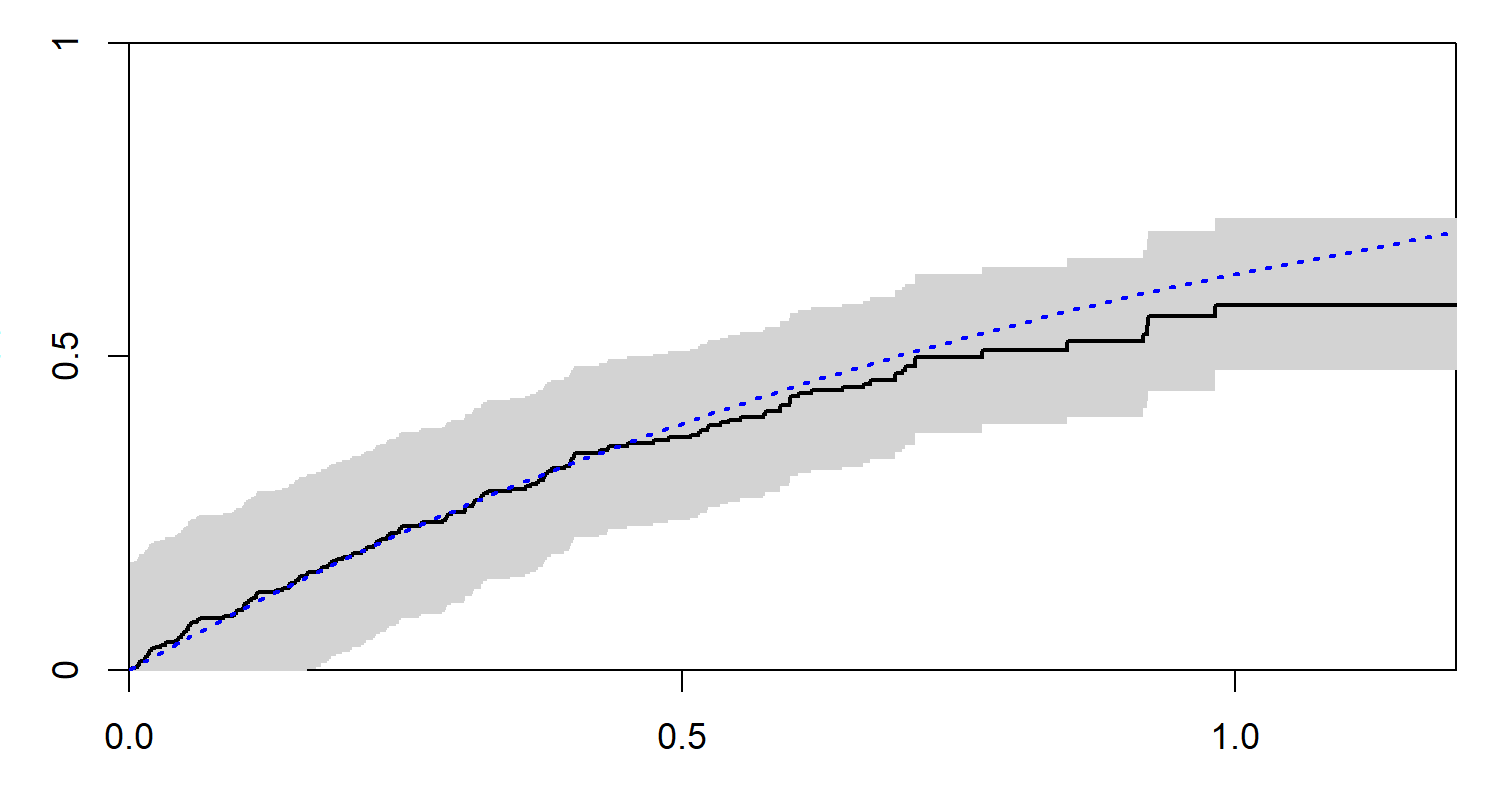}
        \caption{Unknown $\bgamma$; weighting $g(x)=x^{0.75}+0.5$.}
    \end{subfigure}
    \begin{subfigure}{0.47\textwidth}
        \includegraphics[width=\textwidth]{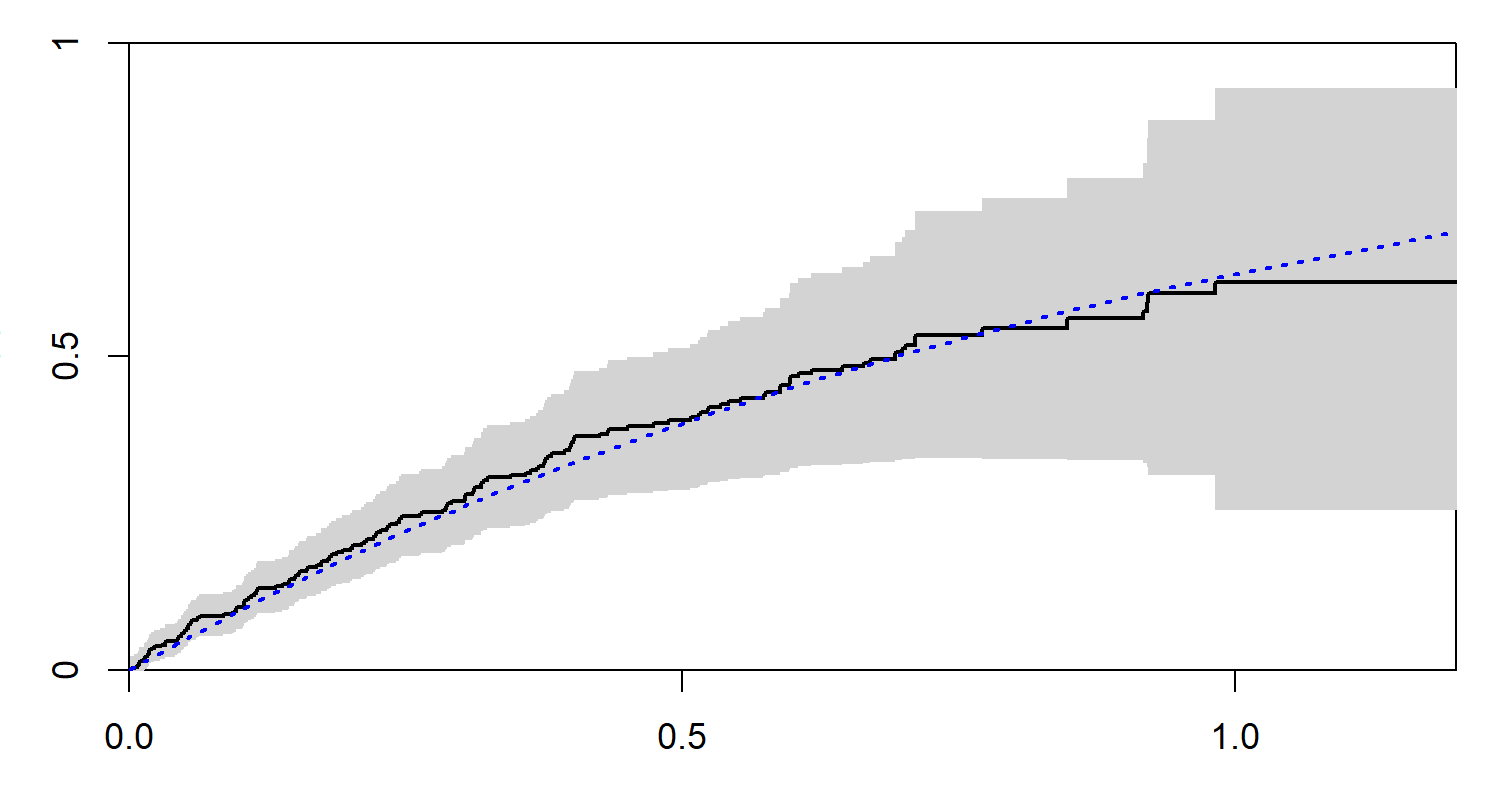}
        \caption{{\tabular[t]{@{}l@{}}Known $\bgamma$; \\weighting $g(x)=\sqrt{x \log(x+1)}+0.01$.\endtabular}}
    \end{subfigure}
    \begin{subfigure}{0.47\textwidth}
        \includegraphics[width=\textwidth]{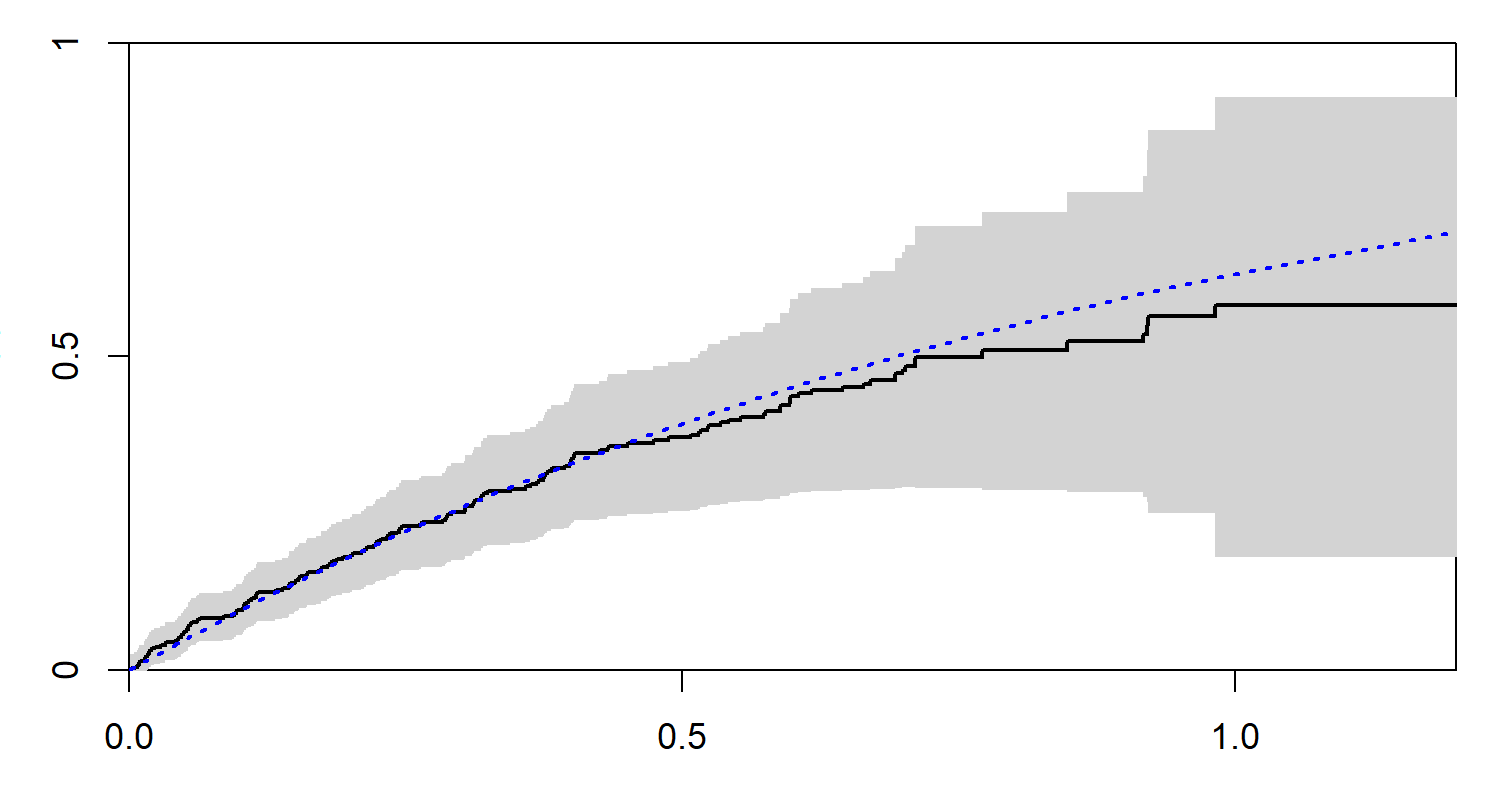}
        \caption{{\tabular[t]{@{}l@{}}Unknown $\bgamma$; \\weighting $g(x)=\sqrt{x \log(x+1)}+0.01$.\endtabular}}
    \end{subfigure}
    \caption{Baseline estimators (black) and asymptotic confidence bands (grey) based on $M=40$ sequential 4-out-of-10 systems with $\bgamma=(10,9,11,13)$ and standard exponential baseline cdf $F$ (blue dashed). The bands have asymptotic confidence level $90\%$.}
    \label{fig:example-bands}
\end{figure}

\subsection{Unknown Model Parameters}\label{ss:ABunknown}

The confidence band constructed in Theorem~\ref{Theorem: ConfBand} is infeasible if the load-sharing parameters $\gamma_1,\dots,\gamma_r$ are unknown, as both the estimator $\widehat{\Lambda}_F$ as well as the distribution of the statistics in formula \eqref{eqn:distr-free-band} depend on $\bgamma$. 
Of course, we can plug-in some estimator for $\bgamma$, but it does not seem to be possible to derive pivotal statistics for finite samples. Asymptotic confidence bands, however, can be constructed and are the subject matter in what follows.

Throughout this section we again assume that $\gamma_1=r$ to guarantee identifiability of $F$ and $\gamma_2,\dots,\gamma_r$. In that case, \cite{kvam2005estimating} suggest a profile-likelihood approach to estimate $\bgamma$ by maximizing the function
\begin{align}
    \ell_p(T;\bgamma) \,=\, \frac{1}{M}\sum_{i=1}^M \int_0^T \log \gamma^{(i)}(s) \, dN_i(s) - \int_0^T \log \overline{\gamma}(s)\, d\overline{N}(s)\,. \label{eqn:maxprofile}
\end{align}
The solution $\widehat{\bgamma}_T = \arg\max_{\bgamma} \ell_p(T;\bgamma)$ may then be used to estimate the cumulative hazard rate of $F$ in formula (\ref{eq:hatLa}) by
\begin{align*}
    \widehat{\Lambda}_F(t,\widehat{\bgamma}_T) \,=\, \int_0^t \frac{d\overline{N}(s)}{\widehat{\overline{\bgamma}}(s)}, 
        \quad \text{with}\quad 
    \widehat{\overline{\gamma}}(s) \,=\, \frac{1}{M} \sum_{i=1}^M \sum_{j=1}^r \widehat{\gamma}_j \mathds{1}\{N_i(s)=j-1\} \,.
\end{align*}

In Lemma \ref{la:concave}, some useful properties of $\ell_p$ are stated.

\begin{lemma}\label{la:concave}
    Any stationary point of $\ell_p$ is a global maximizer, and the maximizer is almost-surely uniquely determined. Moreover, $\ell_p$ is almost-surely strictly concave as a function of  $(\beta_2,\ldots, \beta_r)=(\log \gamma_2,\ldots, \log \gamma_r)$.
\end{lemma}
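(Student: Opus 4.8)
The plan is to work throughout with the reparametrization $\beta_j = \log\gamma_j$ for $j=2,\dots,r$ (with $\gamma_1 = r$ fixed), and to show that $\ell_p$, viewed as a function of $\boldsymbol\beta = (\beta_2,\dots,\beta_r) \in \R^{r-1}$, is almost-surely strictly concave; strict concavity then immediately yields that every stationary point is the unique global maximizer. First I would rewrite the two terms of $\ell_p$ in formula \eqref{eqn:maxprofile} in terms of the counting-process data. The first term is $\frac1M \sum_i \int_0^T \log\gamma^{(i)}(s)\,dN_i(s)$; since $dN_i$ puts unit mass at each failure time $X_i^{(j)}$ and $\gamma^{(i)}(X_i^{(j)}) = \gamma_j$, this term equals $\frac1M\sum_i \sum_{j=1}^r \beta_j\,\mathds{1}\{X_i^{(j)}\le T\}$ (with $\beta_1 = \log r$ a constant), which is \emph{linear} in $\boldsymbol\beta$ and hence contributes nothing to the second derivative. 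So all the curvature comes from the second term $-\int_0^T \log\overline\gamma(s)\,d\overline N(s)$.

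The key step is to analyze $-\int_0^T \log\overline\gamma(s)\,d\overline N(s)$. Writing $\overline\gamma(s) = \frac1M\sum_i \sum_{j=1}^r \gamma_j\,\mathds{1}\{N_i(s^-)=j-1\} = \frac1M\sum_i \sum_{j=1}^r e^{\beta_j} R_{ij}(s)$ where $R_{ij}(s) = \mathds{1}\{N_i(s^-)=j-1\}$ is the indicator that subject $i$ is waiting for its $j$-th failure at time $s$, we see $\overline\gamma(s)$ is a positive linear combination of the $e^{\beta_j}$, i.e.\ a log-sum-exp type expression in $\boldsymbol\beta$. The integrator $d\overline N(s) = \frac1M\sum_i dN_i(s)$ is a positive measure that does not depend on $\boldsymbol\beta$. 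Since $\boldsymbol\beta \mapsto \log\big(\sum_j a_j(s) e^{\beta_j}\big)$ is convex for any fixed nonnegative weights $a_j(s)$ (this is the standard convexity of log-sum-exp, and it remains convex even with some $a_j(s)=0$ as long as not all vanish, which holds on the support of $d\overline N$ because some subject is failing there), the function $\boldsymbol\beta\mapsto \log\overline\gamma(s)$ is convex for each $s$, hence $-\int_0^T \log\overline\gamma(s)\,d\overline N(s)$ is concave as an integral of concave functions against a positive measure. Combined with the linear first term, $\ell_p$ is concave in $\boldsymbol\beta$.

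For \emph{strict} concavity I would differentiate twice and identify the Hessian. Using $\partial_{\beta_k}\log\overline\gamma(s) = e^{\beta_k}\overline R_k(s)/\overline\gamma(s)$ with $\overline R_k(s) = \frac1M\sum_i R_{ik}(s)$, a second differentiation gives, at any point, a Hessian of the form $-\int_0^T \big(\mathrm{diag}(p(s)) - p(s)p(s)^\top\big)\,d\overline N(s)$ where $p(s) = (p_2(s),\dots,p_r(s))$ with $p_k(s) = e^{\beta_k}\overline R_k(s)/\overline\gamma(s)$ is a subprobability vector (the $\gamma_1$-component being excluded). Each integrand $\mathrm{diag}(p(s)) - p(s)p(s)^\top$ is positive semidefinite (it is a covariance matrix of a categorical-type random variable), so the Hessian is negative semidefinite, reconfirming concavity. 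To upgrade to definiteness one shows that for every fixed nonzero direction $\mathbf v\in\R^{r-1}$, the quadratic form $\int_0^T \big(\sum_k p_k(s)v_k^2 - (\sum_k p_k(s)v_k)^2\big)\,d\overline N(s)$ is almost-surely strictly positive: this requires exhibiting, with probability one, a set of failure times $s$ of positive $d\overline N$-mass on which at least two of the $p_k(s)$ are simultaneously positive (so the integrand is strictly positive wherever $v$ has unequal coordinates on the active indices), or more carefully an argument that the collection of vectors $\{e_k : \overline R_k(s)>0\}$ arising across the failure times $s$ spans $\R^{r-1}$. With $M$ independent samples and $r\ge 2$ this is a genuine identifiability/richness statement about the data, and I would argue it by noting that almost surely the failure times are all distinct and each subject contributes failures in \emph{every} stage $1,\dots,r$, so every stage-index $k\in\{2,\dots,r\}$ is the active index at the failure time of some subject at some moment; a short combinatorial argument then shows the relevant matrix has full rank a.s. \textbf{The main obstacle} is precisely this last point — turning the generic positive-semidefiniteness of the Hessian into strict positivity almost surely, i.e.\ ruling out a null set of degenerate configurations — since the ``a.s.'' is doing real work here and needs the distinctness of failure times together with the fact that each sample passes through all $r$ stages; the convexity half of the argument is routine.
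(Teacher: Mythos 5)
Your proposal follows essentially the same route as the paper: linearity of the first term in $\boldsymbol{\beta}$, convexity of $\log\overline{\gamma}(s)$ as a log-sum-exp, a Hessian of the form $-\int(\mathrm{diag}(p(s))-p(s)p(s)^\top)\,d\overline{N}(s)$ with positive semidefinite integrands, and strict concavity forcing uniqueness of the stationary point. The obstacle you flag is resolved in the paper by the observation that the quadratic form can vanish in every direction only if at every jump time of $\overline{N}$ all systems occupy the same state, which for $r\geq 2$ (and $M\geq 2$) is a null event because continuity of $F$ makes simultaneous jumps of distinct counting processes almost surely impossible.
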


Maximizing formula \eqref{eqn:maxprofile} leads to the equivalent estimating equations
\begin{equation}\label{eqn:EE}
   U(T;\widehat{\bgamma}_T)_j  \,=\,0\,, \qquad j=2,\ldots, r\,,
   \end{equation}
   with
 \begin{equation*}
    U(t;\bgamma)_j \,=\, \gamma_j\cdot \frac{\partial}{\partial \gamma_j} \ell_p(t;\bgamma)\,=\, \frac{1}{M}\sum_{i=1}^M \int_0^t \left[ \delta_j^{(i)}(s) - \frac{\gamma_j\overline{\delta}_j(s)}{\overline{\gamma}(s)} \right] \,dN_i(s)\,,
    \end{equation*}
where
\begin{equation}\label{eq:deltaquer}
    \delta_j^{(i)}(s) \,=\, \mathds{1}\left\{N_i(s)=j-1\right\}\,,\;\; 1\leq i\leq M\,,
    \quad\text{and}\quad \overline{\delta}_j(s) = \frac{1}{M}\sum_{i=1}^M \delta_j^{(i)}(s)
\end{equation}
for $2\leq j\leq r$. Note that the integrand is upper bounded by $1$, and that $N_1,\dots,N_M$ have finitely many jumps, such that $U(\infty;\bgamma)$ is well-defined. In \cite{kvam2005estimating}, the asymptotic distribution of $\widehat{\bgamma}_T$ is derived  for $T<\infty$ with $\E(\gamma(T))>0$. However, this estimator does not use all the  information in the sample, as the cutoff at $T$ corresponds to a type-I right censoring of the data. This   censoring serves no statistical purpose, but only simplifies the asymptotic analysis of the estimator.
In the sequel, we extend their work by studying the estimator $\widehat{\bgamma}=\widehat{\bgamma}_\infty$ and by giving a refined description of the limit distribution. For this, we introduce the notation $\mathbf{diag}(d_1,\dots,d_k)$ for a diagonal $(k\times k)$-matrix with entries $d_1,\dots,d_k$ and $\mathcal{N}_k(\boldsymbol{a},\mathbf{A})$ for the $k$-dimensional normal distribution with mean vector $\boldsymbol{a}$ and covariance matrix $\mathbf{A}$.

\begin{thm}\label{thm:jointclt}
    %Denote the true load-sharing %parameter by $\gamma$.
    For any $T^*\in (0,\infty]$,  the estimating equations \eqref{eqn:EE} have a unique solution $\widehat{\bgamma}_{T^*}$ with probability tending to one as $M\to \infty$, which is consistent, i.e.,  $\widehat{\bgamma}_{T^*}\to \bgamma$ in probability for $M\to \infty$.
    Moreover, for any $T\in (0,\infty)$ with $\PP (N(T)=r)<1$,  we have
    \begin{align*}
        \sqrt{M} \begin{pmatrix}
         \widehat{\bgamma}_{T^*} - \bgamma\\
            \left[\widehat{\Lambda}_F(t)-\Lambda_F(t)\right]_{t\in[0,T]} \\ 
            \left[\widehat{\Lambda}_F(t, \widehat{\bgamma}_{T^*}) - \Lambda_F(t)\right]_{t\in[0,T]}
        \end{pmatrix}
        \wconv \begin{pmatrix}
            \mathbf{diag}(\gamma_2,\ldots, \gamma_r)\, \overline{\boldsymbol{W}}_2(T^*) \\
            W_1(t)_{t\in [0,T]}  \\
            W_1(t)_{t\in [0,T]} + \boldsymbol{\Psi}(t;\bgamma) \mathbf{diag}(\gamma_2,\ldots, \gamma_r) \overline{\boldsymbol{W}}_2(T^*)
        \end{pmatrix},
    \end{align*}
    in the product space $\R^{r-1}\times D[0,T]\times D[0,T]$ as $M\rightarrow\infty$. 
    Here, the vector $\boldsymbol{\Psi}(t;\bgamma)$ has components 
    \begin{equation*}
    \Psi(t;\bgamma)_{j-1} = \int_0^t \frac{\E(\delta_j(s))}{\E (\gamma(s))}\,ds\qquad\text{with}\quad \delta_j(s)=\mathds{1}\left\{N(s)=j-1\right\}\quad\text{for $j=2,\ldots, r$\,,}
    \end{equation*}
the limit $W_1$ is a centered Gaussian process  with independent increments and variance function
    \begin{equation*}
        \Var(W_1(t))\,=\,\tau(t;\bgamma) \,=\, \int_0^t \frac{1}{\E (\gamma(s))}\, d\Lambda_F(s)\,,
        \end{equation*}   
    and $\overline{\boldsymbol{W}}_2(T^*)\sim\mathcal{N}_{r-1}(0, \mathbf{\Sigma}(T^*;\bgamma)^{-1})$ is a random vector independent of $W_1$, where the matrix $\mathbf{\Sigma}(t;\bgamma)$ has entries
    \begin{equation*}
        \Sigma(t;\bgamma)_{j-1,k-1}\,=\, \int_0^t \left[ \mathds{1}\{j=k\} \frac{\gamma_j \E (\delta_j(s))}{\E  (\gamma(s))} - \frac{\gamma_j \E (\delta_j(s))\cdot \gamma_k \E (\delta_k(s))}{\E (\gamma(s))^2} \right] \E (\gamma(s)) \,d\Lambda_F(s),
    \end{equation*}
    for $j,k=2,\ldots, r$ and $t\geq0$.     In particular, $\tau(T;\bgamma)$ and $\mathbf{\Sigma}(T^*;\bgamma)$ are finite.
   % The weak convergence holds in the product space $D[0,T]\times \R$.
\end{thm}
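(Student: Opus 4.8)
I would work throughout in the reparametrization $\bbeta=(\beta_2,\dots,\beta_r)=(\log\gamma_2,\dots,\log\gamma_r)$, in which $U(t;\bgamma)_j=\partial_{\beta_j}\ell_p(t;\bgamma)$ and $\ell_p$ is concave by Lemma~\ref{la:concave}, and in the martingale framework generated by the counting processes: writing $M_i=N_i-\int_0^{\cdot}\gamma^{(i)}(s)\,d\Lambda_F(s)$ for the compensated counting process of the $i$-th sample and $\overline M=\tfrac1M\sum_i M_i$, the $M_i$ are orthogonal square-integrable martingales with $d\langle M_i\rangle(s)=\gamma^{(i)}(s)\,d\Lambda_F(s)$. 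All limits rest on the uniform laws of large numbers $\sup_{s\ge0}|\overline\gamma(s)-\E(\gamma(s))|\to0$ and $\sup_{s\ge0}|\overline\delta_j(s)-\E(\delta_j(s))|\to0$ almost surely (the relevant indicator classes being Glivenko--Cantelli) together with $\overline N(t)\to\int_0^t\E(\gamma(s))\,d\Lambda_F(s)$ uniformly on compacts.

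First I would establish existence, uniqueness and consistency of $\widehat\bgamma_{T^*}$ for every $T^*\in(0,\infty]$. Since each $N_i$ has exactly $r$ jumps and the integrand defining $U(t;\bgamma)_j$ is bounded by $1$, $U(\infty;\bgamma)$ is well defined; a direct computation (using $\delta_j^{(i)}(s)\gamma^{(i)}(s)=\gamma_j\delta_j^{(i)}(s)$ and $\tfrac1M\sum_i\gamma^{(i)}(s)=\overline\gamma(s)$) shows that the compensator contribution to $U(T^*;\bgamma)$ at the true $\bgamma$ vanishes identically, so $U(T^*;\bgamma)_j=\tfrac1M\sum_i\int_0^{T^*}\bigl[\delta_j^{(i)}(s)-\gamma_j\overline\delta_j(s)/\overline\gamma(s)\bigr]\,dM_i(s)$ is a mean-zero martingale and hence $U(T^*;\bgamma)\to0$ in probability. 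Differentiating once more and passing to the limit gives $\nabla_{\bbeta}U(T^*;\bgamma)\to-\mathbf{\Sigma}(T^*;\bgamma)$, which is negative definite — equivalently nonsingular — as a consequence of identifiability (given $\gamma_1=r$) and the strict concavity in Lemma~\ref{la:concave}; finiteness of $\mathbf{\Sigma}(T^*;\bgamma)$ even for $T^*=\infty$ follows because each entry is dominated by a constant multiple of $\int_0^\infty(1-F(s))^{\rho-1}f(s)\,ds=\rho^{-1}<\infty$ for a suitable $\rho>0$ governed by $\min_j\gamma_j$. Local inversion of the estimating function at $\bgamma$ then produces, with probability tending to one, a root near $\bgamma$, and concavity (Lemma~\ref{la:concave}) forces that root to be the unique global maximizer $\widehat\bgamma_{T^*}$; consistency is immediate.

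Next I would prove the joint weak convergence. The key observation is that $\sqrt M(\widehat\Lambda_F(t)-\Lambda_F(t))=\sqrt M\int_0^t\overline\gamma(s)^{-1}\,d\overline M(s)=\tfrac1{\sqrt M}\sum_i\int_0^t\overline\gamma(s)^{-1}\,dM_i(s)$ and $\sqrt M\,U(T^*;\bgamma)_j=\tfrac1{\sqrt M}\sum_i\int_0^{T^*}\bigl[\delta_j^{(i)}(s)-\gamma_j\overline\delta_j(s)/\overline\gamma(s)\bigr]\,dM_i(s)$ are simultaneously linear functionals of the same i.i.d.\ martingale array, so a functional martingale central limit theorem (Rebolledo, with tightness in $D[0,T]$) yields joint convergence to $(\boldsymbol Z,W_1)$, where $W_1$ is the Gaussian limit of \citet{kvam2005estimating} with variance $\tau(t;\bgamma)=\int_0^t\E(\gamma(s))^{-1}\,d\Lambda_F(s)$ and $\boldsymbol Z\sim\Normal_{r-1}(0,\mathbf{\Sigma}(T^*;\bgamma))$. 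Crucially, the predictable cross-variation between the $j$-th score component and $\widehat\Lambda_F$ equals $\int_0^{t\wedge T^*}\overline\gamma(s)^{-1}\cdot\tfrac1M\sum_i\bigl[\delta_j^{(i)}(s)-\gamma_j\overline\delta_j(s)/\overline\gamma(s)\bigr]\gamma^{(i)}(s)\,d\Lambda_F(s)$, which collapses to $0$ by the same algebraic identities; hence $W_1$ is independent of $\boldsymbol Z$. Combining this with the Taylor expansion $\sqrt M(\widehat\bbeta_{T^*}-\bbeta)=-[\nabla_{\bbeta}U(T^*;\bgamma)]^{-1}\sqrt M\,U(T^*;\bgamma)+o_P(1)$ (valid by consistency and continuity of $\nabla^2\ell_p$) gives $\sqrt M(\widehat\bbeta_{T^*}-\bbeta)\wconv\mathbf{\Sigma}(T^*;\bgamma)^{-1}\boldsymbol Z=:\overline{\boldsymbol W}_2(T^*)\sim\Normal_{r-1}(0,\mathbf{\Sigma}(T^*;\bgamma)^{-1})$, independent of $W_1$, and the delta method for $\gamma_j=e^{\beta_j}$ contributes the factor $\mathbf{diag}(\gamma_2,\dots,\gamma_r)$.

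Finally, for the third coordinate I would linearize $\widehat\Lambda_F(t,\widehat\bgamma_{T^*})-\widehat\Lambda_F(t)=\sum_{j=2}^r\partial_{\gamma_j}\widehat\Lambda_F(t,\tilde{\bgamma})(\widehat\gamma_j-\gamma_j)$ for an intermediate $\tilde{\bgamma}\to\bgamma$, where $\partial_{\gamma_j}\widehat\Lambda_F(t,\bgamma)=-\int_0^t\overline\delta_j(s)\overline\gamma(s)^{-2}\,d\overline N(s)$ converges, uniformly on $[0,T]$, to $-\Psi(t;\bgamma)_{j-1}$; Slutsky's lemma and the joint convergence above then give $\sqrt M(\widehat\Lambda_F(\cdot,\widehat\bgamma_{T^*})-\Lambda_F(\cdot))\wconv W_1-\boldsymbol\Psi(\cdot;\bgamma)\mathbf{diag}(\gamma_2,\dots,\gamma_r)\overline{\boldsymbol W}_2(T^*)$, which has the same law as the stated limit because $\overline{\boldsymbol W}_2(T^*)$ is symmetric and independent of $W_1$. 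The main obstacle I anticipate is the case $T^*=\infty$: it is precisely here that one must upgrade the martingale-CLT and law-of-large-numbers ingredients of \citet{kvam2005estimating} — stated there only up to a fixed horizon with $\E(\gamma(T))>0$ — to the unbounded time axis, controlling the contributions near the right endpoint of the support where $\E(\gamma(s))\to0$; the finitely-many-jumps structure of the $N_i$ and the identity $\int_0^\infty(1-F(s))^{\rho-1}f(s)\,ds=\rho^{-1}$ make this possible, whereas the process-valued parts $\widehat\Lambda_F(\cdot)$ and $\widehat\Lambda_F(\cdot,\widehat\bgamma_{T^*})$ necessarily remain confined to a compact $[0,T]$ since $\tau(\infty;\bgamma)=\infty$.
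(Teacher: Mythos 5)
Your proposal follows essentially the same route as the paper's proof: Rebolledo's martingale CLT for the joint pair $(\widehat{\Lambda}_F(\cdot,\bgamma),\bU(T^*;\bgamma))$ with vanishing predictable covariation yielding independence of $W_1$ and the score limit, uniform convergence of the score Jacobian plus a Taylor expansion for $\widehat{\bgamma}_{T^*}$, a first-order expansion of $\widehat{\Lambda}_F(t,\cdot)$ in $\bgamma$ for the third coordinate, and a uniform-in-$M$ tail bound on the predictable variation (the paper's \eqref{eqn:tailvar}) to reach $T^*=\infty$. The remaining differences are cosmetic — you obtain uniqueness from the concavity in Lemma~\ref{la:concave} in the $\log\bgamma$ parametrization where the paper cites standard estimating-equation results, and you control the tail via $\int_0^\infty(1-F)^{\rho-1}f\,ds=\rho^{-1}$ where the paper bounds $\PP(\widetilde{N}(z)<r)$ through exponential waiting times after the time change $\Lambda_F$ — and the only detail glossed over is the event $\{\overline{\gamma}(s)=0\}$ in the martingale representation of $\widehat{\Lambda}_F$, which the paper disposes of separately via the term $B(t)$ and $\PP(N(T)=r)^M\to0$.
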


The benefit of Theorem \ref{thm:jointclt} is that it not only shows the asymptotic variance but also the asymptotic autocovariance structure of $\widehat{\Lambda}_F(t,\widehat{\bgamma}_{T^*})$, which improves upon previous results in the literature; see \cite{kvam2005estimating}, Theorem 2.
In particular, the limit process does not have independent increments due to the estimation error in $\gamma$.
A second, even more important distinction is that we allow for $T\neq T^*$ and $T^*=\infty$, whereas former statements require $T=T^*<\infty$. 
The latter restriction has been introduced for purely technical reasons to derive a functional central limit theorem for $\widehat{\Lambda}_F(t)$.
In statistical practice, however, we want to set $T^*=\infty$ and use all available data to estimate $\bgamma$. Theorem~\ref{thm:jointclt} now reveals that the joint limit theory holds in essentially the same way. 
Moreover, choosing $T^*=\infty$ yields the smallest asymptotic variance of $\widehat{\bgamma}_{T^*}$ and should therefore be considered the default ($\widehat{\bgamma} = \widehat{\bgamma}_\infty$). To allow for comparison with the literature, the formulation of Theorem \ref{thm:jointclt} also covers the suboptimal case $T^*<\infty$.

For the construction of asymptotic confidence bands for $F$ below, it is essential to choose $T^*$ independently from $T$. Here, our aim is to give a modified version of Theorem \ref{Theorem: ConfBand} that is applicable if $\bgamma$ is unknown. For this, let the quantities $\boldsymbol{\Psi}(u;\bgamma)$,  $\mathbf{\Sigma}(\infty;\bgamma)$, and $\tau(u;\bgamma)$ in Theorem \ref{thm:jointclt} for a standard uniform baseline cdf $G$ be denoted by $\boldsymbol{\Psi}_G(u;\bgamma)$, $\mathbf{\Sigma}_G(\bgamma)$, and $\tau_G(u;\bgamma)$$\;($$= v_{\bgamma} (u)$), which may be determined computationally by using the formulas
    \begin{eqnarray*}
        \E (\delta_j(u))  
        &=&\PP(N(u)=j-1) \\[1ex]
        \,&=&\,\left(\prod_{k=1}^{j-1} \gamma_k\right) \sum_{i=1}^{j} \left(\prod_{k=1,k\neq i}^j (\gamma_k-\gamma_i)\right)^{-1} (1-u)^{\gamma_i}\,,\qquad 2\leq j\leq r
        \end{eqnarray*}
        and
        \begin{equation*}
      \E (\gamma(u)) 
        = \sum_{i=1}^r (1-u)^{\gamma_i} \sum_{j=i}^r \left(\prod_{k=1}^j \gamma_k\right) \left(\prod_{k=1, k\neq i}^{j}(\gamma_k-\gamma_i)\right)^{-1}  
    \end{equation*}
    for $u\in(0,1)$; see Lemma \ref{Lemma: PN_t=r} and its proof. These formulas also extend continuously to the case that $\gamma_i=\gamma_k$ for some $i\neq k$, see Remark \ref{rem:egamma}.

\begin{thm}
	\label{Theorem:ConfBand2}
 Let $q\in (0,1)$ and the function $g:(0,\infty)\rightarrow\R$ be such that $g\geq\varepsilon$ for some $\varepsilon>0$ and $\sqrt{x\log\log x}/g(x)\rightarrow0$ and $g(x)\to\infty$ for $x\rightarrow\infty$. 
 Moreover, let
 \begin{align*}
    \overline{B}^{(M)}\,=\,\left\{(x,y)\in(0,\infty)\times(0,1):\,
    \frac{\sqrt{M}}{g(v_{\widehat{\bgamma}}(\widehat{F}(x)))}\frac{|\widehat{F}(x)-y|}{1-\widehat{F}(x)} \leq e_{\widehat{\bgamma}}(q) \right\} 
    \end{align*}
with $\widehat{F}(t)=1-\exp\{-\widehat\Lambda_F(t; \widehat{\bgamma})\}$, $\widehat{\bgamma}=\widehat{\bgamma}_{\infty}$, and $v_{\bgamma}$ as in formula (\ref{eq:tau}), where $e_{\bgamma}(q)$ denotes the $q$-quantile of the real-valued random variable
    \begin{equation}\label{eq:limitW*}
    \sup_{u\in (0,1)} \frac{\left|W_*(v_{\bgamma}(u)) + \boldsymbol{\Psi}_G(u;\bgamma) \mathbf{diag}(\gamma_2,\ldots, \gamma_r) \overline{W}\right|}{g(v_{\bgamma}(u))}
   \end{equation}
   for a standard Brownian motion $W_*$ and $\overline{W}\sim \mathcal{N}_{r-1}(0, \mathbf{\Sigma}_G(\bgamma)^{-1})$.
   Then $\overline{B}^{(M)}$ is a confidence band for the graph of $F\in\mathcal{F}$ with asymptotic level $q$ in the sense that
 \begin{equation}\label{eqn:confidence-band2}
     \forall\, R>0: \quad \lim_{M\to\infty} \PP \left(\{(t,F(t)):\, t\in (0, R]\}\,\subset\,\overline{B}^{(M)} \right) \geq q. 
 \end{equation}
\end{thm}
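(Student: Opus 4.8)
The plan is to mimic the proof of Theorem \ref{Theorem: ConfBand}, replacing the exact distributional identities (valid for known $\bgamma$) by their asymptotic counterparts coming from Theorem \ref{thm:jointclt}, and to deal with the extra randomness introduced by estimating $\bgamma$. First I would fix $R>0$ and rewrite the event $\{(t,F(t)):t\in(0,R]\}\subset\overline{B}^{(M)}$ as
\begin{equation*}
\sup_{t\in(0,R]}\frac{\sqrt{M}}{g\bigl(v_{\widehat{\bgamma}}(\widehat{F}(t))\bigr)}\frac{|\widehat{F}(t)-F(t)|}{1-\widehat{F}(t)}\le e_{\widehat{\bgamma}}(q),
\end{equation*}
using that $\widehat{F}=1-\exp\{-\widehat\Lambda_F(\cdot;\widehat{\bgamma})\}$ is monotone so that the defining inequality of $\overline{B}^{(M)}$ evaluated along the true graph is exactly this supremum. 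The target is to show the liminf of the probability of this event is $\ge q$. Since $e_{\widehat{\bgamma}}(q)$ is itself random through $\widehat{\bgamma}$, I would separate the two sources of approximation: (i) replace $\widehat{\bgamma}$ by $\bgamma$ wherever it appears as a plug-in (in $v_{\widehat{\bgamma}}$, in $\widehat{F}$ inside $g$, and in $e_{\widehat{\bgamma}}(q)$), using consistency $\widehat{\bgamma}\to\bgamma$ from Theorem \ref{thm:jointclt}; (ii) handle the leading stochastic term by the weak-convergence statement of Theorem \ref{thm:jointclt}.

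The key steps, in order. (1) By consistency of $\widehat{\bgamma}$ and continuity of $\bgamma\mapsto v_{\bgamma}(\cdot)$ and $\bgamma\mapsto(\boldsymbol{\Psi}_G,\mathbf{\Sigma}_G)$ (which follows from Remark \ref{rem:egamma} and the explicit formulas preceding the theorem), together with continuity of the map sending a Gaussian law to the $q$-quantile of the supremum functional in \eqref{eq:limitW*}, one gets $e_{\widehat{\bgamma}}(q)\to e_{\bgamma}(q)$ in probability; here I would note that $g(x)\to\infty$ guarantees the supremum in \eqref{eq:limitW*} is a.s.\ finite and has a continuous distribution function near its $q$-quantile, so the convergence of quantiles is genuine. (2) Using the change-of-variables identity \eqref{eqn:shift}, i.e.\ $\widehat\Lambda_F(t;\bgamma)=\widehat\Lambda_G(F(t);\bgamma)$ and likewise for the estimated-$\bgamma$ version, and substituting $u=F(t)$, the supremum over $t\in(0,R]$ becomes a supremum over $u\in(0,p]$ with $p=F(R)<1$ of
\begin{equation*}
\frac{\sqrt{M}}{g\bigl(v_{\bgamma}(u)+o_P(1)\bigr)}\,\frac{|\widehat{G}_{\widehat{\bgamma}}(u)-u|}{1-u}+o_P(1),
\end{equation*}
where $\widehat{G}_{\widehat{\bgamma}}=1-\exp\{-\widehat\Lambda_G(\cdot;\widehat{\bgamma})\}$; the $o_P(1)$ terms collect the plug-in error in $v$ and the difference between $\widehat{F}$ and $F$ in the denominators, controlled on the compact set $[0,p]$ where $1-u$ is bounded away from zero. (3) Apply the delta method together with Theorem \ref{thm:jointclt} (for the uniform baseline, with $T^*=\infty$ and $T=R$) to get
\begin{equation*}
\sqrt{M}\,\frac{\widehat{G}_{\widehat{\bgamma}}(F(t))-F(t)}{1-F(t)}\;\wconv\;W_*(v_{\bgamma}(u))+\boldsymbol{\Psi}_G(u;\bgamma)\,\mathbf{diag}(\gamma_2,\ldots,\gamma_r)\,\overline{W}
\end{equation*}
as processes in $u=F(t)\in[0,p]$; the factor $1/(1-F(t))$ comes from $\widehat{G}-G=(1-G)(\,\widehat\Lambda_G-\Lambda_G\,)+o_P$, and the Gaussian limit is precisely the (time-changed) process appearing in \eqref{eq:limitW*}. (4) Combine via Slutsky and the continuous mapping theorem: the whole left-hand side converges weakly to $\sup_{u\in(0,p]}|W_*(v_{\bgamma}(u))+\boldsymbol{\Psi}_G(u;\bgamma)\mathbf{diag}(\gamma_2,\ldots,\gamma_r)\overline{W}|/g(v_{\bgamma}(u))$, which is stochastically dominated by the supremum over all $u\in(0,1)$ defining $e_{\bgamma}(q)$. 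Since the threshold $e_{\widehat{\bgamma}}(q)\to e_{\bgamma}(q)$ and the limiting supremum has a continuous distribution function at $e_{\bgamma}(q)$, the portmanteau theorem gives $\liminf_M\PP(\cdots\le e_{\widehat{\bgamma}}(q))\ge \PP(\sup_{u\in(0,p]}(\cdots)\le e_{\bgamma}(q))\ge q$, as desired.

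The main obstacle is step (1)–(3) glued together, namely justifying the joint convergence while the normalizing weight $g(v_{\widehat{\bgamma}}(\widehat F(x)))$ and the critical value $e_{\widehat\bgamma}(q)$ both depend on the \emph{same} estimator $\widehat\bgamma$ that drives the leading process. The clean way around this is to observe that $\widehat\bgamma$ enters the weight and the quantile only through continuous, deterministic functionals evaluated at $\widehat\bgamma\to\bgamma$, so one may replace them by their $\bgamma$-values at the cost of $o_P(1)$ uniformly on the compact range $u\in[0,p]$ — crucially using $g\ge\varepsilon>0$ to keep reciprocals bounded and $\sqrt{x\log\log x}/g(x)\to0$ to absorb the $o_P(1)$ numerator perturbations uniformly (this last condition is exactly what controls the behaviour of $\widehat\Lambda_G-\Lambda_G$ near $u=1$ via the LIL, ensuring the discarded tail $u\in(p,1)$ does not matter and that the weighted process is tight). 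A secondary technical point is the independence of $W_*$ and $\overline{W}$, which is asserted in Theorem \ref{thm:jointclt} and is what makes the limit in \eqref{eq:limitW*} well defined; once invoked, no further argument is needed. Apart from these, the remaining manipulations — the substitution $u=F(t)$, the delta-method linearization of $1-e^{-x}$, and the portmanteau step — are routine, and I would relegate the uniform-control estimates to the Appendix as is done for Theorem \ref{Theorem: ConfBand}.
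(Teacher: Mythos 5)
Your overall strategy is the same as the paper's: transfer to the uniform baseline via \eqref{eqn:shift}, apply the functional delta method to Theorem \ref{thm:jointclt} to obtain weak convergence of the weighted, normalized process on $(0,p]$ with $p=F(R)$, bound the limiting supremum over $(0,p]$ by the supremum over $(0,1)$ defining $e_{\bgamma}(q)$, and absorb the randomness of the threshold $e_{\widehat{\bgamma}}(q)$ through consistency of $\widehat{\bgamma}$. Your steps (2)--(4) are essentially identical to the paper's argument.

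The gap is in your step (1). You assert $e_{\widehat{\bgamma}}(q)\to e_{\bgamma}(q)$ in probability ``by continuity of the map sending a Gaussian law to the $q$-quantile of the supremum functional,'' but that continuity is precisely the hard part and occupies most of the paper's proof. The difficulty is that the supremum in \eqref{eq:limitW*} runs over all of $u\in(0,1)$, i.e.\ over an unbounded time horizon after the substitution $z=v_{\bgamma}(u)$, so continuity in $\bgamma$ of the law of $\tilde T_{\bgamma}=\sup_{u\in(0,1)}|\widetilde W(u;\bgamma)|/g(v_{\bgamma}(u))$ does not follow from continuity of $v_{\bgamma}$, $\boldsymbol{\Psi}_G$ and $\mathbf{\Sigma}_G$ on compacta alone: one must show that the tail contribution $\eta(p;\bgamma)=\sup_{u\in(p,1)}|\widetilde W(u;\bgamma)|/g(v_{\bgamma}(u))$ vanishes as $p\to1$ \emph{uniformly} over a compact neighbourhood $K$ of $\bgamma$. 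The paper does this by bounding $\eta(p;K)$ by an LIL term plus $C^*(K)\|\overline{W}\|/\bar g(v_K(p))$ and then proving the explicit lower bound $v_{\bgamma}(p)\geq ((1-p)^{-\tilde\gamma}-1)/(r\,\overline c(\bgamma)\tilde\gamma)\to\infty$; this is also exactly where the extra hypothesis $g(x)\to\infty$ (absent from Theorem \ref{Theorem: ConfBand}) is consumed, since the drift term $\boldsymbol{\Psi}_G(u;\bgamma)\,\mathbf{diag}(\gamma_2,\dots,\gamma_r)\overline{W}$ is bounded but does not decay as $u\to1$. Moreover, even granted the resulting coupling bound \eqref{eq:continuity-T}, the paper does not prove quantile convergence directly: it runs an $\varepsilon_1,\varepsilon_2$ contradiction argument to show $\PP\left(e_{\bgamma}(q-\varepsilon_1)-\varepsilon_2\geq e_{\widehat{\bgamma}}(q)\right)\to0$ and then invokes absolute continuity of the law of $\tilde T_{\bgamma}$ (a nontrivial cited fact, \cite{lifshits_absolute_1984}) to let $\varepsilon_2\to0$ and $\varepsilon_1\to0$; your claim that the limiting distribution function ``is continuous near its $q$-quantile'' is used but never justified in your sketch. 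None of this makes your route wrong, but these are the substantive ingredients your proposal leaves out.
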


Application of the confidence band in Theorem \ref{Theorem:ConfBand2} to some data set $\mathbf{x}$, say, requires to compute the quantity $e_{\widehat{\bgamma}}(q)$.
Upon realizing the estimator $\widehat{\bgamma}$ based on the given data, $e_{\widehat{\bgamma}(\mathbf{x})}(q)$ can be approximated computationally via Monte Carlo simulations of the random variable in formula \eqref{eq:limitW*} for fixed  $\bgamma=\widehat{\bgamma}(\mathbf{x})$.
A closed-form expression for $e_{\bgamma}$ would of course be preferred, but seems to be analytically intractable to derive.

Finally, to illustrate the asymptotic confidence band in Theorem \ref{Theorem:ConfBand2}, we return to the data example at the end of Section \ref{ss:ABknown} and add the cases (b) and (d) in Figure \ref{fig:example-bands}, forming the counterparts to the cases (a) and (c) for known load-sharing parameters. A comparison of both shows that accounting for the sampling error in estimating $\bgamma$ does not alter the shape of the bands, but makes them slightly wider. 
This corresponds to the fact that the quantile $e_{\bgamma}(q)$ is larger than the quantile $d(q)$.

\section{Data Example}\label{s:data}

We illustrate the proposed confidence bands by means of two data examples from the literature. 

First, we consider a generated data supplied by \cite{kvam2005estimating}, which is supposed to describe pixel failures in plasma displays. 
Here, observations from $M=20$ independent 1-out-of-3 systems are obtained by simulations of the model estimated in \cite{bae_nonlinear_2004}.
The load-sharing parameters are unknown and estimated via the profile-likelihood method described in Section \ref{ss:ABunknown} with $T^*=\infty$ as $\widehat{\bgamma}_{T^*} = \widehat{\bgamma}_\infty = (3, 2.62, 1.25)$ resp.\ $\widehat{\balpha}_\infty = (1, 1.31, 1.25)$. 
Note that $\gamma_1=3$ resp.\ $\alpha_1=1$ is fixed by model assumption to allow for semiparametric identification of the baseline cdf $F$.
In contrast, the estimates reported by \cite{kvam2005estimating} are $\widehat{\balpha}_{KP}=(1, 1.64, 1.11)$. 
The difference may be explained by the fact that they need a smaller $T^*<\infty$ for their asymptotic theory, i.e., they artificially censor the data on the right, which is in fact not necessary as our extended theory reveals. 
Unfortunately, \cite{kvam2005estimating} do not state the value of $T^*$ used for their reported estimates.

As described in Section \ref{ss:ABunknown}, the estimated load-sharing parameter $\widehat{\bgamma}_\infty$ is used as plug-in to nonparametrically estimate the baseline cdf $F$, depicted as solid line in Figure \ref{fig:example-KP}. 
Our construction of confidence bands takes the statistical error in the estimated load-sharing parameters into account; see Theorem \ref{thm:jointclt}. 
For a nominal confidence level of $90\%$, the resulting confidence bands are shown in Figure \ref{fig:example-KP} for different weighting functions $g$, where 
the quantiles $e_{\bgamma}(0.9)$ are determined via 
$10^5$ simulations of the asymptotic distribution.
From the figure, it is evident that the choice of the weighting function affects the exact shape of the band, as every $g$  balances the type-I error differently across the real axis.

\begin{figure}
    \centering
    \begin{subfigure}{0.47\textwidth}
        \includegraphics[width=\textwidth]{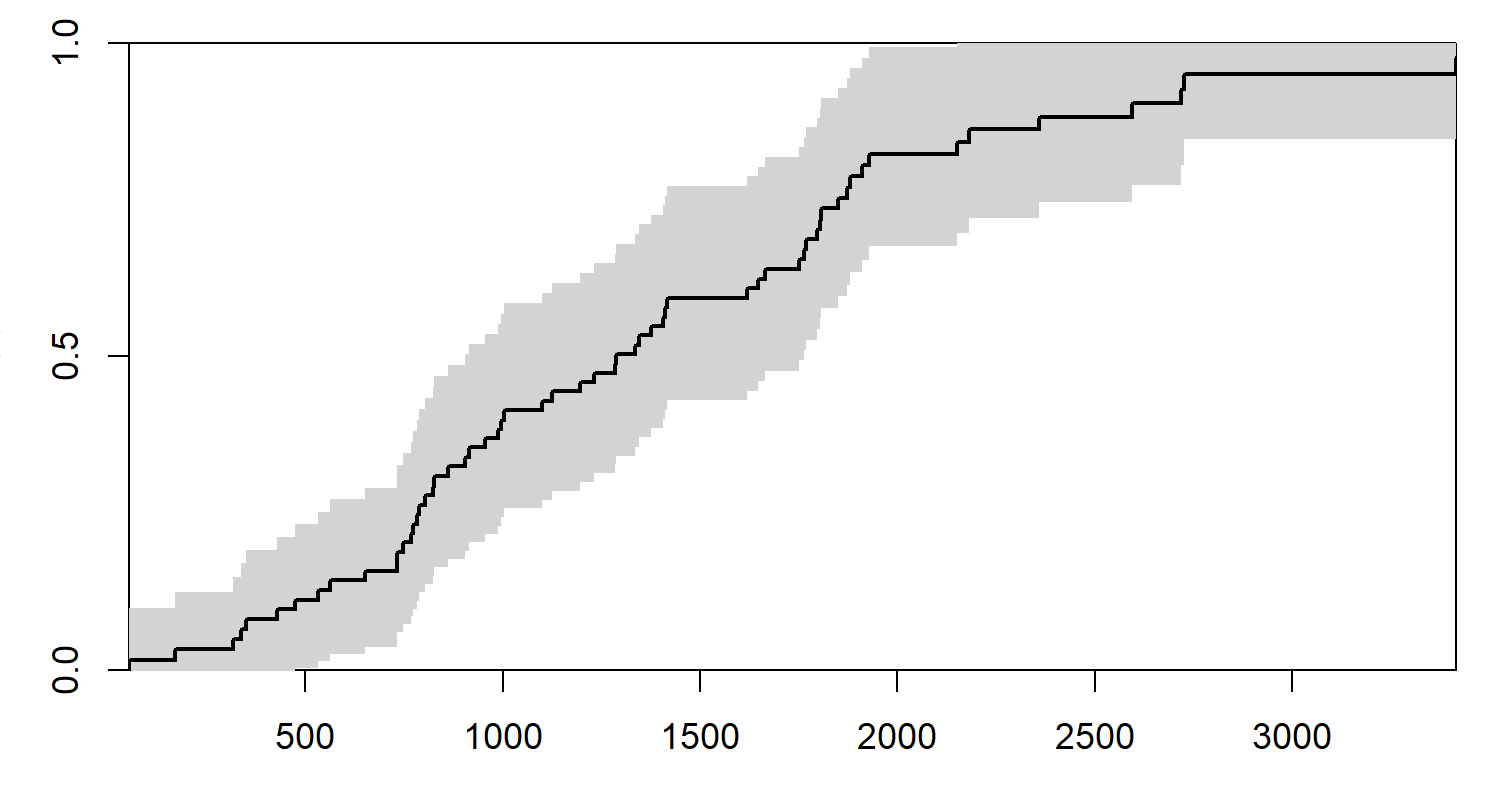}
        \caption{Weighting $g(x)=x^{0.7}+0.1$.}
    \end{subfigure}
    \begin{subfigure}{0.47\textwidth}
        \includegraphics[width=\textwidth]{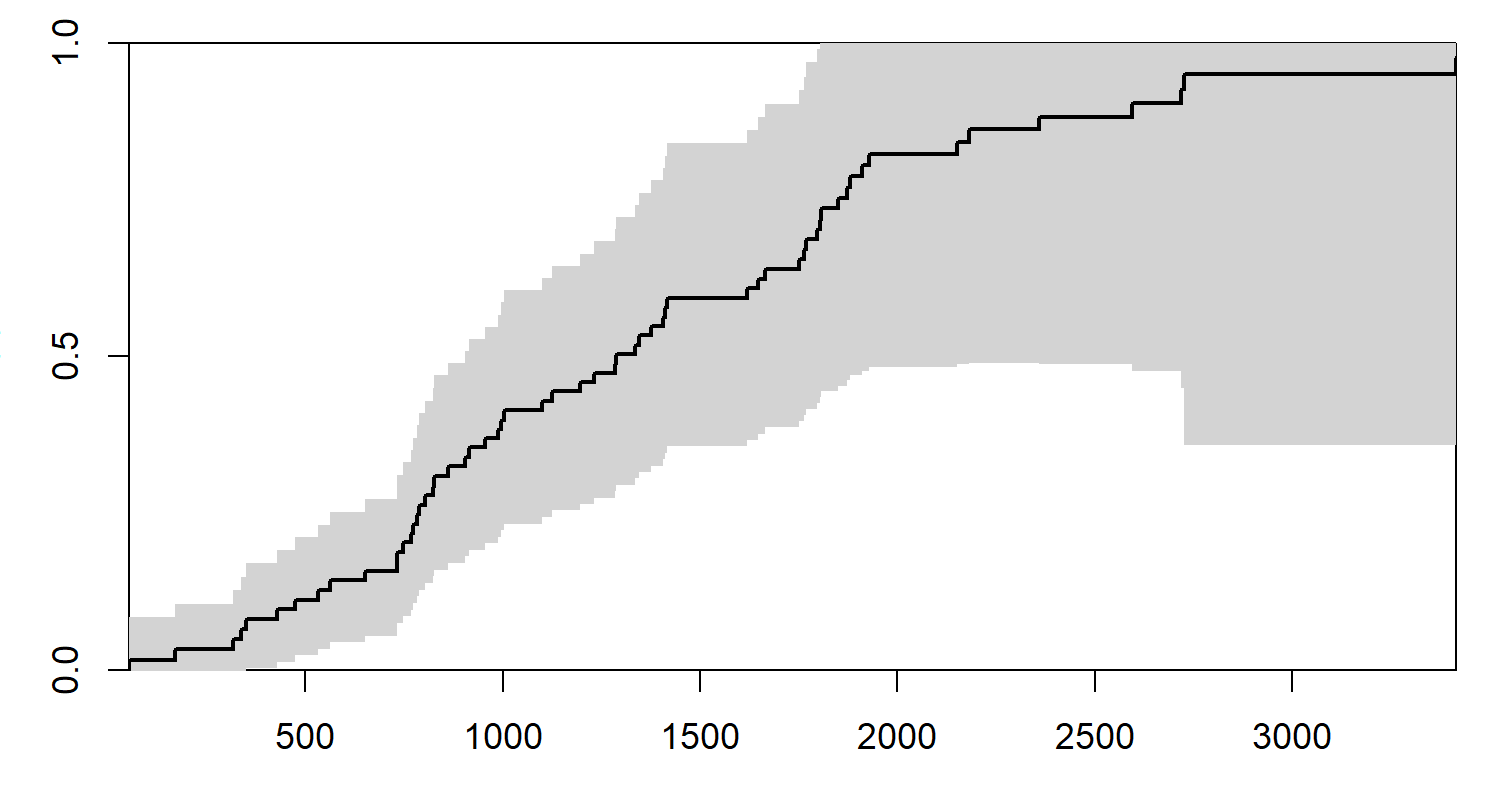}
        \caption{Weighting $g(x)=x+0.05$.}
    \end{subfigure}
    \begin{subfigure}{0.47\textwidth}
        \includegraphics[width=\textwidth]{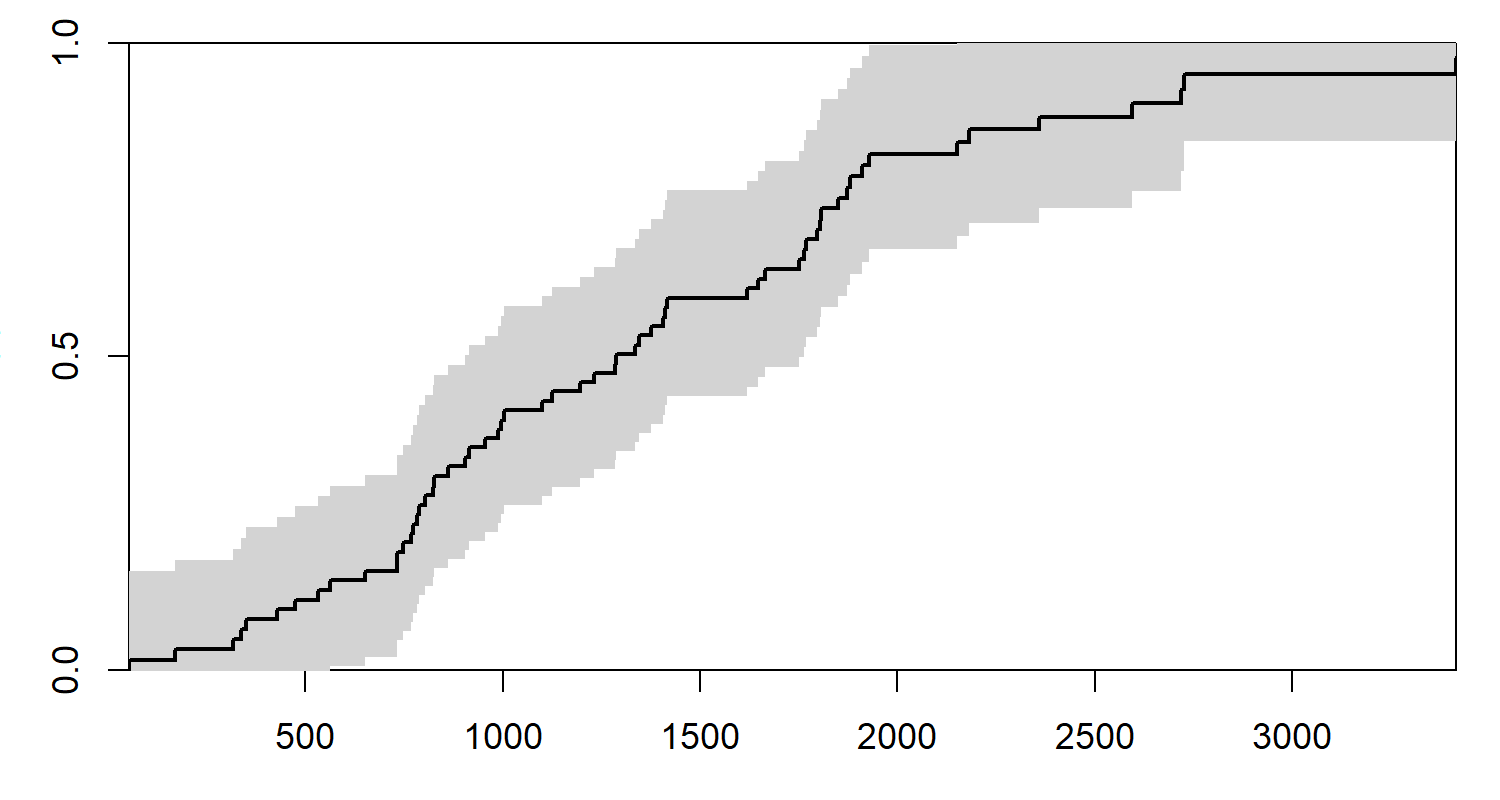}
        \caption{Weighting $g(x)=\sqrt{x \log(x+1)}+0.2$.}
    \end{subfigure}
    \caption{Estimated baseline cdf and confidence bands for $F$ with nominal confidence level 90\% for different weighting functions $g$ based on the data example in \cite{kvam2005estimating}  consisting of observations from 20 independent 1-out-of-3 systems.}
    \label{fig:example-KP}
\end{figure}

Secondly, we apply the same methodology to study a ReliaSoft data set \citep{Rel2002} which has previously been analyzed by \cite{sutar_accelerated_2014}, \cite{kong_cumulative-exposure-based_2016}, \cite{bedbur_inference_2019}, and \cite{MieBed2019}.
The data includes failure times of $M=18$ independent 1-out-of-2 systems with two continuously operating motors each, such that 36 failure times are observed in total. 
The load-sharing parameters are estimated as $\widehat{\bgamma}_\infty = (2, 2.51)$ resp.\ $\widehat{\balpha}_\infty=(1,2.51)$ indicating some load-sharing effect ($\alpha_2\neq1$), which was statistically confirmed by a test 
 with significance level 5\% in  \cite{MieBed2019}.
Using the statistical procedures developed in Section \ref{ss:ABunknown}, we are now able to construct semiparametric confidence bands for the baseline cdf, which are depicted in Figure \ref{fig:example-TM} for a nominal confidence level of 90\% and different weighting functions $g$.

\begin{figure}
    \centering
    \begin{subfigure}{0.47\textwidth}
        \includegraphics[width=\textwidth]{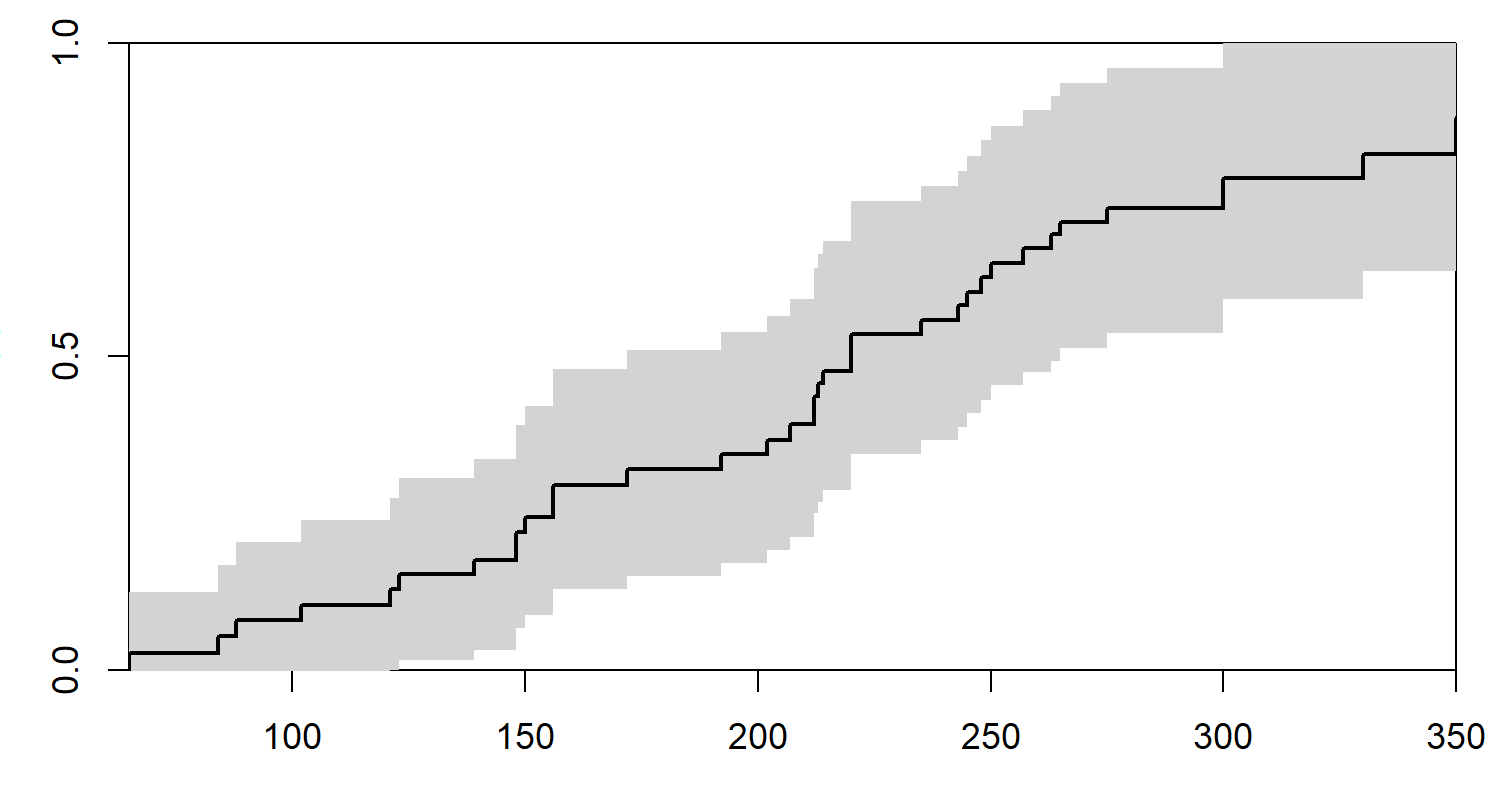}
        \caption{Weighting $g(x)=x^{0.7}+0.1$.}
    \end{subfigure}
    \begin{subfigure}{0.47\textwidth}
        \includegraphics[width=\textwidth]{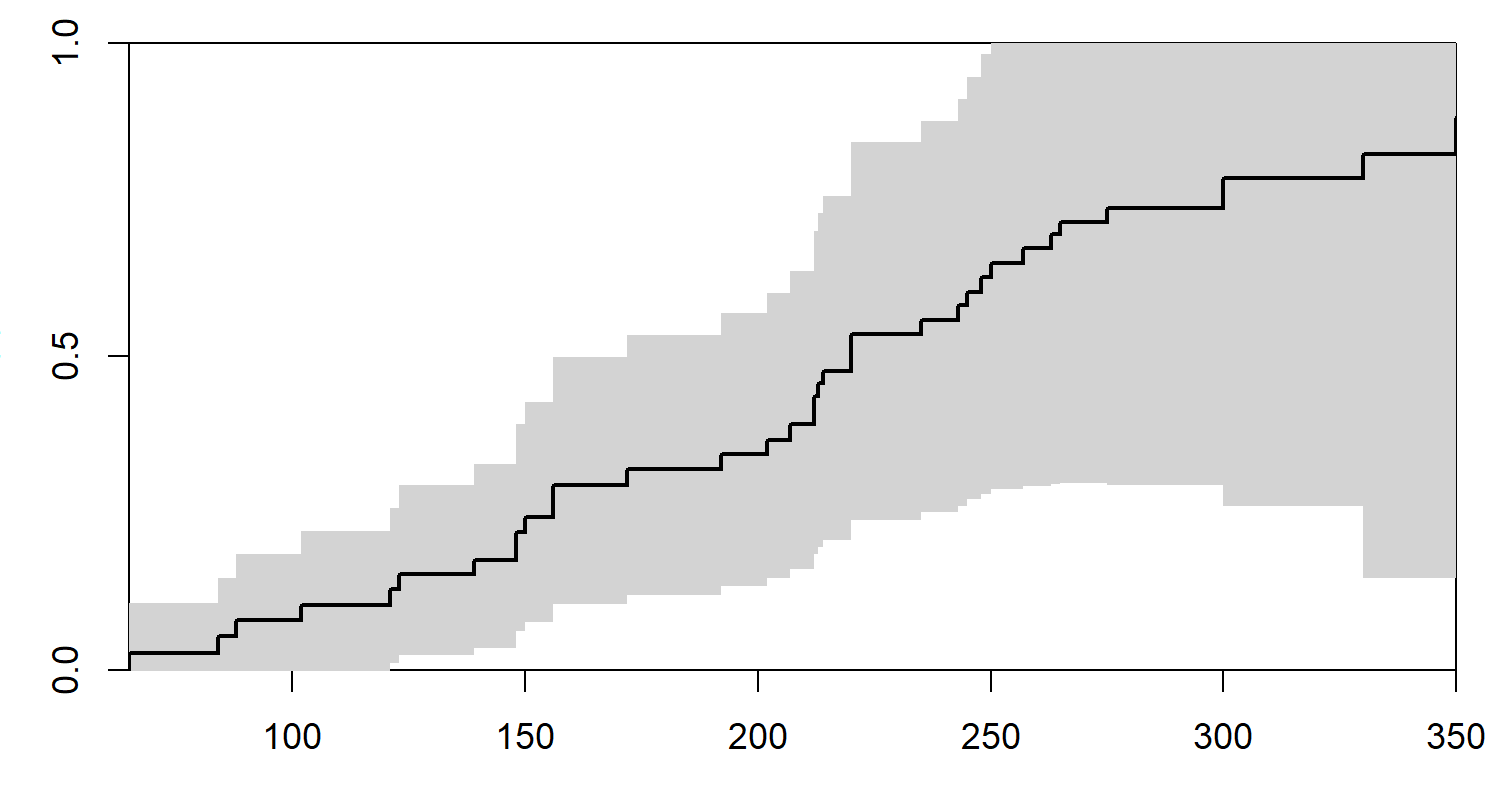}
        \caption{Weighting $g(x)=x+0.05$.}
    \end{subfigure}
    \begin{subfigure}{0.47\textwidth}
        \includegraphics[width=\textwidth]{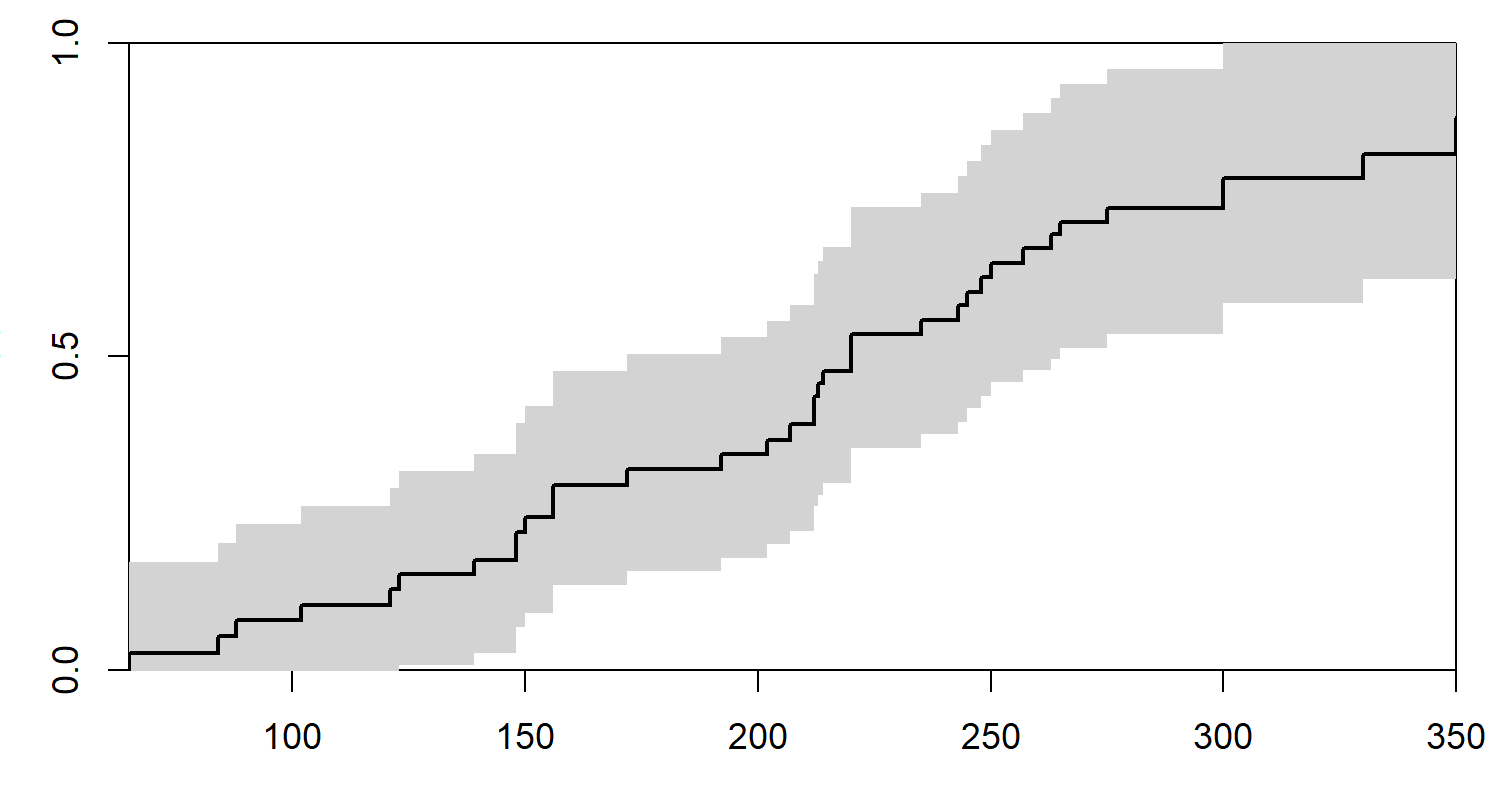}
        \caption{Weighting $g(x)=\sqrt{x \log(x+1)}+0.2$.}
    \end{subfigure}
    \caption{Estimated baseline cdf and confidence bands for $F$ with nominal confidence level 90\% for different weighting functions $g$ based on the two-motors data example in \cite{Rel2002}  consisting of observations from 18 independent 1-out-of-2 systems.}
    \label{fig:example-TM}
\end{figure}

\section{Conclusion}
    Confidence bands for the nonparametric baseline distribution function of a load-sharing system described by sequential order statistics can, in principle, be obtained by inverting exact finite sample tests.
    However, in case of unknown load-sharing parameters, the inversion is untractable, while asymptotic tests may be easily inverted.
    By introducing a suitable weight function, we can design the shape of the confidence band, effectively allocating the type-I error across the support of the baseline distribution.
    In particular, we may construct the bands to be tighter for small survival times. The rigorous treatment of the asymptotic confidence bands is enabled by a refined limit theory for the semiparametric estimators of the model, closing some current gaps in the literature. 

%\appendix
\section*{Appendix: Proofs and Auxiliary Results}
\renewcommand{\thesection}{A}

\begin{proof}[\underline{Proof of Theorem \ref{thm:EBknown}}]
     \noindent Applying formula (\ref{eq:NAErel}), we have 
\begin{eqnarray*}
    \PP\left(\text{gr}(F)\subset B_{\bgamma}\right)\,&=&\,\PP\left(\sup\limits_{t>0} H(\widehat{F}_{\bgamma}(t),F(t))\leq c_{\bgamma}(q)\right)\\[1ex]
    \,&=&\,\PP\left(\sup\limits_{t>0} H(\widehat{G}_{\bgamma}(F(t)),F(t))\leq c_{\bgamma}(q)\right)\\[1ex]
\,&=&\,\PP\left(\sup\limits_{u\in(0,1)} H(\widehat{G}_{\bgamma}(u),u)\leq c_{\bgamma}(q)\right)\\[1ex]
\,&=&\,q
    \end{eqnarray*}
    for every $F\in\mathcal{F}$.
\end{proof}

%-------------------------

\begin{proof}[\underline{Proof of Lemma \ref{la:EN(t)}}]
\noindent According to \citet[Theorem 2.5]{CraKam2001b}, the marginal cdf of the $k^{\text{th}}$ GOSs $U^{(k)}$ based on a standard uniform  distribution and $\bgamma$ is given by
\begin{equation*}
F^{U^{(k)}}(t)\,=\,1-\left(\prod_{j=1}^k \gamma_j\right)\,\sum_{i=1}^k \frac{a_i^{(k)}}{\gamma_i}(1-u)^{\gamma_i}\,,\qquad u\in[0,1]\,,
\end{equation*}
where  the numbers $a_1^{(k)},\dots,a_r^{(k)}$ are defined by $a_i^{(k)}=\prod_{j=1,j\neq i}^k 1/(\gamma_j-\gamma_i)$ for $1\leq i\leq k$ and $1\leq k\leq r$. By using that $a_i^{(k)}=a_i^{(k+1)}(\gamma_{k+1}-\gamma_i)$ for $1\leq i\leq k$ and $1\leq k\leq r-1$, we obtain
 \begin{eqnarray*}
	\PP (N(u)=k)\,&=&\,\PP (N(u)\geq k) - \PP (N(u)\geq k+1)\\
	 \,&=&\, \PP (U^{(k)}\leq u)- \PP (U^{(k+1)}\leq u)\\[1ex]
	\,&=&\, \left(\prod_{i=1}^{k+1} \gamma_i \right) \sum_{i=1}^{k+1} \frac{a_i^{(k+1)}}{\gamma_i} (1-u)^{\gamma_i} - \left(\prod_{i=1}^{k} \gamma_i \right) \sum_{i=1}^{k} \frac{a_i^{(k)}}{\gamma_i} (1-u)^{\gamma_i}\\[1ex]
	\,&=&\,\left(\prod_{i=1}^{k} \gamma_i \right) \left[ \sum_{i=1}^k\frac{a_i^{(k+1)}\gamma_{k+1}-a_i^{(k)}}{\gamma_i}(1-u)^{\gamma_i} + \gamma_{k+1}\frac{a_{k+1}^{(k+1)}}{\gamma_{k+1}} (1-u)^{\gamma_{k+1}} \right]\\[1ex]
	\,&=&\,\left(\prod_{i=1}^{k} \gamma_i \right) \left[ \sum_{i=1}^k a_i^{(k+1)}(1-u)^{\gamma_i} + a_{k+1}^{(k+1)} (1-u)^{\gamma_{k+1}} \right]\\[1ex]
 \,&=&\, \left(\prod_{i=1}^k \gamma_i\right) \sum_{i=1}^{k+1} a_i^{(k+1)} (1-u)^{\gamma_i}\,,\qquad 1\leq k\leq r-1\,,
\end{eqnarray*}
and this representation is also valid for $k=0$ by the usual convention that empty products are equal to 1. By using formula (\ref{eq:gamma(s)}), this yields for $u\in[0,1]$
	\begin{eqnarray*}
\E (\gamma(u))\,&=&\,\sum_{k=1}^r \gamma_{k} \PP (N(u)=k-1)\
	\,=\,\sum_{k=1}^{r} \left(\prod_{j=1}^k \gamma_j\right) \sum_{i=1}^k a_i^{(k)} (1-u)^{\gamma_i}\\[1ex]
	\,&=&\,\sum_{i=1}^r (1-u)^{\gamma_i} \sum_{k=i}^r \left(\prod_{j=1}^k \gamma_j\right)a_i^{(k)}
	\,=\,\sum_{i=1}^r c_i(\bgamma)\,(1-u)^{\gamma_i}. 
\end{eqnarray*}
\end{proof}

%--------------------

\begin{proof}[\underline{Proof of Theorem \ref{Theorem: ConfBand}}]
\noindent
First, note that the assumptions on $g$ ensure that \\$\sup_{z\in(0,\infty)}|W_*(z)|/g(z)<\infty$ almost-surely, since the law of the iterative logarithm ensures that $\limsup_{z\rightarrow\infty}|W_*(z)|/\sqrt{2z\log\log z}=1$ almost-surely. Let $R>0$ and $p=F(R)\in(0,1)$. For $\widehat G=1-\exp\{-\widehat\Lambda_G\}$, formula  (\ref{eqn:hazardlimit}) along with the functional delta method yields that 
\begin{align}
\label{bConfBandProof1}
    \sqrt{M}\left(\widehat G(u)-u\right)\,\wconv\, (1-u)\,W_{\bgamma}(u)\,=\, (1-u)\,W_*(v_{\bgamma}(u))\,,
\end{align}
in the space $D[0,p]$ of cadlag functions on $[0,p]$, where $W_*$ denotes a standard Brownian motion. By using formula (\ref{eqn:shift}) and that $v_{\bgamma}$ is increasing, it follows that
\begin{eqnarray*}
\tilde{T}_{\bgamma,R}^{(M)}\,&=&\,\sup_{t\in (0, R]} \frac{\sqrt{M}}{g(v_{\bgamma}(\widehat{F}(t)))} \,\frac{| \widehat{F}(t)-F(t)|}{1-\widehat{F}(t)}	\\[1ex]
 \,&=&\, \sup_{u\in(0,p]} \frac{\sqrt{M}}{g(v_{\bgamma}(\widehat{G}(u)))} \,\frac{|\widehat{G}(u)-u|}{1-\widehat{G}(u)}\\[1ex]
	    \,&\wconv&\, \sup_{u\in (0,p]} \frac{|W_*(v_{\bgamma}(u))|}{g(v_{\bgamma}(u))}\,=\,
 \sup_{z\in (0,v_{\bgamma}(p)]} \frac{|W_*(z)|}{g(z)}\,,
\end{eqnarray*} 
the $q$-quantile of which will be denoted by $d_{\bgamma,R}(q)$. Here, the weak convergence is true by virtue of the continuous mapping theorem, since $\sqrt{M}(\widehat{G}(u)-u)\wconv (1-u) W_*(v_{\bgamma}(u))$ by formula (\ref{bConfBandProof1}) and $\widehat{G}(u)\wconv u$, and this convergence holds jointly because the second limit is not random. Finally, since $d_{\bgamma,R}(q)\leq d(q)$,
\begin{eqnarray*}
    \lim_{M\to\infty} \PP \left( \{(t, F(t)):\, t\in (0, R]\}\,\subset\,\tilde{B}_{\bgamma}^{(M)}  \right) 
    \,&=&\, \lim_{M\to\infty} \PP \left(\tilde{T}_{\bgamma,R}^{(M)}\leq d(q) \right) \\[1ex]
    \,&\geq&\,\lim_{M\to\infty} \PP \left(\tilde{T}_{\bgamma,R}^{(M)}\leq d_{\bgamma,R}(q) \right) 
    \,=\, q.    
\end{eqnarray*}
\end{proof}
%--------------------

\begin{proof}[\underline{Proof of Lemma \ref{la:concave}}]
    We first show that the right-hand side of equation (\ref{eqn:maxprofile}) is a concave function of $\boldsymbol{\beta}=(\beta_2,\dots,\beta_r)$, where $\beta_j = \log\gamma_j$, $1< j\leq r$ ($\beta_1=\log n$). Since $\log \gamma^{(i)}(s)$ is linear in $\boldsymbol{\beta}$ for $s\in[0,T]$ and $1\leq i\leq M$, the first summand in equation (\ref{eqn:maxprofile}) is linear in $\bbeta$ as well and thus also concave as a function of $\boldsymbol{\beta}$. It is therefore sufficient to show that $\log \overline{\gamma}(s)$ is a convex function of $\bbeta$ for $s\in[0,T]$, as this yields concavity in $\bbeta$ of the second summand in equation (\ref{eqn:maxprofile}). 
    To this end, note that $ \log\overline{\gamma}(s)$ can be represented as $h(\bbeta)=\log \sum_{j=1}^r Z_j(s) \exp(\beta_j)$ with nonnegative random variables $Z_j(s) = \sum_{i=1}^M \mathds{1}\{N_i(s)=j-1\}/M$ for $1\leq j\leq r$. The Hessian matrix of $h$ then has the entries 
    \begin{eqnarray*}
        \frac{\partial^2}{\partial \beta_i\, \partial \beta_k}h(\bbeta)
        \,&=&\, Y_k\,\mathds{1}\{i=k\}   -  Y_i Y_k \,,\qquad 2\leq i,k\leq r\,, 
    \end{eqnarray*}
    with
    \begin{equation*}
    Y_k \,=\,\frac{ Z_k \exp(\beta_k)}{\sum_{j=1}^r Z_j \exp(\beta_j)}\,\geq0\,,\qquad 1\leq k\leq r\,,
    \end{equation*}
    satisfying $\sum_{k=1}^r Y_k =1$, and it is positive semidefinite, because for $a_1=0$ and every vector $\boldsymbol{a}=(a_2,\dots,a_r)\in\R^{r-1}$
    \begin{equation*}
        \sum_{i,k=2}^r \left[\frac{\partial^2}{\partial \beta_i\, \partial \beta_k} h(\bbeta)\right]\,a_i\,a_k
        \,=\, \sum_{k=1}^r a_k^2 Y_k  - \left(\sum_{k=1}^r a_k Y_k\right)^2 \,\geq\,0
    \end{equation*}
    by using Jensen's inequality. Here, the inequality is strict unless $Y_k=1$ for some $k\in\{1,\dots,r\}$, which, in turn, is equivalent to $Z_j=0$ for all but one $j\in\{1,\dots,r\}$. Hence, the right-hand side of equation (\ref{eqn:maxprofile}) is strictly concave in $\bbeta$ unless at all jump times $s^*$ of $\overline{N}(s)$, $Z_j(s^*)=0$ for all but one $j\in\{1,\dots,r\}$ (all systems are in the same state). However, for $r\geq 2$, this event has probability zero, since almost-surely no two of the  counting processes $N_1,\dots,N_M$  jump at the same time due to the  continuity of $F\in\mathcal{F}$. Now, since the right-hand side of equation (\ref{eqn:maxprofile}) is almost-surely strictly concave as a function of $\bbeta$, it has at most one stationary point, which is a global maximum in case of existence. The same is then also true for the mapping $\bgamma\mapsto \ell_p(T;\bgamma)$.
\end{proof}

%--------------------
\begin{lemma}\label{lem:clt}
    %Let $\bgamma^*=(\gamma_1^*,\dots,\gamma_r^*)$ denote the true load-sharing parameter.
    For any $T\in(0,\infty)$ with $\PP (N(T)=r)<1$ and any $T^*\in(0,\infty]$, we have
    \begin{align*}
        \sqrt{M} \left[ \widehat{\Lambda}_F(t,\bgamma)_{t\in[0,T]},\, \bU(T^*;\bgamma) \right]\quad\wconv\quad \left(W_1(t)_{t\in [0,T]}, \, \boldsymbol{W}_2(T^*)\right)
    \end{align*}
    in the product space $D[0,T]\times \R^{r-1}$ as $M\rightarrow\infty$, where $W_1$ and $\boldsymbol{W}_2$ are independent centered Gaussian processes with independent increments in dimension $1$ resp. $r-1$, and
    \begin{equation*}
        \Var(W_1(t))\,=\,\tau(t) \,=\, \int_0^t \frac{1}{\E (\gamma(s))}\, d\Lambda_F(s)
        \end{equation*}
        and
        \begin{align*}
        \Cov(W_2(t))_{j-1,k-1}\,&=\,\Sigma(t)_{j-1,k-1}\,\\
        &=\, \int_0^t \left[ \mathds{1}\{j=k\} \frac{\gamma_j \E (\delta_j(s))}{\E  (\gamma(s))} - \frac{\gamma_j \E (\delta_j(s))\cdot \gamma_k \E (\delta_k(s))}{\E (\gamma(s))^2} \right] \E (\gamma(s)) \,d\Lambda_F(s),
    \end{align*}
    for $j,k=2,\ldots, r$ and $t>0$. In particular, $\tau(T)$ and $\mathbf{\Sigma}(T^*)=(\Sigma(T^*)_{j-1,k-1})$ are finite.
\end{lemma}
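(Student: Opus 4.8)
The plan is to prove Lemma~\ref{lem:clt} with the martingale central limit theorem, casting both quantities as locally square-integrable martingales in the filtration generated by $N_1,\dots,N_M$. Write $M_i(t)=N_i(t)-\int_0^t\gamma^{(i)}(s)\lambda_F(s)\,ds$ for the compensated counting process of system~$i$ and $\overline{M}=\frac1M\sum_{i=1}^MM_i$. Since $F$ is continuous, no two of the $N_i$ jump simultaneously, hence $\langle M_i,M_k\rangle=0$ for $i\neq k$ and $\langle M_i\rangle(t)=\int_0^t\gamma^{(i)}(s)\lambda_F(s)\,ds$. Using $d\overline{N}=d\overline{M}+\overline{\gamma}\,\lambda_F\,ds$ and $J(s)=\mathds{1}\{\overline{\gamma}(s)>0\}$, one gets
\begin{equation*}
\sqrt{M}\bigl(\widehat{\Lambda}_F(t,\bgamma)-\Lambda_F(t)\bigr)
=\frac1{\sqrt M}\sum_{i=1}^M\int_0^t\frac{J(s)}{\overline{\gamma}(s)}\,dM_i(s)
-\sqrt M\int_0^t\bigl(1-J(s)\bigr)\lambda_F(s)\,ds=:Z_1^{(M)}(t)-R^{(M)}(t),
\end{equation*}
where the bias term $R^{(M)}$ is uniformly negligible on $[0,T]$ because $\{\exists\, s\leq T:\overline{\gamma}(s)=0\}=\{N_i(T)=r\ \forall\, i\}$ has probability $\PP(N(T)=r)^M\to0$ under the hypothesis $\PP(N(T)=r)<1$. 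For the score, substituting $\delta_j^{(i)}(s)\gamma^{(i)}(s)=\gamma_j\delta_j^{(i)}(s)$ and $\frac1M\sum_i\gamma^{(i)}=\overline{\gamma}$, $\frac1M\sum_i\delta_j^{(i)}=\overline{\delta}_j$ shows that all compensator terms in $U(t;\bgamma)_j$ cancel, leaving the exact representation $\sqrt M\,U(t;\bgamma)_j=\frac1{\sqrt M}\sum_{i=1}^M\int_0^t\bigl[\delta_j^{(i)}(s)-\gamma_j\overline{\delta}_j(s)/\overline{\gamma}(s)\bigr]\,dM_i(s)$.

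The core of the argument is the evaluation of the three predictable (co)variation processes. By orthogonality of the $M_i$, $\langle Z_1^{(M)}\rangle(t)=\int_0^t\frac{J(s)}{\overline{\gamma}(s)}\lambda_F(s)\,ds$; expanding $[\delta_j^{(i)}-\gamma_j\overline{\delta}_j/\overline{\gamma}][\delta_k^{(i)}-\gamma_k\overline{\delta}_k/\overline{\gamma}]\gamma^{(i)}$ with $\delta_j^{(i)}\delta_k^{(i)}=\mathds{1}\{j=k\}\delta_j^{(i)}$ and averaging over $i$ gives
\begin{equation*}
\langle\sqrt M\,U_j,\sqrt M\,U_k\rangle(t)=\int_0^t\Bigl[\mathds{1}\{j=k\}\,\gamma_j\overline{\delta}_j(s)-\frac{\gamma_j\overline{\delta}_j(s)\,\gamma_k\overline{\delta}_k(s)}{\overline{\gamma}(s)}\Bigr]\lambda_F(s)\,ds,
\end{equation*}
and, decisively, $\frac1M\sum_i[\delta_j^{(i)}\gamma^{(i)}-\gamma_j\overline{\delta}_j\gamma^{(i)}/\overline{\gamma}]=\gamma_j\overline{\delta}_j-\gamma_j\overline{\delta}_j=0$, so $\langle Z_1^{(M)},\sqrt M\,U_j\rangle\equiv 0$. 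By the law of large numbers, $\overline{\gamma}(s)\to\E(\gamma(s))$ and $\overline{\delta}_j(s)\to\E(\delta_j(s))$ uniformly, and since $\E(\gamma(s))\geq(\min_k\gamma_k)\,\PP(N(s)<r)\geq(\min_k\gamma_k)\bigl(1-\PP(N(T)=r)\bigr)>0$ for $s\in[0,T]$ — the one further use of the hypothesis — these variation processes converge in probability to $\tau(t)$, $\mathbf{\Sigma}(t)$ and $0$, all finite on $[0,T]$ because $\E(\gamma(\cdot))$ is bounded away from $0$ there. The Lindeberg condition is immediate for the $\sqrt M\,U_j$ (integrands bounded by $1+\gamma_j/\min_k\gamma_k$, jumps $O(M^{-1/2})$) and holds for $Z_1^{(M)}$ on the event $\{\inf_{[0,T]}\overline{\gamma}>0\}$, whose probability tends to one, on which its jumps are likewise $O(M^{-1/2})$. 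Rebolledo's martingale central limit theorem then yields joint weak convergence of $(Z_1^{(M)},\sqrt M\,\bU(\cdot;\bgamma))$ to continuous Gaussian martingales $(W_1,\boldsymbol{W}_2)$ — hence with independent increments — having the stated variation functions, and the vanishing cross-variation forces $W_1$ and $\boldsymbol{W}_2$ to be independent. Absorbing $R^{(M)}$, restricting $W_1$ to $[0,T]$, evaluating at $t=T^*$, and applying the continuous mapping theorem settle the case $T^*<\infty$; the $W_1$-marginal here reproduces \eqref{eqn:hazardlimit}.

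The genuinely new point, and the step I expect to be the main obstacle, is the case $T^*=\infty$. Then $\sqrt M\,\bU(\cdot;\bgamma)$ is a martingale on $[0,\infty)$ that is almost-surely constant after $\max_i X_i^{(r)}$, so I would obtain convergence on each fixed window $[0,S]$ as above and then let $S\to\infty$, which requires a tail bound on the variation process uniform in~$M$. Since the integrand above is dominated by $C\,\overline{\delta}_j(s)\lambda_F(s)$ and $\E\!\int_S^\infty\overline{\delta}_j(s)\lambda_F(s)\,ds\leq\E\bigl[\Lambda_F(X^{(j)})\,\mathds{1}\{X^{(j)}>S\}\bigr]\to0$ as $S\to\infty$ (using $\E\Lambda_F(X^{(j)})=\E\Lambda_G(U^{(j)})=\sum_{l=1}^j1/\gamma_l<\infty$), Lenglart's inequality bounds $\PP\bigl(\|\sqrt M(\bU(\infty;\bgamma)-\bU(S;\bgamma))\|>\eps\bigr)$ by a quantity small uniformly in~$M$ once $S$ is large; the same estimate shows $\mathbf{\Sigma}(S)\to\mathbf{\Sigma}(\infty)$ and that $\mathbf{\Sigma}(\infty)$ is finite. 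A standard approximation argument then upgrades the $[0,S]$-convergence to convergence of $\sqrt M\,\bU(\infty;\bgamma)$, jointly with $Z_1^{(M)}$ on $[0,T]$ and with the same independence structure, completing the proof. The fussy parts will be this uniform-in-$M$ tail control and the careful handling of $1/\overline{\gamma}$ on the (asymptotically vanishing but not a.s.\ empty) set where all $M$ systems have already failed; everything else is routine counting-process calculus.
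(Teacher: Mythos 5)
Your proposal is correct and follows essentially the same route as the paper: the same martingale decomposition of $\widehat{\Lambda}_F$ into a stochastic-integral term plus a bias term killed by $\PP(N(T)=r)^M\to0$, the same score representation with vanishing cross-variation, Rebolledo's theorem for the joint limit, and a uniform-in-$M$ tail bound on the predictable variation to pass to $T^*=\infty$. The only cosmetic differences are that you derive the covariation identities the paper imports from \cite{kvam2005estimating}, and your tail estimate goes through $\E\,\Lambda_F(X^{(j)})=\sum_{l\leq j}1/\gamma_l<\infty$ plus Lenglart, whereas the paper bounds $\int_{\Lambda_F(t)}^\infty\PP(\widetilde N(z)<r)\,dz$ via the exponential spacings directly.
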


\begin{proof}[\underline{Proof of Lemma \ref{lem:clt}}]
    First, we show that, for any $t\in [0,\infty]$, $ \langle \sqrt{M}\, \bU(t;\bgamma) \rangle \to \mathbf{\Sigma}(t)$ in probability as $M\rightarrow\infty$. According to \cite{kvam2005estimating}, Lemma 1, we have the identity 
    \begin{equation*}
        U(t;\bgamma)_j 
        \,=\, \frac{1}{M}\sum_{i=1}^M \int_0^{t} \left[ \delta^{(i)}_j(s) - \frac{\gamma_j \overline{\delta}_j(s)}{\overline{\gamma}(s)} \right]\, \left( dN_i(s) - \gamma^{(i)}(s)\, d\Lambda_F(s) \right)\,.
    \end{equation*}
    This is a martingale, and its predictable quadratic variation is 
    \begin{equation*}
        \left\langle \sqrt{M}\, \bU(t;\bgamma) \right\rangle_{j-1,k-1} 
        \,=\, \int_0^t \left[ \mathds{1}\{j=k\} \frac{\gamma_j \overline{\delta}_j(s)}{\overline{\gamma}(s)} \;-\; \frac{\gamma_j \overline{\delta}_j(s)\cdot \gamma_k \overline{\delta}_k(s) }{\overline{\gamma}(s)^2} \right]\, \overline{\gamma}(s)\, d\Lambda_F(s),
    \end{equation*}
    for $j,k=2,\ldots,M$; see the proof of Lemma 2 in \cite{kvam2005estimating}. 
    Moreover, the authors prove that $\langle \sqrt{M}\, \bU(t;\bgamma)\rangle \to \mathbf{\Sigma}(t)$ in probability as $M\to\infty$, for any $t$ such that $\inf_{s\in [0,t]} \sum_{j=1}^r \gamma_j \PP (N(s)=j-1) >0$. 
    Since $\gamma_j>0$ for $1\leq j\leq r$, the latter condition is equivalent to $\PP(N(t)=r)<1$, and also equivalent to $\E (\gamma(t))>0$. However, it can be shown that this limit holds for all $t\in[0,\infty]$. To this end, note that by formulas (\ref{eq:hatLa}), (\ref{eq:gammaI}), and (\ref{eq:deltaquer}), we find
    \begin{equation*}
    \overline{\gamma}(s)
        \,=\, \frac{1}{M}\sum_{i=1}^M   \sum_{j=1}^r  \gamma_j\mathds{1}\{N_i(s) = j-1\} 
        \,=\,
     \sum_{j=1}^r \gamma_j \overline{\delta}_j(s) 
    \end{equation*}
   and thus
    \begin{align}
        &\E  \left| \left\langle \sqrt{M}\, \bU(\infty;\bgamma) \right\rangle_{j-1,k-1} - \left\langle \sqrt{M}\, \bU(t;\bgamma) \right\rangle_{j-1,k-1} \right| \nonumber \\
        \leq\, &\E  \int_t^\infty \left| \mathds{1}\{j=k\} \frac{\gamma_j \overline{\delta}_j(s)}{\overline{\gamma}(s)} \;-\; \frac{\gamma_j \overline{\delta}_j(s)\cdot \gamma_k \overline{\delta}_k(s) }{\overline{\gamma}(s)^2} \right|\, \overline{\gamma}(s)\, d\Lambda_F(s)\nonumber \\[1ex]
        \leq\, &\int_t^\infty \E (\overline{\gamma}(s))\, d\Lambda_F(s)\nonumber \\[1ex]
        =\, &\int_t^\infty \E (\gamma(s))\, d\Lambda_F(s) \nonumber \\[1ex]
        \leq\, &\left(\max_{1\leq j\leq r} \gamma_j\right)\int_t^\infty \PP \left(N(s)<r\right)\, d\Lambda_F(s)  \nonumber\\[1ex]
        =\, &\left(\max_{1\leq j\leq r} \gamma_j\right)\int_t^\infty \PP \left(\widetilde{N}(\Lambda_F(s))<r\right)\, d\Lambda_F(s) \nonumber\\[1ex]
        =\,  &\left(\max_{1\leq j\leq r} \gamma_j\right) \int_{\Lambda_F(t)}^\infty \PP \left(\widetilde{N}(z)<r\right)\, dz\,, \label{eqn:tailvar} 
    \end{align}
    where $\widetilde{N}$ is the counting process corresponding to GOSs with 
 a standard exponential baseline cdf and load-sharing parameter $\gamma_1,\dots,\gamma_r$.
    Note that the waiting times $L_1,\ldots, L_r$, say, between jumps of $\widetilde{N}$ are independent exponential random variables with rate parameters $\gamma_1,\ldots, \gamma_r$.
    Hence, we obtain
    \begin{align*}
        \PP \left(\widetilde{N}(z) <r\right) 
        \,\leq\, \sum_{j=1}^r \PP \left(L_j>\frac{z}{r}\right) 
        \,=\, \sum_{j=1}^r \exp\left(-\,\frac{\gamma_j z}{r}\right)\,,
    \end{align*}
    and the integral in formula \eqref{eqn:tailvar} is seen to be finite and vanishes for $t\to \infty$. This establishes the convergence $\langle \sqrt{M}\, \bU(t;\bgamma)\rangle \to \mathbf{\Sigma}(t)$ in probability as $M\rightarrow\infty$, for any $t\in [0,\infty]$.

    To obtain the limit of $\widehat{\Lambda}_F$, we  consider the decomposition
    \begin{equation*}
        \sqrt{M}\,(\widehat{\Lambda}_F(t,\bgamma)-\Lambda_F(t)) \,=\,A(t)-B(t)
        \end{equation*}
        with
        \begin{align*}
            &A(t)\,=\, \sqrt{M}\int_0^t \frac{\mathds{1}\{\overline{\gamma}(s)>0\}}{\overline{\gamma}(s)}\, \left(d\overline{N}(s) - \overline{\gamma}(s) \, d\Lambda_F(s)\right)
        \qquad\text{and}\\
        &B(t)\,=\,\sqrt{M}\int_{0}^t \mathds{1}\{\overline{\gamma}(s)=0\}\, d\Lambda_F(s)\,.
        \end{align*}
        We may bound the second term as 
    \begin{align*}
        \E (B(t)) &=  \sqrt{M} \int_0^t \PP \left(N_1(s)=r, \ldots, N_M(s)=r \right)\, d\Lambda_F(s) \\
        &\leq \sqrt{M} \,\Lambda_F(t)\, \PP (N(T)=r)^M.
    \end{align*}
    As $\PP (N(T)=r)<1$, the latter term tends to zero for $M\rightarrow\infty$.
    Regarding the local martingale $A(t)$, it is shown in \cite{kvam2005estimating}  that
    \begin{equation*}
        \langle A(t)\rangle \longrightarrow \int_0^t \frac{1}{\E (\gamma(s))}\, d\Lambda_F(s)\quad\text{in probability as $M\rightarrow\infty$}
        \end{equation*}
        and
        \begin{equation*}
        \langle A(t), U(t;\gamma)_{j-1}\rangle = 0, \qquad j=2,\ldots, r
    \end{equation*}
    ($A(t)$ coincides with the term $\Psi_3$, therein).    Since the jump sizes of $\widehat{\Lambda}_F$ and $U(t;\gamma)$ tend to zero, Rebolledo's martingale limit theorem (see, e.g., \cite{andersen2012statistical}, II.5.1) yields that 
    \begin{eqnarray*}
        &&\sqrt{M} \left( \widehat{\Lambda}_F(\min(t,T),\bgamma)-\Lambda_F(\min(t,T)),\; \bU(t;\gamma) \right)_{t\in [0, T^*]} \\[1ex]
        \wconv &&\left(W_1(\min(t,T)), \; \boldsymbol{W}_2(t)\right)_{t\in [0, T^*]}.
    \end{eqnarray*}
    This immediately proves the claim of the lemma for $T^*<\infty$.
    The case $T^*=\infty$ follows from formula \eqref{eqn:tailvar}, which implies $\limsup_{M\to \infty}\sqrt{M} |\bU(t;\gamma)-\bU(\infty;\gamma)|\to 0$ in probability as $t\to\infty$.
\end{proof}

\begin{lemma}\label{lem:uniformgradient}
    For any $T\in[0,\infty]$ and any compact set $K\subset \{r\}\times(0,\infty)^{r-1}$, it holds that
    \begin{equation*}
        \sup_{\bgamma^*\in K}\left| \frac{\partial}{\partial \gamma_k^*} \bU(T;\bgamma^*) - Z(T;\bgamma^*)_{j,k} \right| \longrightarrow0\quad\text{in probability as $M\rightarrow\infty$}
    \end{equation*}
    with limit
    \begin{equation*}
        Z(T;\bgamma^*)_{j,k} \,=\,  \int_0^T \left[ -\,\frac{\E (\delta_j(s))}{\E (\gamma^*(s))} \mathds{1}\{j=k\}  + \frac{\gamma^*_j \E (\delta_j(s)) \E (\delta_k(s))}{ \E (\gamma^*(s))^2} \right] \E (\gamma(s))\,d\Lambda_F(s),
    \end{equation*}
    for $j,k=2,\ldots, r$, and the matrix $\mathbf{Z}(T;\bgamma^*)=(Z(T;\bgamma^*)_{j,k})\in\R^{(r-1) \times (r-1)}$ is regular.
\end{lemma}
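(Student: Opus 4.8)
The plan is to obtain an explicit formula for the gradient of $\bU$, split it into a drift and a noise part via the Doob--Meyer decomposition of $\overline N$, show that the drift converges uniformly over $\bgamma^*\in K$ to $\mathbf Z(T;\bgamma^*)$ while the noise vanishes uniformly, and finally verify regularity by a symmetrisation argument. I would first differentiate: in $U(T;\bgamma^*)_j=\frac{1}{M}\sum_{i=1}^M\int_0^T[\delta_j^{(i)}(s)-\gamma^*_j\overline\delta_j(s)/\overline\gamma^*(s)]\,dN_i(s)$ only the term $\gamma^*_j\overline\delta_j(s)/\overline\gamma^*(s)$ depends on $\bgamma^*$, where $\overline\gamma^*(s)=\sum_{l=1}^r\gamma^*_l\overline\delta_l(s)$, and since $\overline\gamma^*(s)>0$ at every jump time of $\overline N$ (some system is active there) one gets, for each $j,k\in\{2,\ldots,r\}$,
\[
\frac{\partial}{\partial\gamma^*_k}U(T;\bgamma^*)_j = -\int_0^T\phi_{jk}(s;\bgamma^*)\,d\overline N(s),\qquad
\phi_{jk}(s;\bgamma^*)=\frac{\mathds{1}\{j=k\}\,\overline\delta_j(s)}{\overline\gamma^*(s)}-\frac{\gamma^*_j\,\overline\delta_j(s)\,\overline\delta_k(s)}{\overline\gamma^*(s)^2}.
\]
Using $d\overline N(s)=\overline\gamma(s)\,d\Lambda_F(s)+dM^{\overline N}(s)$, in which $\overline\gamma(s)=\sum_l\gamma_l\overline\delta_l(s)$ is built from the \emph{true} parameter and $M^{\overline N}$ is a local martingale, I would split this into a drift part $-\int_0^T\phi_{jk}\overline\gamma\,d\Lambda_F$ and a noise part $-\int_0^T\phi_{jk}\,dM^{\overline N}$. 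Throughout I use that on $\{\overline\gamma^*(s)>0\}$ one has $\overline\delta_l(s)/\overline\gamma^*(s)\le1/\gamma^*_l$, whence $|\phi_{jk}(s;\bgamma^*)|\le1/\gamma^*_j+1/\gamma^*_k\le2/c_K$ with $c_K:=\min_{\bgamma^*\in K}\min_l\gamma^*_l>0$, and that all $\bgamma^*$-derivatives of $\phi_{jk}$ are again bounded by products of such ratios, so $\bgamma^*\mapsto\phi_{jk}(s;\bgamma^*)$ is Lipschitz on $K$ with a constant $C_K$ independent of $s$.

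For the drift part I would invoke the uniform law of large numbers $\sup_{s\ge0}|\overline\delta_l(s)-\E(\delta_l(s))|\to0$ almost surely for each $l$ (the summands $s\mapsto\mathds{1}\{N_i(s-)=l-1\}$ are indicators of the intervals $(X_i^{(l-1)},X_i^{(l)}]$, a Glivenko--Cantelli class), which also yields $\overline\gamma(s)\to\E(\gamma(s))$ and $\overline\gamma^*(s)\to\E(\gamma^*(s))$ uniformly in $s$ and in $\bgamma^*\in K$. Splitting $\int_0^T=\int_0^{t_0}+\int_{t_0}^{T_0}+\int_{T_0}^T$ (with $T_0=T$ if $T<\infty$): the piece over $(0,t_0]$ is at most $(2/c_K)(\max_l\gamma_l)\Lambda_F(t_0)$, which is deterministic and vanishes as $t_0\to0$; the piece over $(T_0,\infty)$ is at most $(2/c_K)\int_{T_0}^\infty\overline\gamma\,d\Lambda_F$, of expectation $(2/c_K)\int_{T_0}^\infty\E(\gamma(s))\,d\Lambda_F(s)\to0$ as $T_0\to\infty$ by the tail bound \eqref{eqn:tailvar} from the proof of Lemma~\ref{lem:clt}; and on any $[t_0,T_0]\subset(0,\infty)$ the denominators are bounded below uniformly in $\bgamma^*$ (since $\overline\gamma^*(s)\ge r\,\overline\delta_1(s)$ and $\E(\delta_1(s))=\PP(X^{(1)}>s)\ge\PP(X^{(1)}>T_0)>0$), so uniform convergence of the $\overline\delta_l$ gives $\sup_{\bgamma^*\in K}\bigl|\int_{t_0}^{T_0}\phi_{jk}\overline\gamma\,d\Lambda_F-\int_{t_0}^{T_0}\overline\phi_{jk}\,\E(\gamma)\,d\Lambda_F\bigr|\to0$ in probability, where $\overline\phi_{jk}$ denotes $\phi_{jk}$ with $\overline\delta_l,\overline\gamma^*$ replaced by $\E(\delta_l),\E(\gamma^*)$; the same bounds control the limit. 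An $\varepsilon/3$ argument then shows that the drift part tends, uniformly over $\bgamma^*\in K$, to $-\int_0^T\overline\phi_{jk}(s;\bgamma^*)\E(\gamma(s))\,d\Lambda_F(s)=Z(T;\bgamma^*)_{j,k}$.

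For the noise part $R_{jk}(T;\bgamma^*):=-\int_0^T\phi_{jk}(s;\bgamma^*)\,dM^{\overline N}(s)$, its predictable variation is $\frac{1}{M}\int_0^T\phi_{jk}(s;\bgamma^*)^2\overline\gamma(s)\,d\Lambda_F(s)$, so $|\phi_{jk}|\le2/c_K$ together with $\E\int_0^\infty\overline\gamma(s)\,d\Lambda_F(s)=\int_0^\infty\E(\gamma(s))\,d\Lambda_F(s)<\infty$ (Lemma~\ref{lem:clt}) gives $\E\,R_{jk}(T;\bgamma^*)^2\le C/M$ uniformly in $\bgamma^*\in K$, hence $R_{jk}(T;\bgamma^*)\to0$ in probability for each fixed $\bgamma^*$. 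The uniformity over $K$ is the main obstacle; I would obtain it by an $\varepsilon$-net argument, using that $R_{jk}(T;\bgamma^*)-R_{jk}(T;\widetilde{\bgamma}^*)=-\int_0^T(\phi_{jk}(s;\bgamma^*)-\phi_{jk}(s;\widetilde{\bgamma}^*))\,dM^{\overline N}(s)$ is bounded in absolute value by $C_K|\bgamma^*-\widetilde{\bgamma}^*|\,V_M$, where $V_M:=\overline N(\infty)+\int_0^\infty\overline\gamma\,d\Lambda_F=r+\int_0^\infty\overline\gamma\,d\Lambda_F$ satisfies $\sup_M\E(V_M)<\infty$ and is thus uniformly tight. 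Covering $K$ by finitely many balls of radius $\varepsilon$, $\sup_{\bgamma^*\in K}|R_{jk}(T;\bgamma^*)|$ is bounded by the maximum of $|R_{jk}(T;\cdot)|$ over the (finitely many) centres plus $C_K\varepsilon V_M$; letting $M\to\infty$ and then $\varepsilon\to0$ yields $\sup_{\bgamma^*\in K}|R_{jk}(T;\bgamma^*)|\to0$ in probability. Together with the drift estimate this proves the convergence claim.

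For regularity I would first note that multiplying the gradient on the right by $\mathbf D:=\mathbf{diag}(\gamma^*_2,\ldots,\gamma^*_r)$ symmetrises it, since a direct computation gives $\gamma^*_k\,\phi_{jk}(s;\bgamma^*)=\mathds{1}\{j=k\}\,Y^*_j(s)-Y^*_j(s)Y^*_k(s)$ with $Y^*_l(s)=\gamma^*_l\overline\delta_l(s)/\overline\gamma^*(s)$; passing to the limit obtained above,
\[
\mathbf Z(T;\bgamma^*)\,\mathbf D=-\int_0^T\Bigl[\mathbf{diag}\bigl(\overline Y^*_2(s),\ldots,\overline Y^*_r(s)\bigr)-\overline{\boldsymbol Y}^{*}(s)\,\overline{\boldsymbol Y}^{*}(s)^{\top}\Bigr]\E(\gamma(s))\,d\Lambda_F(s),
\]
with $\overline Y^*_l(s)=\gamma^*_l\,\E(\delta_l(s))/\E(\gamma^*(s))$ and $\overline{\boldsymbol Y}^{*}=(\overline Y^*_2,\ldots,\overline Y^*_r)^{\top}$. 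For $T>0$ and any $\boldsymbol a=(a_2,\ldots,a_r)$, setting $a_1:=0$ and using that $(\overline Y^*_1(s),\ldots,\overline Y^*_r(s))$ is a probability vector with all entries strictly positive for every $s\in(0,T]$ (because $\E(\delta_l(s))=\PP(N(s)=l-1)>0$ for all $l$ and $s>0$), one gets $\boldsymbol a^{\top}(-\mathbf Z(T;\bgamma^*)\mathbf D)\boldsymbol a=\int_0^T\Var_{\overline Y^*(s)}(a_\bullet)\,\E(\gamma(s))\,d\Lambda_F(s)$, which is strictly positive unless $a_\bullet$ is constant on $\{1,\ldots,r\}$, i.e.\ unless $\boldsymbol a=0$ (noting $\E(\gamma(s))\,d\Lambda_F(s)$ is a non-trivial positive measure on $(0,T]$). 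Hence $-\mathbf Z(T;\bgamma^*)\mathbf D$ is positive definite and $\mathbf Z(T;\bgamma^*)$ is regular; for $T=0$ the matrix is $0$, so the regularity claim is understood for $T>0$. The central difficulty of the whole argument is the uniformity over $\bgamma^*\in K$ in the noise part, compounded by the degeneracy of the weights $\overline\delta_l/\overline\gamma^*$ near $s=0$ and $s=\infty$; both are handled by the Lipschitz-in-$\bgamma^*$ bound on $\phi_{jk}$ and the integrability $\int_0^\infty\E(\gamma(s))\,d\Lambda_F(s)<\infty$ borrowed from Lemma~\ref{lem:clt}.
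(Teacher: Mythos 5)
Your proof is correct and follows essentially the same route as the paper's: explicit differentiation of the estimating function, a compensator/martingale decomposition of $\int\cdot\,d\overline N$, the uniform Glivenko--Cantelli theorem for the $\overline\delta_j$, the tail bound from the proof of Lemma~\ref{lem:clt} for $T=\infty$, and the Jensen-type variance argument after diagonal rescaling for regularity. The differences (three-piece splitting of the drift integral, $\varepsilon$-net for the noise term versus the paper's global Lipschitz reduction to pointwise convergence) are purely presentational, and your remark that $\mathbf{Z}(0;\bgamma^*)=0$ forces the regularity claim to be read as holding for $T>0$ is a valid minor caveat to the statement as written.
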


\begin{proof}[\underline{Proof of Lemma \ref{lem:uniformgradient}}]
     Setting $\overline{Z}(T;\bgamma^*)_{j,k}=\partial/\partial\gamma_k^*\,U(T;\bgamma^*)_j$ for $j,k=2,\dots,r$, we find 
    \begin{align*}
        \overline{Z}(T;\bgamma^*)_{j,k}\,=\, - \,\frac{\partial}{\partial \gamma^*_k}\int_0^T \frac{\gamma^*_j \overline{\delta}_j(s)}{\overline{\gamma}^*(s)}\, d\overline{N}(s) \,=\, \int_0^T \overline{\zeta}_{j,k}(s) \,d\overline{N}(s) 
    \end{align*}
    with
    \begin{equation*}
    \overline{\zeta}_{j,k}(s)\,=\, -\,\frac{\overline{\delta}_j(s)}{\overline{\gamma}^*(s)} \mathds{1}\{j=k\}  + \frac{\gamma^*_j \overline{\delta}_j(s)\overline{\delta}_k(s)}{ \overline{\gamma}^*(s)^2}\,.
    \end{equation*}
    Since the function $\bgamma^*\mapsto\overline{\mathbf{Z}}(T;\bgamma^*)$ is Lipschitz-continuous on $K$ uniformly in $M$, it suffices to establish the pointwise convergence $\overline{\mathbf{Z}}(T;\bgamma^*)\rightarrow \mathbf{Z}(T,\bgamma^*)$ as $M\rightarrow\infty$ for fixed $\bgamma^*$. 
    For this, let
    \begin{align*}
        \widetilde{Z}(T;\bgamma^*)_{j,k}
        \,=\, \int_0^T  \overline{\zeta}_{j,k}(s) \,\overline{\gamma}(s)\,d\Lambda_F(s)
    \end{align*}
    for $j,k=2,\dots,r$, which depends on the true parameter $\bgamma$ via $\overline{\gamma}(s)$. By the arguments in the proof of Lemma \ref{lem:clt}, we obtain $\E \|\langle \overline{\mathbf{Z}}(T;\bgamma^*) - \widetilde{\mathbf{Z}}(T;\bgamma^*) \rangle\|  \leq C/\sqrt{M}$ for some $C>0$, and thus $\E \|\overline{\mathbf{Z}}(T;\bgamma^*) - \widetilde{\mathbf{Z}}(T;\bgamma^*)\|^2 \to 0$ as $M\rightarrow\infty$.
 
    The uniform Glivenko-Cantelli theorem yields that for $j=1,\dots,r$
    \begin{align*}
        \frac{1}{M} \sum_{i=1}^M \mathds{1}\{N_i(s)\leq j-1 \} = \frac{1}{M} \sum_{i=1}^M \mathds{1}\{X_i^{(j-1)}\geq s \}\,\longrightarrow\, \PP ( X_i^{(j-1)}\geq s ) = \sum_{k=1}^j \E  (\delta_k(s))\,,
    \end{align*}
    in probability uniformly in $s$ and, hence, $\sup_{s\in [0,\infty)}|\overline{\delta}_j(s)-\E (\delta_j(s))|\to 0$ in probability as $M\rightarrow\infty$.
    This implies $\overline{\gamma}^*(s)\to \E (\gamma^*(s))$ and $\overline{\gamma}(s)\to \E (\gamma(s))$ uniformly in probability for $M\rightarrow\infty$.  
    Moreover, since $\overline{\zeta}_{j,k}(s)\leq \max_{1\leq i\leq r} (1/\gamma^*_i)$, the dominated convergence theorem yields
    $\widetilde{\mathbf{Z}}(T;\bgamma^*)\to \mathbf{Z}(T;\bgamma^*)$ as $M\rightarrow\infty$ for any $T\in[0,\infty)$.
    To draw the same conclusion for $T=\infty$, note that for any $j,k=2,\ldots, r$,
    \begin{align*}
        \E  \left| \widetilde{Z}(\infty;\bgamma^*)_{j,k}-\widetilde{Z}(T;\bgamma^*)_{j,k}\right| 
        &\leq  \left(\max_{1\leq i\leq r} 1/\gamma^*_i\right) \int_T^\infty \E (\gamma(s))\, d\Lambda_F(s)\,\rightarrow\,0\qquad\text{for $T\rightarrow\infty$}
    \end{align*}
    by using formula \eqref{eqn:tailvar}.
    Thus, we have shown that $\overline{\mathbf{Z}}(T;\bgamma^*)\to \mathbf{Z}(T;\bgamma^*)$ in probability as $M\rightarrow\infty$ for all $T\in[0,\infty]$ and all $\bgamma^*\in (0,\infty)^{r-1}$, which extends to uniform convergence in $\bgamma^*\in K$ via Lipschitz continuity.

    What is left to show is that $\mathbf{Z}(T;\bgamma^*)$ is a regular matrix. For this, let
\begin{equation*}
 r_j(s) \,=\, \frac{\gamma^*_j \E (\delta_j(s))}{\E (\gamma^*(s))}\,,\qquad j=1,\dots,r\,,
\end{equation*}
and note that $\sum_{j=1}^r r_j(s)=1$. As in the proof of Lemma \ref{la:concave}, it can be shown that the matrix $\mathbf{A}(s)\in\R^{(r-1)\times(r-1)}$ with entries
    \begin{equation*}
        A(s)_{j-1,k-1} 
        \,=\, -r_j(s) \mathds{1}\{j=k\} + r_j(s) r_k(s)\,,\qquad\,j,k=2,\dots,r \,,
           \end{equation*}
    is negative semidefinite for each $s$ such that $\E (\gamma^*(s))>0$.     Since $F$ has support $[0,\infty)$, it holds that $\E (\delta_j(s))>0$ and thus $r_j(s)>0$ for all $j=1,\ldots, r$.
    This implies that $\mathbf{A}(s)$ is strictly negative definite, and the same holds for the matrix
    \begin{equation*}
        \mathbf{Z}^\dagger(T;\bgamma^*)=\int_0^T A(s)\, \E (\gamma(s))\, d\Lambda_F(s),
    \end{equation*}
    which is thus regular. Therefore,
    \begin{equation*}
        \mathbf{Z}(T;\bgamma)=\mathbf{Z}^\dagger(T;\bgamma)\,\mathbf{diag}(1/\gamma_2^*,\dots,1/\gamma_r^*)
    \end{equation*}
    is also regular. 
\end{proof}
%--------------------
\begin{proof}[\underline{Proof of Theorem \ref{thm:jointclt}}]
    From Lemma \ref{lem:clt} and Lemma \ref{lem:uniformgradient}, the existence and consistency of $\widehat{\bgamma}_{T^*}$ follows from standard results on estimating equations; see, e.g., Appendix A in \cite{mies_estimation_2023}. 
    These results also yield the asymptotic normality of $\widehat{\bgamma}_{T^*}$ via the following technique: For some $\bzeta$ between $\widehat{\bgamma}_{T^*}$ and $\bgamma$, we have the Taylor expansion
    \begin{align*}
        \bU(T^*;\bgamma) 
        = \bU(T^*;\bgamma) - \bU(T^*;\widehat{\bgamma}_{T^*})
        = D\bU(T^*;\bzeta)\left(\bgamma-\widehat{\bgamma}_{T^*}\right),
    \end{align*}
    where $D$ denotes the Jacobian operator. 
    Consistency of $\widehat{\bgamma}_{T^*}$ implies that $\bzeta \rightarrow \bgamma$ in probability as $M\to \infty$.
    Moreover, by Lemma \ref{lem:uniformgradient}, we have that  $D\bU(T^*;\bgamma^*)\to \mathbf{Z}(T^*;\bgamma^*)$ locally uniformly in probability as $M\rightarrow\infty$, and the limit is regular.
    Thus, $D\bU(T^*;\bzeta)$ is invertible for sufficiently large values of $M$, and continuity arguments then yield that $D\bU(T^*;\bzeta)^{-1}\rightarrow \mathbf{Z}(T^*;\bgamma)^{-1}$ in probability and, thus,
    \begin{align}
     \sqrt{M}\left(\widehat{\bgamma}_{T^*} -\bgamma \right)
        \,=\, - D\bU(T^*;\bzeta)^{-1}\,\sqrt{M}\,\bU(T^*;\bgamma) 
       \; \wconv\; - \mathbf{Z}(T^*;\bgamma)^{-1} \boldsymbol{W}_2(T^*) \label{eqn:CLTgamma}
    \end{align}
    as $M\rightarrow\infty$.     Since $\mathbf{Z}(T^*;\bgamma)\, \mathbf{diag}(\gamma_2,\ldots,\gamma_r) = \mathbf{\Sigma}(T^*)$, we have 
    \begin{eqnarray*}
        &&\sqrt{M}\, \mathbf{diag}(\gamma_2,\ldots, \gamma_r)^{-1}\left(\widehat{\bgamma}_{T^*} -\bgamma \right)\\[1ex]
        \wconv \;&&-\mathbf{\Sigma}(T^*)^{-1} \boldsymbol{W}_2(T^*)=\overline{\boldsymbol{W}}_2(T^*) \;\sim\; \mathcal{N}_{r-1}\left(0, \mathbf{\Sigma}(T^*)^{-1}\right)\,.
    \end{eqnarray*}

    Now, we may follow the same steps as in the proof of Theorem 2 in \cite{kvam2005estimating} to find that
    \begin{align*}
        \sqrt{M}\left[\widehat{\Lambda}_F(t, \widehat{\bgamma}_{T^*}) - \Lambda_F(t)\right]
        &= \sqrt{M}\left[\widehat{\Lambda}_F(t, \bgamma) - \widehat{\Lambda}_F(t)\right] 
        \;-\;  \sqrt{M}\,\boldsymbol{\Psi}(t)  \, (\widehat{\bgamma}_{T^*}-\bgamma)  
        \;+\; \xi_m(t)\,,
    \end{align*}
    where $\sup_{t\in[0,T]}|\xi_m(t)|\to 0$ in probability as $M\to \infty$.
\end{proof}
%--------------------

\begin{proof}[\underline{Proof of Theorem \ref{Theorem:ConfBand2}}]
\noindent
Let $R>0$, $p=F(R)\in(0,1)$, and $G$ denote the standard uniform cdf. 
For $\widehat G=1-\exp\{-\widehat\Lambda_G\}$, Theorem~\ref{thm:jointclt} along with the functional delta method yields that 
\begin{equation*}
    \sqrt{M}\left(\widehat{G}(u;\widehat{\bgamma})-u\right)\quad\wconv\quad
    (1-u)\,\widetilde{W}(u;\bgamma)
    \end{equation*}
    in the space $D[0,p]$ of cadlag functions on $[0,p]$, where
    \begin{equation}
    \widetilde{W}(u;\bgamma)\,=\,
     W_*(v_{\bgamma}(u)) + \boldsymbol{\Psi}_G(u;\bgamma) \mathbf{diag}(\gamma_2,\ldots, \gamma_r) \mathbf{\Sigma}_G(\bgamma)^{-1/2} \overline{\mathbf{W}},
\end{equation}
for $\overline{\mathbf{W}} \sim \mathcal{N}_{r-1}(0, \mathbf{I}_{(r-1)\times (r-1)})$ following a standard multivariate normal distribution. Here, since $\tau_G(u;\bgamma)=v_{\bgamma}(u)$ by formula (\ref{eq:vgamma}), we have used that $W_1(u;\bgamma)=W_*(v_{\bgamma}(u))$ in distribution for a standard Brownian motion $W_*$.
In virtue of formula (\ref{eqn:shift}) and the continuous mapping theorem (cf. the proof of Theorem \ref{Theorem: ConfBand}), it follows that
\begin{eqnarray*}
	\tilde{T}_{\bgamma,R}^{(M)}\,&=&\,\sup_{t\in (0,R]} \frac{\sqrt{M}}{g(v_{\bgamma}(\widehat{F}(t;\widehat{\bgamma})))} \,\frac{| \widehat{F}(t;\widehat{\bgamma})-F(t)|}{1-\widehat{F}(t;\widehat{\bgamma})}	\\[1ex]
 \,&=&\, \sup_{u\in(0,p]} \frac{\sqrt{M}}{g(v_{\bgamma}(\widehat{G}(u; \widehat{\bgamma})))} \,\frac{|\widehat{G}(u; \widehat{\bgamma})-u|}{1-\widehat{G}(u; \widehat{\bgamma})}\\[1ex]
	    \,&\Rightarrow&\, \sup_{u\in (0,p]} \frac{|\widetilde{W}(u;\bgamma)|}{g(v_{\bgamma}(u))}
    \;=:\;\tilde{T}_{\bgamma,p}.
    \end{eqnarray*}
     Note that the assumptions on $g$ and the law of the iterated logarithm, i.e.,\\
     $\limsup_{t\to\infty} |W_*(t)|/\sqrt{t \log\log t} = \sqrt{2} $, imply that
    \begin{equation*}
    \tilde{T}_{\bgamma,p}\,\overset{p \to 1}{\longrightarrow}\, \tilde{T}_{\bgamma}\,=\,\sup_{u\in (0,1)} \frac{|\widetilde{W}(u;\bgamma)|}{g(v_{\bgamma}(u))}\,.
    \end{equation*}
Next, we show that $\bgamma\mapsto \tilde{T}_{\bgamma}$ is suitably continuous.
To this end, first note that the mappings $\bgamma\mapsto v_{\bgamma}(\cdot)$ and $\bgamma\mapsto \boldsymbol{\Psi}_G(\cdot;\bgamma)$ are continuous w.r.t.\ the supremum norm on $[0, p]$ and that $\bgamma\mapsto \mathbf{\Sigma}_G(\bgamma)^{-1/2}$ is continuous, such that, for any $p\in(0,1)$ and any fixed $\bgamma$,
\begin{align*}
     \sup_{\|\bgamma'-\bgamma\|\leq \delta} \left|\tilde{T}_{\bgamma,p}- \tilde{T}_{\bgamma',p}\right| \longrightarrow 0 \quad \text{in probability as } \delta \to 0.
\end{align*}
Moreover, by using that all components of $\boldsymbol{\Psi}_G(u,\bgamma)$ are bounded from above by \\
$1/\min_{1\leq j\leq r}\gamma_j$ and thus $\|\boldsymbol{\Psi}_G(u;\bgamma)\|\leq \sqrt{r-1}/\min_{1\leq j\leq r}\gamma_j$, we find
\begin{align*}
    \eta(p;\bgamma)
    =\,&\sup_{u\in(p,1)} \frac{|\widetilde{W}(u;\bgamma)|}{g(v_{\bgamma}(u))}\\
    \leq\,&\sup_{u\in(p,1)} \frac{|W_*(v_{\bgamma}(u))|}{g(v_{\bgamma}(u))} +  \frac{\sqrt{r-1}\max_{2\leq j \leq r} \gamma_j}{\bar{g}(v_{\bgamma}(p))\min_{1\leq j\leq r} \gamma_j} \|\mathbf{\Sigma}_G(\bgamma)^{-1/2}\|_2 \,\|\overline{W}\|\\[1ex]
    =\,&\sup_{t>v_{\bgamma}(p)} \frac{|W_*(t)|}{g(t)} +  \frac{C(\bgamma)}{\bar{g}(v_{\bgamma}(p))} \|\overline{W}\|
    \end{align*}
    with
    \begin{equation*}
    C(\bgamma) =\frac{\sqrt{r-1}\max_{2\leq j \leq r} \gamma_j}{\min_{1\leq j\leq r} \gamma_j} \|\mathbf{\Sigma}_G(\bgamma)^{-1/2}\|_2,
\end{equation*}
where $\bar{g}(p) = \inf_{y\geq p} g(y)$, $\|\mathbf{\Sigma}_G(\bgamma)^{-1/2}\|_2$ denotes the spectral norm of $\mathbf{\Sigma}_G^{-1/2}$, and $\bgamma\mapsto C(\bgamma)$ is continuous on $(0,\infty)^{r-1}$. For any compact set $K\subset (0,\infty)^{r-1}$, we have
\begin{align}
    \eta(p;K) 
    = \sup_{\bgamma \in K} \eta(p;\bgamma) 
    &\leq \sup_{t\geq v_K(p)} \frac{|W_*(t)|}{g(t)} +  \frac{C^*(K)\|\overline{W}\|}{ \bar{g}(v_{K}(p))}, \label{eq:eta}
\end{align}
for $C^*(K)=\sup_{\bgamma\in K} C(\bgamma)<\infty$, and $v_K(p)=\inf_{\bgamma \in K} v_{\bgamma}(p)$. 
 Moreover, by setting $\overline{c}(\bgamma)=\max_{1\leq i\leq r}c_i(\bgamma)>0$ and $\tilde{\gamma}=\min_{1\leq i\leq r}\gamma_i$, we obtain from formula (\ref{eq:tau}) and Lemma \ref{la:EN(t)} along with Remark \ref{rem:egamma} that 
\begin{eqnarray*}
   v_{\bgamma}(p) \,&=&\,
   \int_0^p\left[(1-s)\, \E (\gamma(s)) \right]^{-1}\,ds\,=\,\int_0^p\left[\sum_{i=1}^r c_i(\bgamma)\,(1-s)^{\gamma_i+1} \right]^{-1}\,ds\,\\[1ex]
   &\geq&\, \int_0^p\left[r\,\overline{c}(\bgamma)\,(1-s)^{\tilde{\gamma}+1}\right]^{-1}\,ds\,=\,\frac{(1-p)^{-\tilde{\gamma}}-1}{r\overline{c}(\bgamma)\tilde{\gamma}}\,\longrightarrow\,\infty\quad\text{for $p\rightarrow1$}\,.\
\end{eqnarray*}
In view of formula \eqref{eq:eta}, this yields that $\eta(p;K)\to 0$ for $p\to 1$, and thus
\begin{align*}
    \sup_{\bgamma\in K} \left|\tilde{T}_{\bgamma,p}-\tilde{T}_{\bgamma}\right| \,\leq\, \eta(p;K) \,\longrightarrow 0 \quad \text{in probability for $p\to 1$.}
\end{align*}
For any compact set $K$ with inner point $\bgamma$ and for sufficiently small $\delta>0$, we have
\begin{align*}
    \sup_{\|\bgamma'-\bgamma\|\leq \delta} \left| \tilde{T}_{\bgamma'} - \tilde{T}_{\bgamma} \right| \;\leq\;\sup_{\|\bgamma'-\bgamma\|\leq \delta} \left| \tilde{T}_{\bgamma',p} - \tilde{T}_{\bgamma,p} \right| + 2 \eta(p;K),
\end{align*}
and, hence, for any $\varepsilon>0$ and any $p\in(0,1)$,
\begin{align*}
    &\limsup_{\delta\to 0} \,\PP \left( \sup_{\|\bgamma'-\bgamma\|\leq \delta} \left| \tilde{T}_{\bgamma'} - \tilde{T}_{\bgamma} \right| > \varepsilon\right) \\
    \leq\,&\limsup_{\delta\to 0} \,\PP \left( \sup_{\|\bgamma'-\bgamma\|\leq \delta} \left| \tilde{T}_{\bgamma',p} - \tilde{T}_{\bgamma,p} \right| > \varepsilon/2\right)
    + \PP  \left( 2 \eta(p;K) > \varepsilon/2\right) \\
    =\,&\PP  \left(  \eta(p;K) > \varepsilon/4\right).
\end{align*}
Sending $p\to 1$ then gives
\begin{align}
    \sup_{\|\bgamma'-\bgamma\|\leq \delta} \left| \tilde{T}_{\bgamma'} - \tilde{T}_{\bgamma} \right| \,\longrightarrow\, 0 \quad \text{in probability as } \delta\to 0.\label{eq:continuity-T}
\end{align}
This continuity result now enables us to show that $\overline{B}^{(M)}$ has asymptotic level $q$ for $F$. For every $\varepsilon_1,\varepsilon_2>0$, we find
\begin{align*}
    &\lim_{M\to\infty} \PP \left( \{(t, F(t)):\, t\in (0, R]\}\,\subset\,\overline{B}^{(M)}  \right)\\
    =\,&\lim_{M\to\infty} \PP \left(\tilde{T}_{\bgamma,R}^{(M)}\leq e_{\widehat{\bgamma}}(q) \right) \\[1ex]
    \geq\,&\lim_{M\to\infty} \left[\PP \left(\tilde{T}_{\bgamma,R}^{(M)}\leq e_{\bgamma}(q-\varepsilon_1)-\varepsilon_2 \right) \,-\, \PP \left(  e_{\bgamma}(q-\varepsilon_1)-\varepsilon_2 \geq e_{\widehat{\bgamma}}(q) \right) \right]. 
\end{align*}
To show that the second probability vanishes as $M\to\infty$, we observe
\begin{align}
    &\quad e_{\bgamma}(q-\varepsilon_1)-\varepsilon_2 \,\geq\, e_{\bgamma'}(q) \nonumber \\
    \Longrightarrow& \quad \PP \left( \tilde{T}_{\bgamma'} \leq e_{\bgamma}(q-\varepsilon_1)-\varepsilon_2 \right) \geq q \nonumber \\
    \Longrightarrow & \quad \PP \left( \tilde{T}_{\bgamma} \leq e_{\bgamma}(q-\varepsilon_1)-\varepsilon_2/2 \right) + \PP \left( |\tilde{T}_{\bgamma}-\tilde{T}_{\bgamma'}| \geq \varepsilon_2/2 \right) \;\geq\; q \nonumber \\
    \Longrightarrow & \quad q-\varepsilon_1 + \PP \left( |\tilde{T}_{\bgamma}-\tilde{T}_{\bgamma'}| \geq \varepsilon_2/2 \right) \;\geq\; q. \label{eq:contradiction}
\end{align}
By virtue of  formula \eqref{eq:continuity-T}, there exists some $\delta(\varepsilon_1,\varepsilon_2)>0$ such that inequality \eqref{eq:contradiction} does not hold for $\|\bgamma-\bgamma'\|\leq \delta(\varepsilon_1,\varepsilon_2)$.
Hence,
\begin{align*}
   &\lim_{M\to\infty} \PP \left( \{(t, F(t)):\, t\in (0, R]\}\,\subset\,\overline{B}^{(M)}  \right)\\
    \geq\,&\lim_{M\to\infty} \left[\PP \left(\tilde{T}_{\bgamma,R}^{(M)}\leq e_{\bgamma}(q-\varepsilon_1)-\varepsilon_2 \right) \,-\, \PP \left(  \|\widehat{\bgamma}-\bgamma\|>\delta(\varepsilon_1,\varepsilon_2)\right) \right]\\
    \geq\,&\PP \left(\tilde{T}_{\bgamma}\leq e_{\bgamma}(q-\varepsilon_1) - \varepsilon_2 \right)\,.
\end{align*}
The distribution of $\tilde{T}_{\bgamma}$ is known to be absolutely continuous; see \cite{lifshits_absolute_1984}. 
Thus, since $\varepsilon_1,\varepsilon_2$ are arbitrary, we may let $\varepsilon_2\to 0$ and find that
\begin{eqnarray*}
    \lim_{M\to\infty} \PP \left( \{(t, F(t)):\, t\in (0, R]\}\,\subset\,\overline{B}^{(M)}  \right) 
    \,&\geq&\,\PP \left(\tilde{T}_{\bgamma}\leq e_{\bgamma}(q-\varepsilon_1)  \right) 
    \,=\, q-\varepsilon_1.
\end{eqnarray*}
Sending $\varepsilon_1\to 0$ completes the proof.
\end{proof}
%--------------------

%\section*{References}

\bibliography{Bib_Exp_Fam}

\begin{thebibliography}{}

\bibitem[Andersen et~al., 2012]{andersen2012statistical}
Andersen, P.~K., Borgan, O., Gill, R.~D., and Keiding, N. (2012).
\newblock {\em Statistical Models Based on Counting Processes}.
\newblock Springer Science \& Business Media, New York.

\bibitem[Arabzadeh~Jamali and Pham, 2022]{arabzadeh_jamali_opportunistic_2022}
Arabzadeh~Jamali, M. and Pham, H. (2022).
\newblock Opportunistic maintenance model for load sharing k-out-of-n systems
  with perfect {PM} and minimal repairs.
\newblock {\em Quality Engineering}, 34(2):205--214.

\bibitem[Bae and Kvam, 2004]{bae_nonlinear_2004}
Bae, S.~J. and Kvam, P.~H. (2004).
\newblock A {Nonlinear} {Random}-{Coefficients} {Model} for {Degradation}
  {Testing}.
\newblock {\em Technometrics}, 46(4):460--469.

\bibitem[Balakrishnan et~al., 2012]{BalKamKat2012}
Balakrishnan, N., Kamps, U., and Kateri, M. (2012).
\newblock A sequential order statistics approach to step-stress testing.
\newblock {\em Annals of the Institute of Statistical Mathematics},
  64(2):303--318.

\bibitem[Bedbur et~al., 2019a]{BedJohKam2019}
Bedbur, S., Johnen, M., and Kamps, U. (2019a).
\newblock Inference from multiple samples of {W}eibull sequential order
  statistics.
\newblock {\em Journal of Multivariate Analysis}, 169:381--399.

\bibitem[Bedbur et~al., 2019b]{bedbur_inference_2019}
Bedbur, S., Johnen, M., and Kamps, U. (2019b).
\newblock Inference from multiple samples of {Weibull} sequential order
  statistics.
\newblock {\em Journal of Multivariate Analysis}, 169:381--399.

\bibitem[Bedbur et~al., 2015]{BedKamKat2015}
Bedbur, S., Kamps, U., and Kateri, M. (2015).
\newblock Meta-analysis of general step-stress experiments under repeated
  type-{II} censoring.
\newblock {\em Applied Mathematical Modelling}, 39(8):2261--2275.

\bibitem[Bedbur and Mies, 2022]{BedMie2022}
Bedbur, S. and Mies, F. (2022).
\newblock Confidence bands for exponential distribution functions under
  progressive type-{II} censoring.
\newblock {\em Journal of Statistical Computation and Simulation},
  92(1):60--80.

\bibitem[Beutner, 2008]{beutner2008nonparametric}
Beutner, E. (2008).
\newblock Nonparametric inference for sequential $k$-out-of-$n$ systems.
\newblock {\em Annals of the Institute of Statistical Mathematics},
  60(3):605--626.

\bibitem[Beutner, 2010a]{Beu2010b}
Beutner, E. (2010a).
\newblock Nonparametric comparison of several sequential $k$-out-of-$n$
  systems.
\newblock In Skiadas, C., editor, {\em Advances in Data Analysis, Statistics
  for Industry and Technology}, pages 291--304. Birkh\"auser, Boston.

\bibitem[Beutner, 2010b]{beutner2010nonparametric}
Beutner, E. (2010b).
\newblock Nonparametric model checking for $k$-out-of-$n$ systems.
\newblock {\em Journal of Statistical Planning and Inference}, 140(3):626--639.

\bibitem[Cramer, 2016]{Cra2016}
Cramer, E. (2016).
\newblock Sequential order statistics.
\newblock In {\em Wiley StatsRef: Statistics Reference Online}. Wiley, New
  York.

\bibitem[Cramer and Kamps, 1996]{CraKam1996}
Cramer, E. and Kamps, U. (1996).
\newblock Sequential order statistics and $k$-out-of-$n$ systems with
  sequentially adjusted failure rates.
\newblock {\em Annals of the Institute of Statistical Mathematics},
  48(3):535--549.

\bibitem[Cramer and Kamps, 2001]{CraKam2001b}
Cramer, E. and Kamps, U. (2001).
\newblock Sequential $k$-out-of-$n$ systems.
\newblock In Balakrishnan, N. and Rao, C.~R., editors, {\em Advances in
  Reliability, Handbook of Statistics}, volume~20, pages 301--372. Elsevier,
  Amsterdam.

\bibitem[Jia et~al., 2023]{jia_reliability_2023}
Jia, H., Xing, L., Ding, Y., Li, Y., and Liu, D. (2023).
\newblock Reliability {Analysis} of {Dynamic} {Load}-{Sharing} {Systems} {With}
  {Constrained} and {Changing} {Component} {Performances}.
\newblock {\em IEEE Transactions on Systems, Man, and Cybernetics: Systems},
  53(9):5897--5909.

\bibitem[Kamps, 1995a]{Kam1995b}
Kamps, U. (1995a).
\newblock {\em A Concept of Generalized Order Statistics}.
\newblock Teubner, Stuttgart.

\bibitem[Kamps, 1995b]{Kam1995a}
Kamps, U. (1995b).
\newblock A concept of generalized order statistics.
\newblock {\em Journal of Statistical Planning and Inference}, 48(1):1--23.

\bibitem[Kamps, 2016]{Kam2016}
Kamps, U. (2016).
\newblock Generalized order statistics.
\newblock In {\em Wiley StatsRef: Statistics Reference Online}. Wiley,
  Chichester.

\bibitem[Kong and Ye, 2016]{kong_cumulative-exposure-based_2016}
Kong, Y. and Ye, Z.-S. (2016).
\newblock A {Cumulative}-{Exposure}-{Based} {Algorithm} for {Failure} {Data}
  {From} a {Load}-{Sharing} {System}.
\newblock {\em IEEE Transactions on Reliability}, 65(2):1001--1013.

\bibitem[Kvam and Pe\~na, 2005]{kvam2005estimating}
Kvam, P.~H. and Pe\~na, E.~A. (2005).
\newblock Estimating load-sharing properties in a dynamic reliability system.
\newblock {\em Journal of the American Statistical Association},
  100(469):262--272.

\bibitem[Lifshits, 1984]{lifshits_absolute_1984}
Lifshits, M.~A. (1984).
\newblock Absolute continuity of functionals of “supremum” type for
  {Gaussian} processes.
\newblock {\em Journal of Soviet Mathematics}, 27(5):3103--3112.

\bibitem[Mies and Bedbur, 2020]{MieBed2019}
Mies, F. and Bedbur, S. (2020).
\newblock Exact semiparametric inference and model selection for load-sharing
  systems.
\newblock {\em IEEE - Transactions on Reliability,}, 69(3):863--872.

\bibitem[Mies and Podolskij, 2023]{mies_estimation_2023}
Mies, F. and Podolskij, M. (2023).
\newblock Estimation of mixed fractional stable processes using high-frequency
  data.
\newblock {\em Annals of Statistics}, 51(5):1946--1964.

\bibitem[Pesch et~al., 2024]{PesCraCriPol2024}
Pesch, T., Cramer, E., Cripps, E., and Polpo, A. (2024).
\newblock Modeling failure risks in load-sharing systems with heterogeneous
  components.
\newblock {\em IEEE - Transactions on Reliability}.

\bibitem[Pesch et~al., 2023]{PesPolCriCra2023}
Pesch, T., Polpo, A., Cripps, E., and Cramer, E. (2023).
\newblock Reliability inference with extended sequential order statistics.
\newblock {\em Applied Stochastic Models in Business and Industry},
  39(4):520--535.

\bibitem[{ReliaSoft R\&D staff}, 2002]{Rel2002}
{ReliaSoft R\&D staff} (2002).
\newblock Using {QALT} models to analyze system configurations with load
  sharing.
\newblock {\em ReliaSoft's Reliability Edge Newsletter}, 3(3):1--4.

\bibitem[Smith and Daniels, 2014]{SmiDan2014}
Smith, R.~L. and Daniels, H.~E. (2014).
\newblock Load-sharing systems.
\newblock In {\em Wiley StatsRef: Statistics Reference Online}. Wiley.

\bibitem[Sutar and Naik-Nimbalkar, 2014]{sutar_accelerated_2014}
Sutar, S.~S. and Naik-Nimbalkar, U.~V. (2014).
\newblock Accelerated {Failure} {Time} {Models} {For} {Load} {Sharing}
  {Systems}.
\newblock {\em IEEE Transactions on Reliability}, 63(3):706--714.

\bibitem[Yang et~al., 2015]{yang_reliability_2015}
Yang, C., Zeng, S., and Guo, J. (2015).
\newblock Reliability {Analysis} of {Load}-{Sharing} $k$-out-of-$n$ {System}
  {Considering} {Component} {Degradation}.
\newblock {\em Mathematical Problems in Engineering}, 2015:1--10.

\bibitem[Zhang et~al., 2020]{zhang_reliability_2020}
Zhang, J., Zhao, Y., and Ma, X. (2020).
\newblock Reliability modeling methods for load-sharing $k$-out-of-$n$ system
  subject to discrete external load.
\newblock {\em Reliability Engineering \& System Safety}, 193:106603.

\end{thebibliography}
\bibliographystyle{apalike}

\end{document}